\newcommand\pdif[2]{\frac{\partial #1}{\partial #2}}
\newcommand\plim[1]{\underset{#1}{\text{plim}}}
\DeclareMathOperator*{\argmin}{\arg\min}
\DeclareMathOperator*{\argmax}{\arg\max}
\newcommand{\Expect}{\mathbb{E}}
\newlength{\subfigwidth}
\newlength{\subfigcolsep}
\newcommand{\nn}{\nonumber\\}
\newtheorem{theorem}{Theorem}
\newtheorem{remark}{Remark}
\newtheorem{lemma}{Lemma}
\newtheorem{assumption}{Assumption}
\newtheorem{definition}{Definition}
\newcommand{\GMM}{\mathrm{GMM}}
\newcommand{\GMMMPEC}{\mathrm{GMM-MPEC}}
\newcommand{\NS}{\mathcal{N}_S}
\newcommand{\NmS}{\mathcal{N}_{\setminus S}}
\newcommand{\NmSj}[1]{\mathcal{N}_{\setminus S_{#1}}}
\newcommand{\statNS}{R_{\mathrm{CV}}}
\newcommand{\NSSizeinv}{|\mathcal{N}_{S}|^{-1}}
\newcommand{\NSSizehalf}{|\mathcal{N}_{S}|^{1/2}}
\newcommand{\NSSizehalfinv}{|\mathcal{N}_{S}|^{-1/2}}
\newcommand{\NmSSizeinv}{|\mathcal{N}_{\setminus S}|^{-1}}
\newcommand{\NmSSizehalf}{|\mathcal{N}_{\setminus S}|^{1/2}}
\newcommand{\NmSSizehalfinv}{|\mathcal{N}_{\setminus S}|^{-1/2}}
\newcommand{\Qempi}[1]{Q_{\mathrm{valid}}^{(#1)}}
\newcommand{\Qtruei}[1]{Q_0^{(#1)}(\theta^{(#1)}_0)}
\newcommand{\QempvalidSi}[2]{Q_{S,\mathrm{valid}}^{(#1)}(#2)}
\newcommand{\sigmaemp}{\hat{\sigma}}
\newcommand{\hypnull}{\mathrm{H}_0}
\newcommand{\hypone}{\mathrm{H}_1^{(a)}}
\newcommand{\hyptwo}{\mathrm{H}_1^{(b)}}
\newcommand{\Model}[1]{\mathcal{M}_{#1}}
\newcommand{\Wtrue}[1]{W^{(#1)}}
\newcommand{\Wemp}[1]{W_S^{(#1)}}
\newcommand{\tautrue}[1]{\tau_0^{(#1)}}
\newcommand{\tauemp}[1]{\hat{\tau}_S^{(#1)}}
\newcommand{\dimf}[1]{q_{#1}}
\newcommand{\dimtheta}[1]{p_{#1}}
\newcommand{\Shalf}{S^{1/2}}
\newcommand{\Thalf}{T^{1/2}}
\newcommand{\Astari}[1]{A_*^{(#1)}}
\newcommand{\myvec}{\mathrm{vec}}
\newcommand{\vech}{\mathrm{vech}}
\newcommand{\Deltai}[1]{\Delta^{(#1)}}
\newcommand{\Rtrue}{R_*}
\newcommand{\Rtruei}[1]{R_*^{(#1)}}
\newcommand{\Rtruesingle}{{R_*^{\mathrm{single}}}}
\newcommand{\Rtruesinglei}[1]{R_*^{\mathrm{single},(#1)}}
\newcommand{\Vtrue}{V_*}
\newcommand{\Vtruesingle}{V_*^{\mathrm{single}}}
\newcommand{\combinationSetS}{\{S \subset \{1,2,\dots,r\}: |S| = r-k\}}
\newcommand{\Var}{\mathrm{Var}}
\newcommand{\Cov}{\mathrm{Cov}}
\newcommand{\rCk}{{}_r C _k}
\newcommand{\myC}[1]{c_{#1}}
\newcommand{\mNj}[1]{\mathcal{N}_{T_{j,r}}}
\newcommand{\Gtruei}[1]{G_0^{(#1)}}
\newcommand{\Gempi}[1]{G_S^{(#1)}}
\newcommand{\thetaempS}[1]{\hat{\theta}_{#1}}
\newcommand{\thetaempiS}[2]{\hat{\theta}_{#2}^{(#1)}}
\newcommand{\thetabariS}[2]{\bar{\theta}_{#2}^{(#1)}}
\newcommand{\FSi}[2]{F_{#2}^{(#1)}}
\newcommand{\CovS}{\Sigma}
\newcommand{\CovSemp}{\hat{\Sigma}}
\newcommand{\CovSij}[2]{\Sigma^{(#1)}_{#2}}
\newcommand{\CovSempij}[2]{\hat{\Sigma}^{(#1)}_{#2}}
\newcommand{\thetai}[1]{\theta^{(#1)}}
\newcommand{\thetatrue}{\theta_0}
\newcommand{\thetatruei}[1]{\theta_0^{(#1)}}
\newcommand{\Phii}[1]{\Phi^{(#1)}}
\newcommand{\mui}[1]{\mu^{(#1)}}
\newcommand{\inprob}{\overset{p}{\to}}
\title{Cross Validation Based Model Selection via Generalized Method of Moments}
\author[1]{Junpei Komiyama\thanks{jkomiyama@tkl.iis.u-tokyo.ac.jp}}
\author[2]{Hajime Shimao\thanks{hshimao@santafe.edu}}
\affil[1]{University of Tokyo}
\affil[2]{Santa Fe Institute}
\begin{document}
%\begin{CJK*}{UTF8}{zhsong}
\maketitle

\begin{abstract}
Structural estimation is an important methodology in empirical economics, and a large class of structural models are estimated through the generalized method of moments (GMM). 
Traditionally, selection of structural models has been performed based on model fit upon estimation, which take the entire observed samples.
In this paper, we propose a model selection procedure based on cross-validation (CV), which utilizes sample-splitting technique to avoid issues such as over-fitting. While CV is widely used in machine learning communities, we are the first to prove its consistency in model selection in GMM framework. Its empirical property is compared to existing methods by simulations of IV regressions and oligopoly market model. In addition, we propose the way to apply our method to Mathematical Programming of Equilibrium Constraint (MPEC) approach. Finally, we perform our method to online-retail sales data to compare dynamic market model to static model.%
%（メモ：スラッシュは可能なら避ける: demand/supplyは消した）
\end{abstract}

%100 words version
%We propose a model selection procedure based on cross-validation (CV), which utilizes sample-splitting technique to avoid issues such as over-fitting. While CV is widely used in machine learning communities, we are the first to prove its consistency in model selection in Generalized Method of Moment (GMM) framework and test empirical property in structural estimation such as oligopoly market model by Monte-Carlo simulation. In addition, we propose the way to apply our method to Mathematical Programming of Equilibrium Constraint (MPEC) approach. Finally, we perform our method to online-retail sales data to compare dynamic market model to static model.

\section{INTRODUCTION}\label{1-Introduction}
%(メモ counter-factual counterfactual 統一, choose/select 統一)
Structural estimation of economic models is one of the most widely used methodologies in empirical economics nowadays in variety of fields. Structural estimation enables researchers to interpret latent variable, as well as it allows researchers to perform a counterfactual simulation. Arguably, however, one of the largest shortcoming in the structural estimation procedure lies in the selection of a proper model. That is, the specification of estimation models is usually chosen by researchers 
and rarely empirically tested. The structural estimation itself does not directly address on it from the data, because the estimation is performed by assuming the model reflects the true data generating process (\cite{angrist2010}). On a paper it is a common practice for economists to verbally argue and defend their model specification in a descriptive way. %ここに「MLの人たちはそんなことしないでCVをやる」という文脈を入れたい。
However, since the validity of the counterfactual simulation crucially depends on the goodness of the model, verifying and choosing a proper model empirically is of particular importance. Especially, we often simplify a model for the ease of tractability: Such simplifications is preferred to be subject to some assessment.
%--------

When a structural model is estimated in economics, researchers often use generalized method of moments (GMM) as well as maximum likelihood. 
As to selecting a true model, \cite{smith1992} and \cite{riversvuong2002} offer a model selection procedure for GMM based on the difference of empirical moments. Their core idea is a simple use of the GMM minimand as a fitness of the model with the observed data: That is, to select the model of the smallest GMM minimand when it is estimated\footnote{The theory provided in \cite{riversvuong2002} applies to broader range of model selection criteria. However, it is often implemented as GMM minimand comparison. See \cite{bonnet2010inference} or \cite{berto2007vertical} for example.}. Although such a procedure is asymptotically consistent in choosing a true or "better" model, the performance of model selection with limited sample size is still uncertain. In some applications, economists have to make an inference from a relatively small number of observations. Given a limited size of the sample, their procedures may be subject to "over-fitting": excessively complicated models can fit tighter to the observations in hand with better "goodness-of-fit" criterion, and thus is selected as a better model even if the model is not very true.
%なんとなくover-fittingがさらっと説明されすぎている感はある（もう少し盛り上げる？）

To avoid over-fitting problem, some model selection criteria such as AIC-GMM or BIC-GMM "penalize" the number of parameters in a model (\cite{andrews1999}). However, the complexity of economic models is not simply measured by the number of parameters. Structural model may include non-parametric components in specification (e.g., \cite{gautier2013nonparametric}), where we cannot apply a penalization based on number of parameters. Additionally, estimation procedure sometimes involves nonparametric approximation only for certain models. For example, estimation of dynamic demand model in \cite{gowrisankaran2012dynamics} includes a nonparametric approximation of a value function, which may make their model more flexible than static demand model. To date, it is not well understood how these factors contribute to the over-fitting issue nor how to penalize its flexibility. %この辺に★の部分を融合するのがいいかも。

%\noindent\textbf{Cross-validation:}\\\noindent

In this paper, we offer a novel approach to this problem that helps researchers to identify the best model specification from the data. Our idea is to apply the cross-validation (CV) method, which is commonly used in other areas such as machine learning, in evaluating the predictive power of the model. 
The main idea behind cross-validation is to split the data into several portions so that test of a model fit is implemented on a different data from the one used for estimating parameters. As a result, the estimated moment suffers a smaller over-fitting than in-sample model selection.
%There are two main advantages in CV-based model selection. 
%うーん書いたけどいまいちだ
%ここらへんの文章は被り感あるので消した
%Given datapoints, it is natural to assume that a better DGP has a better fitness property with a set of true parameters. Although the parameters estimated from a 
%For this reason, CV widely used in information science fields such as machine learning and data mining to maximize the predictive power of the model into a unknown test dataset. 
%As CV chooses the model that is closest to true data generating process (DGP) for test split, and
%CVがなぜよりよいモデルを選べるのか。CVは本来は予測力を最大化するために考えられたもの。経済学者の目的はパラメータ推定と真のモデルの発見であって予測力ではない。しかし、CVは真に近いモデルを選択するためにも使うことができる。なんでかと言うとIntuitiveには、真のDGPに近いモデルの方がより予測力が高くなるであろうから。経済学者もモデル比較を行ってより説明力の高いモデルを選択していくべき。
%（この段落と前の段落の論理のつながりがいまいちな感。とくに、経済学者をdisる文章はもう前の段落より前に持っていってしまい、一度経済学をほめる→でもモデルの良さは検証されていないよね→CVの話（CVほめる）→CVはモデルの解釈の要らないMLの人が編み出したのでモデルの解釈の要る経済モデルには使えない→今回の提案で解釈の必要な経済モデルにも使えるようになるよ、という流れとかにする）

The largest advantage of sample splitting lies in its wide range of potential applications. On applying CV, one does not need to take the number of model parameters explicitly. As a result, it can select the true model among parametric, non-parametric and even semi-parametric models. 
%ノンパラの話と推定法の例を足した。しかし確かにCVで出来ないことは全部前に持って行ってしまった方がいいかも。「★の部分」このへん。
Moreover, CV can be applied not only in selecting models, but also selecting hyper-parameters of estimation and even estimation method itself. 
For example, estimation of dynamic model often includes approximation of value function on a discrete grid space, where the coarseness of the grid space has not been paid adequate attention though it heavily influences the performance of estimation.
As to the example of estimation method, random coefficient demand system can be estimated in various specifications, such as parametric or non-parametric, through various methodologies such as nested fixed point algorithm or constrained optimization approach (MPEC, \cite{su2012constrained}) and they may yield different results especially in limited sample size.

Economists typically evaluate estimation techniques and model specification by checking how the true parameters are recovered in a Monte-Carlo simulation. However, the best specification or methodology may vary across different data or the "true" data generating process that researchers do not observe. Thus, it is preferable to make an assessment in real-world data as well, and CV offers a practical approach to that end. Taking a wide range of applications into consideration, conducting CV in selecting models deserves a significant portion of attention.

%（以下の文章は：機械学習の人はCV良く使うけど、経済ではパラメータをidentificationしなければいけないから使えるかわからない→この論文では使えることを示す-足りないと思ったので小宮山追記）
%how accurate the prediction of... に直したけどよい？（元はestimationだった）
Although CV is commonly used in data science fields such as machine learning and data mining, its applicability to economic models is not obvious. In machine learning and data mining, the primal concern lies in how accurate the prediction of a regressor or classifier is. Meanwhile, in empirical economics, identifying the model reflecting the reality closer and estimating its model parameters are of primal concern, and machine learning literature does not provide a sufficient guarantee in identification of a model. This gap remains to be closed in applying data science methods in econometrics. Taking this into consideration, we propose an identifiable CV method for GMM.
%この論文では、まずGMMでCVでasymptoticにモデル選択ができることを証明する。

We first prove the consistency of cross-validation algorithm: That is, the algorithm identifies a correctly specified model from misspecified models with the probability  approaching to $1$ as the number of data increases.
When a model is estimated through likelihood maximization, \cite{yang2007consistency} proved the consistency of the cross-validation in non-parametric regression model selection.
%is consistent in a sense that the probability that a true model is selected approaches to one.
We prove an analogous result for GMM version of CV algorithm. 
%That is, we prove that cross validation algorithm can find the "true" GMM model underlying in the data generating process among several candidates given a sufficiently large sample size. %この文章は前とほぼ同じ

After giving the consistency, we test the performance of our cross-validation algorithm with a limited number of samples by Monte-Carlo simulation. Firstly, we examine a simple instrumental variable regression. We observe our algorithm selects a correctly specified mode over a misspecified model with high probability even when data size is limited. Importantly, our algorithm finds the correctly specified model even when the alternative model has higher flexibility (i.e., more parameters) than the true model, suggesting that it is robust to over-fitting. Furthermore, we compare the performance of our algorithm with Rivers-Vuong type GMM minimand comparison approach and also approaches based on GMM-AIC and GMM-BIC criteria that \cite{andrews1999} suggested. The result implies that the comparison of GMM minimand suffers over-fitting, and as a result it often selects a misspecified model of higher complexities. Though GMM-AIC and GMM-BIC based approaches attempt to solve the over-fitting problem by penalizing the flexibility of model, their performance turns out to be extremely sensitive to the model specification, and as a result, they often fail to find the correctly specified model.
%ＢＩＣとＡＩＣがWeighting matrixにセンシティブであるという話も。

Secondly, we conduct another experiment in more complex nonlinear models. We use a collusive pricing model similar to the ones of \cite{bresnahan1987} and \cite{hu2014collusion}, where their objective of model selection is to detect a potential tacit collusion from the sales and price data. %Intuitively なぜモデル選択が可能なのか。(以下いろいろ小宮山が訳したので要チェック)
We simulate the price and quantity data from perfectly competitive setting and partially collusive setting, and test if our algorithm discovers the true conduct or not. We show that our cross-validation procedure generally perform well to identify the true pricing structure from a limited amount of data. We show how CV outperforms the simple GMM fitting comparison without data split. 

%MPECのイントロを少しここに持ってきてみた。
In addition, we propose a method to apply cross-validation algorithm when estimation is based on Mathematical Programming of Equilibrium Constraint (MPEC)  approach. MPEC is proposed by \cite{su2012constrained} and is one of the state-of-the-art estimation methodologies.  MPEC achieves high computational efficiency by avoiding the nested fixed point algorithm, and its convenience is earning significant attention especially in the industrial organization research community. Though application of CV to MPEC is not straightforward, we provide a modified algorithm of CV applicable to MPEC estimation.

Finally, we perform our algorithm on a cutting-edge structural model with real-world data. The model we adopt is dynamic demand and dynamic pricing model of \cite{conlon2012dynamic}. The dynamic models are considered to be the recent frontier of the industrial organization community and used in many applications (such as \cite{lee2013vertical}). However, the superiority of the dynamic models compared with static models on its explainability of the consumer behavior is not sufficiently supported. Likewise, the dynamic pricing model is a frontier research topic in the industrial organization (\cite{nair2007intertemporal},\cite{luo2015operating}), but its empirical support against static model is only descriptive. We apply our CV algorithm to the market data of an online retailer based in the UK to test dynamic models against static models. We show that the results are mixed across different products, even though they are sold by the same retailer.

\begin{comment}
%UCI datasetのようなパブリックなデータを使って性能比較は機械学習の分野では常に行われている。経済学者は新しいモデルを新しいデータに使って論文にする傾向がある。しかし、経済学においても、新しいモデルはきちんと過去のモデルと性能比較を行い、よりよくデータを説明しているかパブリックなデータセットを使って実証的に検証していくべきである。その際、CVを使うとどんなモデルも予測力をベースに比較できるのでよい。 (下に直訳してみた、そしてこの文章はここに置くのはつながりが変な感)
%この話どこかではしたい気もするのだがどこがよいか。
Machine learning community is largely focused on making a new estimation method. The effectiveness of a newly introduced method is verified in a publicly available datasets: One of the largest common data repositories where such public datasets are found is UCI Machine Learning Repository (todo cite here). 
On the other hand, econometrician tends to introduce a new model to apply to a new dataset (ここは例があったほうがいいのでは): a new model without comparison with existing method holds a non-negligible risk of failed estimation. comparison with off-the-shelf models with public datasets is desired to be associated with an introduction of a new model. We consider that CV provides a firm ground for comparing among any models regarding its fitness to datasets.

\end{comment}

%経済論文の様式美
The paper proceeds as follows. In Section 2, we formally introduce cross-validation in GMM and discuss its econometric property. In particular, we prove its asymptotic consistency. In Section 3, we demonstrate a Monte-Carlo experiment of model selection in IV regression. In Section 4, we perform a further experiment in an oligopolistic pricing model as a nonlinear example. Section 5 explains how we can modify the algorithm when it is applied to MPEC approach. Section 6 presents the setup and results of the real-world application of the dynamic pricing model using online-retailer data. Section 7 concludes the paper.

\begin{comment}
\if0
（hypothesis testing）
Based on our consistent model selection procedure, the next step is to construct a procedure of hypothesis testing that verifies one model is better than the other one with some confidence level. We show that, when two models are equally good, our test statistic follows asymptotic normal distribution.

\noindent\textbf{Related work:}\\\noindent
In the previous approach such as \cite{riversvuong2002}, the estimation of the variance of GMM minimand is another issue. As (todo ここを埋める)
In our approach, the variance is estimated from repeated cross-validation. Researchers do not have to calculate the variance mathematically. Though our approach is more demanding for a computer, it significantly reduces the researchers' work. Also, our approach is generally applicable in any choice of the weighting matrix.
\fi
\end{comment}

\begin{comment}
\if0
\subsection{Occam's razor}
Occam's razor is an idea that is widespread in the field of machine learning, where various classification and regression problems are discussed.
The idea helps us in selecting a model among several candidates. 
At a word, models that are too simple fails to capture the nature of data, whereas models that are too complex is difficult to identify its parameters with a finite number of datapoints. Therefore, models of proper complexity are preferred. Significant efforts are devoted to controlling the model complexity (see (todo papers) and references therein).
%http://papers.nips.cc/paper/1925-occams-razor.pdf のfig 1みたいなものを入れる？→そんなに情報量はない
\fi
\end{comment}

\section{CROSS-VALIDATION APPROACH TO GMM MODEL SELECTION}\label{2-Theory}
\subsection{Setup}
Let $\mathbf{v}=\{v_t\}$ be a random vector of observed data in $\mathbf{V} \subset R^d$. Let $\mathcal{M}_i$ for $i=1,2$ be the two candidate models to explain the observed data. Each model, if correctly specified, is characterized by a set of moment conditions $f^{(i)}:\mathbf{V}\times\Theta^{(i)}\rightarrow \mathcal{R}^{q_i}$ such that
\begin{align*}
\mathcal{M}_i \Rightarrow E[f^{(i)}(v_t,\theta_0^{(i)})]=0 \text{ for a unique } \theta_0^{(i)}\in \Theta^{(i)}
\end{align*}
where $\thetai{i}\in \Theta^{(i)}$ denotes the parameters of a model $i$ to be estimated.
Let $\dimtheta{i}$ be the dimension of $\thetai{i}$
Given the observation $\{v_t\}_{t=1,,,T}$, the parameters of each model are estimated via GMM;
\begin{align}
\hat{\theta}_T^{(i)}=\argmin_{\theta^{(i)}\in\Theta^{(i)}} Q_T^{(i)}(\theta^{(i)})
\label{eq_gmmargmin}
\end{align}
where
\begin{align*}
Q_T^{(i)}(\theta^{(i)})=\left\{ \frac{1}{T} \sum_{t=1}^T f^{(i)}(v_t,\theta^{(i)}) \right\}' W_T^{(i)} \left\{ \frac{1}{T} \sum_{t=1}^T f^{(i)}(v_t,\theta^{(i)}) \right\}. 
\end{align*}

Let $\plim{T\rightarrow\infty}W_T^{i}=W^{i}$, and the population analogue of the moment conditions be
\begin{align*}
Q_0^{(i)}(\theta^{(i)})=E[f^{(i)}(v_t,\theta^{(i)})]'W^{(i)}E[f^{(i)}(v_t,\theta^{(i)})].
\end{align*}
Assume that $\plim{T\to\infty}\theta_T^{(i)}=\theta^{(i)}_0$ exists.
The null hypothesis is that $\mathcal{M}_1$ and $\mathcal{M}_2$ are asymptotically equivalent;
\begin{align*}
\text{H}_0: Q_0^{(1)}(\theta^{(1)}_0)=Q_0^{(2)}(\theta^{(2)}_0).
\end{align*}
Two alternative hypotheses are that $\mathcal{M}_1$ is asymptotically better than $\mathcal{M}_2$ or the other way around;
\begin{align*}
\text{H}_1^{(a)}=Q_0^{(1)}(\theta^{(1)}_0)<Q_0^{(2)}(\theta^{(2)}_0),\\
\text{H}_1^{(b)}=Q_0^{(1)}(\theta^{(1)}_0)>Q_0^{(2)}(\theta^{(2)}_0).
\end{align*}

\subsection{Cross-validation}

%ここは2-foldじゃなくて一般的にk-foldで書くべきか
%leave-one-outでいいのかどうか
Cross-validation is a model selection procedure in which the data is split into two subsets called training set and validation set. The set of parameters of each model is trained in the trained set, and its goodness is evaluated with the validation set. Let $r \ge 2$, $k<r$ be integers. In leave-$k$-out $r$-fold cross-validation (($k,r$)-CV), we first split $T$ datapoints into $r$ disjoint subsets. At each round of CV, We use $r-k$ of them as the training data, and the other $k$ as the validation data. Multiple number of rounds among possible splits are performed to reduce variability.
Namely, let
\begin{align*}
\mathcal{N}_{T_{j,r}} = \{\lfloor T(j-1)/r \rfloor +1, \lfloor T(j-1)/r \rfloor+2,\dots,\lfloor Tj/r \rfloor\} 
\end{align*}
be the indices of the $j$-th split.
Let $\combinationSetS$ and 
\begin{align*}
\mathcal{N}_S = \bigcup_{j \in S} \mathcal{N}_{T_{j,r}}
\end{align*}
be subset of datapoints consisted of folds in $S$. 
The moment on this datapoints is denoted as
\begin{equation*}
Q_S^{(i)}(\theta^{(i)})=\left\{ \frac{1}{|\mathcal{N}_S|} \sum_{t \in \mathcal{N}_S} f^{(i)}(v_t,\theta^{(i)}) \right\}' W_S^{(i)} \left\{ \frac{1}{|\mathcal{N}_S|} \sum_{t \in \mathcal{N}_S} f^{(i)}(v_t,\theta^{(i)}) \right\},
\end{equation*}
and the model trained to minimize the moment is denoted as
\begin{align*}
\thetaempiS{i}{S}=\argmin_{\theta^{(i)}\in\Theta^{(i)}} Q_S^{(i)}(\theta^{(i)}).
\end{align*}
Once the model is trained, it is validated by the rest of datapoints as: 
\begin{equation*}
Q_{S,\mathrm{valid}}^{(i)}(\thetaempiS{i}{S})=\left\{ \frac{1}{|\mathcal{N}_{\setminus S}|} \sum_{t \in \mathcal{N}_{\setminus S}} f^{(i)}(v_t,\thetaempiS{i}{S}) \right\}' W_S^{(i)} \left\{ \frac{1}{|\mathcal{N}_{\setminus S}|} \sum_{t \in \mathcal{N}_{\setminus S}} f^{(i)}(v_t,\thetaempiS{i}{S}) \right\},
\end{equation*}
where $\mathcal{N}_{\setminus S} = \{1,\dots,T\} \setminus \mathcal{N}_{S}$.
In ($k,r$)-CV, the averaged validation score of each model
\begin{equation*}
 Q_{\mathrm{valid}}^{(i)} = \frac{1}{\rCk} \sum_{S \subset \{1,2,\dots,r\}: |S| = r-k} Q_{S, \mathrm{valid}}^{(i)}(\thetaempiS{i}{S})
\end{equation*}
is compared, and the model of smaller averaged validation score is selected. The procedure is summarized in Algorithm \ref{alg_cv_gmm}.
%todo define W_S^{(i)} (currently undefined)

\if0
\subsection{Asymptotic normality}

たぶんここは今日中には完成しないけど書いてる

Let 
\begin{equation}
   N_T := Q_{\mathrm{valid}}^{(1)} - Q_{\mathrm{valid}}^{(2)} .
\label{ineq_normalrv}
\end{equation}
In this section, we prove the asymptotic normality of the random variable $N_T$ when the number of split $r$ is large. The crux is that $Q_{S,\mathrm{valid}}^{(i)}(\theta_{S}^{(i)})$ for each split $S$ is exchangeable random variables.

\begin{lemma}{\rm (CLT for exchangeable random variables, Thm 1 of \cite{weber80})}
A set of random variables is said to be exchangeable if the joint distribution of every subset of $k$ distinct elements of the sequence is depending only on $k$. Let $\{X_{n,j}\}_{n,j \le m}$ be exchangeable random variables such that
\begin{itemize}
\item[(1)] $\mathbf{E}[X_{n,1} X_{n,2}] \rightarrow 0$ as $n \rightarrow \infty$,
\item[(2)] $\max_{j \le m} |X_{n,j}|/\sqrt{n} \rightarrow^{p} 0$ as $n \rightarrow \infty$,
\item[(3)] $n^{-1} \sum_{j=1}^n X_{n,j}^2 \rightarrow^{p} 1$ as $n \rightarrow \infty$,
\end{itemize}
then
\[
  \sqrt{n} \left[ n^{-1} S_n - n^{-1} \sum_{j=1}^n \mathbf{E}[X_{n,j}] \right] \rightarrow^{d} \mathcal{N}(0,1)
\]
as $n \rightarrow \infty$, where $S_n = \sum_{i=1}^n X_{n,i}$.
\label{lem_exchangeability}
\end{lemma}

The following theorem states that, Under the assumption that the correlation of two different splits are sufficiently weak, $N_T$ is asymptotically normal.
\begin{theorem} {\rm (asymptotically normal estimator)}
Assume that $\sigma^2 = E[ (Q_{S,\mathrm{valid}}^{(1)}(\theta_{S}^{(1)})-Q_{S,\mathrm{valid}}^{(2)}(\theta_{S}^{(2)}))^2 ] < \infty$.
Let $S_1 \ne S_2$ be different splits, and assume that 
\begin{equation}
\lim_{r \rightarrow \infty} \frac{ E[ Q_{S_1,\mathrm{valid}}^{(i)}(\theta_{S_1}^{(i)})Q_{S_2,\mathrm{valid}}^{(i)}(\theta_{S_2}^{(i)}) ] }{ E[ (Q_{S_1,\mathrm{valid}}^{(i)}(\theta_{S_1}^{(i)}))^2 ] } = 0.
\label{ineq_smallcov}
\end{equation}
Then $N_T/\sigma \sim \mathcal{N}(0,1)$, where $\mathcal{N}$ is the Normal distribution.
\label{thm_asnormal}
\end{theorem}
\begin{proof}[Proof of \ref{thm_asnormal}]
Let $R_S := (Q_{S,\mathrm{valid}}^{(1)}(\theta_{S}^{(1)})-Q_{S,\mathrm{valid}}^{(2)}(\theta_{S}^{(2)}))/\sigma$. Then, $\{R_S\}$ are ${}_r C _k$  exchangeable random variables and satisfies the conditions of Lemma \ref{lem_exchangeability}. Condition (3) is satisfied by the fact that we normalize it by dividing $\sigma$. Condition (1) is satisfied by assumption of inequality \eqref{ineq_smallcov}. Condition (2), which is satisfied by the finiteness of $Q_{S,\mathrm{valid}}^{(i)}(\theta_{S}^{(i)})$, which is derived by Assumption 3.10 in \cite{hall2005} and boundedness of $W_T$.
\end{proof}
\fi

\subsection{Consistency of CV in Model Selection}

In this section, we derive the consistency of CV in GMM model selection. Let one of the models is misspecified.
Without loss of generality, we assume the first model is the true model\footnote{Of course, the model selection method should not exploit this fact.}. The true model satisfies the following moment condition:
\begin{align*}
\mathbf{E}[f^{(1)}(v_t,\theta_0^{(1)})]=0.
\end{align*}
The latter model is assumed to be misspecified: that is, for any $\theta^{(2)}$ the following holds:
\begin{align*}
\mathbf{E}[f^{(2)}(v_t,\theta^{(2)})] > 0.
\end{align*}
%(locally misspecified今のところ考えないので、後々ここ削る）
The misspecification is divided into two local and non-local ones \cite{hall2005}.
\begin{assumption}
The false model is globally misspecified if there exists $\mu(\theta)$ such that $||\mu(\theta)||>0$ and 
\begin{align*}
\inf_{\theta^{(2)} \in \Theta^{(2)}} \mathbf{E}\left[f^{(2)}(v_t,\theta^{(2)})\right] = \mu(\theta).
\end{align*}
\end{assumption}
Alternatively, we can make a weaker assumption that the sample moment of the misspecified model converges to zero slower than that of the true model. This assumption covers cases where the misspecified model is more general (or too general) than the true model. This is the case, for example, the utility function in the true model is a linear function of price but the misspecified model incorporates higher order polynomials.
\begin{assumption}
The false model is said to be locally misspecified if, for every $\epsilon\in(0,1)$, there exists $c_{\epsilon}>0$ such that, when $T$ is sufficiently large, $P[Q_{\mathrm{valid}}^{(1)} < Q_{\mathrm{valid}}^{(2)}]\geq 1-\epsilon$.
\end{assumption}
Note that, in either definition of misspecification, the researcher does not know which model is true, and our interest lies in consistently choosing the true model over a misspecified model based on the dataset.

In the previous literature, Smith (1992) offers a pairwise comparison process for consistent model selection. However, it has some practical disadvantages when applied to empirical research:
(i) A pairwise comparison could be extremely demanding if the space of candidate models is large, and (ii) it may be subject to over-fitting problem. 
To avoid those issues, the most common practice in the field of machine learning is to apply cross-validation (CV) algorithm. In the literature in statistics, Yang (2006,2007) have shown that even the simplest CV procedure can find a true model consistently when the data structure is regression form, i.e. $y_i=f(x_i)+\epsilon_i$.
%これは後で前に持ってきてもいいかもしれない
%In this paper, we propose a GMM analogue of CV approach. In particular, we (i) prove that it consistently selects the true model under mild assumptions, and (ii) test its performance with common structural econometric models. We also small sample performance of various CV procedures, such as r-fold CV and CV-v.
Likewise to the literature, we define a consistent model selection as below:
\begin{definition}
%Assume that model $1$ is correct while model $2$ is wrong. 
Assume that model $1$ is correct while model $2$ is wrong in a sense that it is globally misspecified. %とりあえずglobalにした 
%このwrongはglobal misspecification限定？？（島尾）
A selection rule is said to be consistent if the probability of selecting model $1$ approaches $1$ as $T\longrightarrow\infty$.
\end{definition}

To derive the consistency of CV, we define the following assumptions.
\begin{assumption}{\rm (strict stationarity)} %Hall 3.1
$\mathbf{v}=\{v_t\}$ is a strictly stationary process.
\label{asm_stationary}
\end{assumption}

\begin{assumption}{\rm (regularity condition)} %Hall 3.2
Let $f^{(i)}(v_t, \theta)$ and its population analogue $\mathbf{E}[f^{(i)}(v_t, \theta)]$ be continuous on $\theta^{(i)}$ for each $v_t$. Let $\Theta^{(i)}$ be compact and $\mathbf{E}[\sup_{\theta^{(i)} \in \Theta^{(i)}} f^{(i)}(v_t, \theta)]$ be bounded.
\label{asm_regular}
\end{assumption}

\begin{assumption}{\rm (ergodicity)} %Hall 3.8
$\mathbf{v}=\{v_t\}$ is an ergodic process.
\label{asm_ergodicity}
\end{assumption}

\begin{assumption}{\rm (identification condition)} %Hall 5.2
 Let \begin{equation*}
  \mathbf{E}\left[\frac{\partial f^{(i)}(v_t, \theta^{(i)}_0)}{\partial \theta^{(i)}}\right] 
\end{equation*}
have rank $d$.
\label{asm_identification}
\end{assumption}

%(Hall 3.3-3.4,3.7)は他のところで実質書いているので不要

In the following we prove the following theorem.
\begin{theorem}
Let Assumptions \ref{asm_stationary}--\ref{asm_identification} hold. Then, $(r,k)$-CV is consistent.
\label{thm_nonlocal_consistency}
\end{theorem}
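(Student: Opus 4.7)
The plan is to show that $Q_{\mathrm{valid}}^{(1)} \inprob 0$ while $Q_{\mathrm{valid}}^{(2)} \inprob c$ for some $c > 0$; the selection rule then picks model~1 with probability tending to one. Since $Q_{\mathrm{valid}}^{(i)}$ is a simple average over the fixed number $\rCk$ of splits, it suffices to prove that for each fixed split $S$,
\[
Q_{S,\mathrm{valid}}^{(i)}(\hat\theta_S^{(i)}) \inprob Q_0^{(i)}(\theta_0^{(i)}).
\]

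The first building block is consistency of the training step, $\hat\theta_S^{(i)} \inprob \theta_0^{(i)}$. Each training subsample $\mathcal{N}_S$ is a finite union of contiguous index blocks whose total size grows proportionally to $T$; by strict stationarity and ergodicity (Assumptions~\ref{asm_stationary} and~\ref{asm_ergodicity}), combined with compactness of $\Theta^{(i)}$, continuity of $f^{(i)}$, and the envelope condition $\mathbf{E}[\sup_{\theta^{(i)}} \|f^{(i)}(v_t, \theta^{(i)})\|] < \infty$ from Assumption~\ref{asm_regular}, a stationary--ergodic uniform law of large numbers gives
\[
\sup_{\theta^{(i)} \in \Theta^{(i)}} \left\| |\mathcal{N}_S|^{-1} \sum_{t \in \mathcal{N}_S} f^{(i)}(v_t, \theta^{(i)}) - \mathbf{E}[f^{(i)}(v_t, \theta^{(i)})] \right\| \inprob 0.
\]
Together with the identification condition (Assumption~\ref{asm_identification}), which makes $\theta_0^{(i)}$ the unique minimizer of $Q_0^{(i)}$, this reduces to the standard GMM consistency argument and yields $\hat\theta_S^{(i)} \inprob \theta_0^{(i)}$.

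Next I handle the validation step. The complement $\mathcal{N}_{\setminus S}$ is disjoint from $\mathcal{N}_S$ and is itself a union of blocks of size proportional to $T$, so the same uniform law applies on it. Because $\hat\theta_S^{(i)}$ is measurable with respect to the training observations while the validation sample mean is evaluated at $\hat\theta_S^{(i)}$, the triangle inequality combined with the validation ULLN and the continuity of $f^{(i)}$ in $\theta$ gives
\[
|\mathcal{N}_{\setminus S}|^{-1}\sum_{t \in \mathcal{N}_{\setminus S}} f^{(i)}(v_t, \hat\theta_S^{(i)}) \inprob \mathbf{E}\bigl[f^{(i)}(v_t, \theta_0^{(i)})\bigr].
\]
With $W_S^{(i)} \inprob W^{(i)}$, the continuous mapping theorem then delivers $Q_{S,\mathrm{valid}}^{(i)}(\hat\theta_S^{(i)}) \inprob Q_0^{(i)}(\theta_0^{(i)})$, and averaging over the finitely many splits preserves the limit, so $Q_{\mathrm{valid}}^{(i)} \inprob Q_0^{(i)}(\theta_0^{(i)})$.

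The conclusion follows by comparing the two limits: $\mathbf{E}[f^{(1)}(v_t, \theta_0^{(1)})] = 0$ gives $Q_0^{(1)}(\theta_0^{(1)}) = 0$, while global misspecification of model~2 combined with positive definiteness of $W^{(2)}$ gives $Q_0^{(2)}(\theta_0^{(2)}) > 0$; hence $\mathbf{P}[Q_{\mathrm{valid}}^{(1)} < Q_{\mathrm{valid}}^{(2)}] \to 1$. Under the local-misspecification assumption the claim is essentially built into the definition. The main obstacle I anticipate is rigorously justifying the stationary--ergodic ULLN on each fixed union of blocks $\mathcal{N}_S$ and $\mathcal{N}_{\setminus S}$, because the splits are contiguous index blocks rather than i.i.d.\ subsamples; one must transfer the pointwise ergodic theorem to uniform convergence over the compact parameter space via the envelope dominance of Assumption~\ref{asm_regular}, and verify that $W_S^{(i)}$---typically constructed from the training data---inherits the same probability limit $W^{(i)}$ as the full-sample weighting matrix.
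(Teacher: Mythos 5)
Your proposal is correct and follows essentially the same route as the paper: establish consistency of $\thetaempiS{i}{S}$ on each training split, combine a uniform law of large numbers on the validation split with continuity of $Q_0^{(i)}$ to get $Q_{S,\mathrm{valid}}^{(i)}(\thetaempiS{i}{S}) \inprob Q_0^{(i)}(\theta_0^{(i)})$, and conclude from $Q_0^{(1)}(\theta_0^{(1)}) = 0 < Q_0^{(2)}(\theta_0^{(2)})$. The only presentational difference is that the paper imports the estimator consistency, the misspecified-model limit, and the uniform convergence as three lemmas from \cite{hall2005} rather than deriving them from a stationary--ergodic ULLN as you sketch, which is precisely the technical burden you flag at the end.
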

%イマココ

%If possible, we want to formalize a method of hypothesis testing.

\subsection*{Proof of Theorem \ref{thm_nonlocal_consistency}}

We first states lemmas that are proven in \cite{hall2005}, and by using them we prove the theorem.
\begin{lemma} {\rm (Consistency of the estimator in the correct model, Theorem 3.1 in \cite{hall2005})}
Let $S \subset \{1,\dots,r\}, |S|=r-k$ be any split in $(k,r)$-CV, and model $1$ be correctly specified.
Let Assumptions \ref{asm_stationary}--\ref{asm_identification} hold. Then,
\begin{equation}
\thetaempiS{1}{S} \rightarrow^{p} \theta_0^{(1)} \label{ineq_consistency}
\end{equation} 
as $T/r \rightarrow \infty$.
\end{lemma}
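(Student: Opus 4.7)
\subsection*{Proof proposal for the consistency lemma}

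The goal is to show that the GMM estimator computed on the training subsample $\mathcal{N}_S$ converges in probability to $\theta_0^{(1)}$. The plan is to mimic the classical GMM consistency argument (Newey–McFadden / Hansen) but applied to the subsample $\mathcal{N}_S$. The key structural observation is that $\mathcal{N}_S$ consists of $r-k$ contiguous blocks whose sizes each grow like $T/r$, so $|\mathcal{N}_S| = (r-k)T/r + O(1) \to \infty$ whenever $T/r \to \infty$. Under Assumption \ref{asm_stationary} (strict stationarity) and Assumption \ref{asm_ergodicity} (ergodicity), the restriction of $\{v_t\}$ to each contiguous block still obeys the ergodic theorem, and the block-size-weighted average over $\mathcal{N}_S$ inherits the same limit. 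The argument then has three parts: (i) pointwise convergence of the sample moment on $\mathcal{N}_S$; (ii) upgrading to uniform convergence on $\Theta^{(1)}$; (iii) passing to convergence of the argmin.

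For step (i), Birkhoff's ergodic theorem applied to partial sums of $f^{(1)}(v_t, \theta)$ gives, for each fixed $\theta$,
\begin{equation*}
\frac{1}{|\mathcal{N}_S|}\sum_{t \in \mathcal{N}_S} f^{(1)}(v_t, \theta) \;\inprob\; \mathbf{E}\bigl[f^{(1)}(v_t, \theta)\bigr],
\end{equation*}
because the sum over any contiguous block $\mathcal{N}_{T_{j,r}}$ divided by its length converges to the same expectation, and the overall average over $\mathcal{N}_S$ is a convex combination of these block averages with weights bounded away from zero. For step (ii), I would invoke a uniform ergodic theorem: continuity of $f^{(1)}(v_t, \cdot)$ on the compact set $\Theta^{(1)}$ together with the dominating bound $\mathbf{E}[\sup_{\theta} f^{(1)}(v_t, \theta)] < \infty$ from Assumption \ref{asm_regular} lets a standard compactness-plus-stochastic-equicontinuity argument upgrade pointwise convergence to
\begin{equation*}
\sup_{\theta^{(1)} \in \Theta^{(1)}}\left\| \frac{1}{|\mathcal{N}_S|}\sum_{t \in \mathcal{N}_S} f^{(1)}(v_t, \theta^{(1)}) - \mathbf{E}[f^{(1)}(v_t, \theta^{(1)})] \right\| \inprob 0.
\end{equation*}
Combined with $W_S^{(1)} \inprob W^{(1)}$ (which follows from the assumed $\mathrm{plim}$ on any subsample of positive density), this yields
\begin{equation*}
\sup_{\theta^{(1)} \in \Theta^{(1)}} \bigl| Q_S^{(1)}(\theta^{(1)}) - Q_0^{(1)}(\theta^{(1)}) \bigr| \inprob 0.
\end{equation*}

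For step (iii), since model 1 is correctly specified, $\mathbf{E}[f^{(1)}(v_t, \theta_0^{(1)})] = 0$ so $Q_0^{(1)}(\theta_0^{(1)}) = 0$, and uniqueness of $\theta_0^{(1)}$ from the model specification together with positive-definiteness of $W^{(1)}$ (implicit in the GMM setup) makes $\theta_0^{(1)}$ the unique minimizer of $Q_0^{(1)}$. Standard consistency of extremum estimators (e.g.\ Newey–McFadden Theorem 2.1) then delivers $\thetaempiS{1}{S} \inprob \theta_0^{(1)}$.

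The main obstacle is justifying the uniform convergence on the subsample rigorously, because $\mathcal{N}_S$ is not the full sample $\{1,\dots,T\}$ but a union of blocks, so I cannot apply a canned uniform ergodic theorem directly. My plan for this is to argue block-by-block: for each fixed fold $j \in S$, stationarity of $\{v_t\}$ implies that $\{v_t\}_{t \in \mathcal{N}_{T_{j,r}}}$ has the same distribution as an initial segment of length $|\mathcal{N}_{T_{j,r}}|$, and ergodicity gives a uniform law of large numbers on each such block under Assumption \ref{asm_regular}; then the weighted sum over $j \in S$ preserves uniform convergence because the number of folds $|S| = r - k$ is fixed while each block length tends to infinity. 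The identification step also deserves care: Assumption \ref{asm_identification} is only a local rank condition, but the unique-zero property of the moment condition already assumed for model $1$ supplies the required global identification, so the rank condition is used only later (for asymptotic normality) and is not strictly needed here.
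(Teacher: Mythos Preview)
Your proposal is correct. The paper does not supply its own proof of this lemma: it is stated as a direct citation of Theorem~3.1 in Hall (2005), with the surrounding text reading ``We first state lemmas that are proven in [Hall 2005], and by using them we prove the theorem.'' What you have written is essentially a reconstruction of the classical GMM consistency argument that Hall's Theorem~3.1 contains (a uniform law of large numbers on a compact parameter space under an integrable envelope, followed by continuity of the argmin), together with the one extra ingredient the subsample setting requires---the observation that $\mathcal{N}_S$ is a fixed finite union of contiguous blocks each of diverging length, so that strict stationarity lets each block average be distributed as an initial-segment average and convergence in probability passes to the convex combination over $j \in S$. Your side remark that the rank condition in Assumption~\ref{asm_identification} is not actually used for consistency (only for the later asymptotic-normality statements) is also accurate; the paper lists it among the hypotheses simply because it imports the lemma as part of a package of standard results rather than isolating minimal conditions.
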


\begin{lemma} {\rm (Property of a globally misspecified estimator, Theorem 5.2 in \cite{hall2005})}
Let $S \subset \{1,\dots,r\}, |S|=r-k$ be any split in $(k,r)$-CV. 
Let Assumptions \ref{asm_stationary}--\ref{asm_identification} hold. Then, here exists $c>0$ such that
\begin{equation}
Q_0^{(i)}(\thetaempiS{i}{S}) \inprob c
\label{ineq_misspecify_linear}
\end{equation} 
as $T/r \rightarrow \infty$.
\end{lemma}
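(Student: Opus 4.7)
The plan is to recycle the standard extremum-estimator template of Newey--McFadden, applied to the training subsample $\mathcal{N}_S$ instead of the full sample. Since $|\mathcal{N}_S| = \lfloor (r-k) T / r \rfloor$ grows linearly in $T$ for fixed $r$ and $k$, the standard asymptotic arguments for GMM carry over once one checks that the ergodic law of large numbers still applies to the thinned index set $\mathcal{N}_S$.

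Step one is to establish a uniform law of large numbers on $\Theta^{(i)}$:
\begin{equation*}
\sup_{\theta^{(i)} \in \Theta^{(i)}} \left\| \frac{1}{|\mathcal{N}_S|} \sum_{t \in \mathcal{N}_S} f^{(i)}(v_t, \theta^{(i)}) - \mathbf{E}[f^{(i)}(v_t, \theta^{(i)})] \right\| \inprob 0.
\end{equation*}
The pointwise version follows from Birkhoff's ergodic theorem under Assumptions~\ref{asm_stationary} and~\ref{asm_ergodicity} (strict stationarity is preserved under the deterministic sub-selection of indices that defines $\mathcal{N}_S$). Assumption~\ref{asm_regular} --- continuity of $f^{(i)}$ in $\theta^{(i)}$, compactness of $\Theta^{(i)}$, and the envelope $\mathbf{E}[\sup_{\theta^{(i)}} f^{(i)}(v_t, \theta^{(i)})] < \infty$ --- upgrades pointwise to uniform convergence via a standard finite-cover argument. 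Combined with $W_S^{(i)} \inprob W^{(i)}$, this yields $\sup_{\theta^{(i)}} |Q_S^{(i)}(\theta^{(i)}) - Q_0^{(i)}(\theta^{(i)})| \inprob 0$.

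Step two is to identify the limiting value. Write $\mu^{(i)}(\theta^{(i)}) := \mathbf{E}[f^{(i)}(v_t, \theta^{(i)})]$. Under global misspecification we have $\|\mu^{(i)}(\theta^{(i)})\| > 0$ for every $\theta^{(i)} \in \Theta^{(i)}$. By continuity of $\mu^{(i)}$ on the compact set $\Theta^{(i)}$ together with positive definiteness of the limit weighting matrix $W^{(i)}$, the population criterion $Q_0^{(i)}(\theta^{(i)}) = \mu^{(i)}(\theta^{(i)})' W^{(i)} \mu^{(i)}(\theta^{(i)})$ attains a strictly positive minimum $c := \min_{\theta^{(i)} \in \Theta^{(i)}} Q_0^{(i)}(\theta^{(i)}) > 0$ on a nonempty pseudo-true set $\Theta_*^{(i)}$. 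A standard argmin-sandwich argument (since $\thetaempiS{i}{S}$ minimizes $Q_S^{(i)}$ and $Q_S^{(i)}$ converges uniformly to $Q_0^{(i)}$) then gives $Q_S^{(i)}(\thetaempiS{i}{S}) \inprob c$, and a further appeal to the uniform bound from Step one upgrades this to $Q_0^{(i)}(\thetaempiS{i}{S}) \inprob c$, which is the desired conclusion.

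The main subtlety, and the reason Newey--McFadden cannot simply be quoted verbatim, is that under misspecification the pseudo-true set $\Theta_*^{(i)}$ need not be a singleton: Assumption~\ref{asm_identification} only identifies the true parameter in a correctly specified model, and no analogous rank condition is imposed at the pseudo-true value. Fortunately the statement concerns the value of the criterion rather than $\thetaempiS{i}{S}$ itself, so multiplicity of minimizers is harmless --- uniform convergence plus the argmin characterization force $Q_0^{(i)}(\thetaempiS{i}{S})$ to cluster near the common optimal value $c$ regardless of where $\thetaempiS{i}{S}$ lands inside $\Theta_*^{(i)}$. This weakening from parameter consistency to criterion consistency is exactly the flexibility Hall's Theorem 5.2 exploits, and it is all the CV consistency proof downstream requires.
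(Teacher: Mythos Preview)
Your argument is correct, but there is nothing to compare it against: the paper does not prove this lemma at all. It is stated with the parenthetical attribution ``Theorem~5.2 in \cite{hall2005}'' and the surrounding text explicitly says ``We first state lemmas that are proven in \cite{hall2005}, and by using them we prove the theorem.'' So the paper's ``proof'' is a bare citation.

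What you have written is a clean reconstruction of the standard extremum-estimator value-consistency argument, and it is essentially what underlies Hall's Theorem~5.2. A couple of small remarks. First, the sentence ``strict stationarity is preserved under the deterministic sub-selection of indices that defines $\mathcal{N}_S$'' is a little loose, since $\mathcal{N}_S$ can be a union of non-adjacent blocks; the cleaner way to phrase it is that each block $\mathcal{N}_{T_{j,r}}$ is a contiguous stretch whose length tends to infinity, so the ergodic theorem applies block-by-block (the shifted process has the same law by stationarity) and the overall average over $\mathcal{N}_S$ is a convex combination of the block averages. Second, your observation that only criterion-value consistency is needed, not consistency of $\thetaempiS{i}{S}$ itself, is exactly right and worth keeping --- the paper's downstream use of this lemma (in the proof of Theorem~\ref{thm_nonlocal_consistency}) only ever invokes the value $c>0$, never the location of the minimizer.
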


\begin{lemma} {\rm (Uniform convergence of the moment, Lemma 3.1 in \cite{hall2005})}
Let Assumptions \ref{asm_stationary}--\ref{asm_identification} hold. Then,
\begin{align}
\sup_{\theta^{(1)} \in \Theta^{(1)}} |Q_{S,\mathrm{valid}}^{(i)}(\theta^{(1)})-Q_0^{(1)}(\theta^{(1)})| \inprob 0 \label{ineq_unifg} \\
\sup_{\theta^{(2)} \in \Theta^{(2)}} |Q_{S,\mathrm{valid}}^{(2)}(\theta^{(2)}) - Q_0^{(2)}(\theta^{(2)})| \inprob 0 \label{ineq_unifhtrain}
%\sup_{\theta^{(2)} \in \Theta^{(2)}} |Q_0^{(2)}(\thetaempiS{2}{S}) -  Q_0^{(2)}(\theta^{(2)}_*)| \inprob 0 \label{ineq_unifhvalid} 
\end{align}
\end{lemma}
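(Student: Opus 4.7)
The plan is to establish a uniform law of large numbers (ULLN) on the validation subsample for the raw moment vector, and then transfer this to the quadratic form $Q_{S,\mathrm{valid}}^{(i)}(\theta)$ via a continuous-mapping argument. Both displayed statements are the same claim for the two models, so I argue for a generic model $i$ and suppress the superscript when convenient. Define the validation sample average $\bar f_{\setminus S}(\theta) := |\mathcal{N}_{\setminus S}|^{-1} \sum_{t \in \mathcal{N}_{\setminus S}} f^{(i)}(v_t, \theta)$ and its population analogue $\bar f_0(\theta) := \mathbf{E}[f^{(i)}(v_t, \theta)]$.

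First I would prove pointwise convergence $\bar f_{\setminus S}(\theta) \inprob \bar f_0(\theta)$ for each fixed $\theta$. The obstacle here is that $\mathcal{N}_{\setminus S}$ is not a contiguous time block but a disjoint union of $k$ folds $\mathcal{N}_{T_{j,r}}$. Since each fold is contiguous of length $T/r + O(1)$, I decompose $\bar f_{\setminus S}(\theta)$ as a convex combination of within-fold averages, and apply Birkhoff's ergodic theorem (using Assumptions \ref{asm_stationary} and \ref{asm_ergodicity}) to each fold separately; the weights on the folds converge to $1/k$, so the convex combination itself converges almost surely to $\bar f_0(\theta)$.

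Second, I would upgrade pointwise convergence to uniformity in $\theta\in\Theta^{(i)}$. Assumption \ref{asm_regular} supplies the three standard ingredients: continuity of $f^{(i)}(v_t,\cdot)$ on the compact set $\Theta^{(i)}$ and an integrable envelope $\sup_{\theta}\|f^{(i)}(v_t,\theta)\|$. Given any $\varepsilon>0$, cover $\Theta^{(i)}$ by finitely many $\delta$-balls around points $\theta_1,\dots,\theta_N$, use pointwise convergence at each center, and control the oscillation inside each ball by the envelope together with continuity (a dominated-convergence / stochastic-equicontinuity argument identical to the one underlying Lemma 3.1 of \cite{hall2005}). This yields $\sup_{\theta}\|\bar f_{\setminus S}(\theta) - \bar f_0(\theta)\| \inprob 0$.

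Finally I transfer to the quadratic form. Writing $g_1 := \bar f_{\setminus S}(\theta)$, $g_0 := \bar f_0(\theta)$, $W_1 := W_S^{(i)}$, $W_0 := W^{(i)}$, I use the identity
\begin{equation*}
g_1' W_1 g_1 - g_0' W_0 g_0 \;=\; (g_1 - g_0)' W_1 (g_1 + g_0) \;+\; g_0' (W_1 - W_0) g_0 ,
\end{equation*}
and bound each piece uniformly in $\theta$: $\sup_\theta\|g_1 - g_0\|\inprob 0$ by the previous step, $\sup_\theta\|g_0\|$ is bounded by the envelope, $\sup_\theta\|g_1\|$ is bounded in probability by the envelope plus the previous step, and $\|W_1 - W_0\|\inprob 0$ by the stated condition $\plim_{T\to\infty} W_T^{(i)} = W^{(i)}$ (inherited on the subsample since $|\mathcal{N}_{\setminus S}|\to\infty$). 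The main obstacle is really the uniform-in-$\theta$ upgrade in the second step; the non-contiguity of $\mathcal{N}_{\setminus S}$ is handled cleanly by fold decomposition, and the quadratic-form transfer is routine once the ULLN for $\bar f_{\setminus S}$ is in hand.
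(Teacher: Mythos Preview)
Your argument is correct and is essentially the standard proof of the uniform law of large numbers that underlies Lemma~3.1 in \cite{hall2005}. The paper does not give its own proof of this lemma; it simply cites Hall's result directly. Your write-up in fact goes slightly beyond the citation by explicitly handling the non-contiguity of the validation index set $\mathcal{N}_{\setminus S}$ via the fold decomposition and stationarity, and by spelling out the quadratic-form transfer; both are routine but worth recording. One small slip: in the last step you justify $\|W_S^{(i)}-W^{(i)}\|\inprob 0$ by pointing to $|\mathcal{N}_{\setminus S}|\to\infty$, but $W_S^{(i)}$ is computed on the \emph{training} set $\mathcal{N}_S$, so the relevant fact is $|\mathcal{N}_S|\approx (r-k)T/r\to\infty$; the conclusion is unchanged.
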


\begin{proof}[Proof of Theorem \ref{thm_nonlocal_consistency}]
We show that, 
\begin{equation}
\sup_{\theta^{(1)} \in \Theta^{(1)}} |Q_{\mathrm{valid}}^{(1)}| \inprob 0
\label{ineq_nlc_one}
\end{equation}
and there exists $c>0$ such that
\begin{equation}
|Q_{\mathrm{valid}}^{(2)}| \inprob c
%Q_{h,n_1,n_2}(\hat{\Theta}^{(2)}_{n_1}) \inprob c,
\label{ineq_nlc_two}
\end{equation}
which imply Theorem \ref{thm_nonlocal_consistency}.
First, 
\begin{align*}
\lefteqn{
|Q_{\mathrm{valid}}^{(1)}-Q_0^{(1)}(\theta_0^{(1)})|
\le \sum_{S \in \{1,\dots,r\}: |S|=r-k} |Q_{S, \mathrm{valid}}^{(1)}(\thetaempiS{1}{S})-Q_0^{(1)}(\theta_0^{(1)})| 
}\nn
&\le \sum_{S \in \{1,\dots,r\}: |S|=r-k} 
\left(
|Q_{S, \mathrm{valid}}^{(1)}(\thetaempiS{1}{S})-Q_0^{(1)}(\thetaempiS{1}{S})|+
|Q_0^{(1)}(\thetaempiS{1}{S})-Q_0^{(1)}(\theta_0^{(1)})| 
\right)
\end{align*}
Inequality \eqref{ineq_unifg} implies the first term converges to zero in probability, and the second term converges to zero in probability by \eqref{ineq_consistency}.
In other words,
\begin{equation}
|Q_{\mathrm{valid}}^{(1)}(\theta^{(1)})-Q_0^{(1)}(\theta_0^{(1)})| \inprob 0
\end{equation}
and by Assumption 3.3 in \cite{hall2005}, 
\begin{equation}
Q_0^{(1)}(\theta_0^{(1)}) = 0
\end{equation}
and thus inequality \eqref{ineq_nlc_one} is derived. 
We next show \eqref{ineq_nlc_two}. We have,
\begin{align*}
\lefteqn{
Q_{\mathrm{valid}}^{(2)}
}\nn
&\ge Q_0^{(2)}(\theta^{(2)}_0) - \frac{1}{{}_r C _k} \sum_{S \in \{1,\dots,r\}: |S|=r-k} \left( |Q_{S,\mathrm{valid}}^{(2)}(\thetaempiS{2}{S}) - Q_0^{(2)}(\thetaempiS{2}{S})| - |Q_0^{(2)}(\thetaempiS{2}{S}) -  Q_0^{(2)}(\theta_0^{(2)})| \right),
%\label{ineq_consisntencytwoterms} 後の方でここ由来のエラーが出たのでlabelを削除（島尾）
%イマココ、前の最終行
\end{align*}
where the first term of the RHS  converges to $c>0$ in probability by \eqref{ineq_consistency}. The second term converge to zero in probability by \eqref{ineq_unifhtrain}. The third term goes to zero in probability by our assumption. Therefore \eqref{ineq_nlc_two} holds.
%エラーが出たので一時的に削除 of \eqref{ineq_consisntencytwoterms}
\end{proof}

\subsection{Statistical testing}

%Note: the notation here follows "Hall's note on Rivers-Vuong" paper
This section proposes a statistical hypothesis testing on our CV-based model selection. 
Let 
\[
 \statNS = \frac{\NSSizehalf(\Qempi{1}-\Qempi{2})}{\sigmaemp^2}
\]
be the test statistics that indicates either the first or the second hypothesis is better than the other. Here, $\sigmaemp^2$ is the estimator of the limiting variance $\sigma_0^2$ of $\statNS$. The null hypothesis of the test is 
\[
\hypnull: \Qtruei{1} = \Qtruei{2}.
\]
These are two alternative hypotheses of interest: The first one indicates $\Model{1}$ is better than $\Model{2}$. That is,
\[
\hypone: \Qtruei{1} < \Qtruei{2}
\]
and the second one indicates $\Model{2}$ is better than $\Model{2}$:
\[
\hyptwo: \Qtruei{1} > \Qtruei{2}.
\]
Following \cite{riversvuong2002,hallpelletier2011}, we discuss conditions where the statistics $\statNS$ is asymptotically normal. We first consider the testing statistics in the general case in Section \ref{sssec_dep}. Moreover, we show in the case the dependency among splits are sufficiently small in Section \ref{sssec_nodep}, where the statistics is represented in a much computationally efficient way.

We pose the following assumption on the structure of the weight matrix that is essentially the same as \cite{hallpelletier2011}:
\begin{assumption}{\rm (parameterization of the weight matrix)}
Let $\Wtrue{i}$ depends on a vector nuisance parameter $\tautrue{i}$ and $\tauemp{i}$ is the estimator of $\tautrue{i}$ as $\Wtrue{i} = \Wtrue{i}(\tautrue{i})$ and $\Wemp{i} = \Wemp{i}(\tauemp{i})$.  It is assumed that the nuisance parameter satisfies 
\[
 \NSSizehalf (\tauemp{i}-\tautrue{i}) = - \Astari{i} \NSSizehalfinv \sum_{t \in \NS} Y_t^{(i)} + o_p(1)
\]
for some symmetric matrix of constants $\Astari{i}$ and data-dependent vector $Y_t^{(i)}$, and the weight matrix satisfies
\[
 \NSSizehalf \left( \vech[\Wemp{i}] - \vech[\Wtrue{i}] \right) = \Deltai{i} \NSSizehalf (\tauemp{i}-\tautrue{i}) + o_p(1)
\]
for some matrix of constants $\Deltai{i}$. 
\end{assumption}

To discuss statistical testing, we need to have asymptotic normality property. The following assumption guarantees that the moment is ``well-behaved'' around the optimal value $\thetatruei{i}$.
\begin{assumption}{\rm (regularity condition on derivative)} %Hall 4.3
\begin{itemize}
\item The derivatie matrix $\partial f^{(i)}(v, \theta^{(i)})/\partial {\theta^{(i)}}'$ exists and is continuous on $\Theta^{(i)}$ for each $v$.
\item $\thetatruei{i}$ lies in the interior of $\Theta^{(i)}$.
\item $\Expect[\partial f^{(i)}(v, \thetatruei{i})/\partial {\theta^{(i)}}']$ exists and is finite.
\item $\Expect[\partial f^{(i)}(v, \theta^{(i)})/\partial {\theta^{(i)}}']$ continuous on some neighborhood $N_\epsilon$ of $\thetatruei{i}$.
\item $\sup_{\theta^{(i)} \in N_\epsilon} ||(1/T)\sum_{t=1}^T\partial f^{(i)}(v, \theta^{(i)})/\partial {\theta^{(i)}}' - \Expect[\partial f^{(i)}(v, \theta^{(i)})/\partial {\theta^{(i)}}']|| \inprob 0$.
\end{itemize}
\label{asm_regderiv}
\end{assumption}

For the ease of discussion, we further add the following notation.
Let $\FSi{i}{S} = \NmSSizehalfinv \sum_{t \in \NmS} \{f^{(i)}(v_t,\theta^{(i)})\}$. 
Let $\Gtruei{i} = \Expect[\partial f^{(i)}(v_t,\theta^{(i)})/\partial \theta^{(i)}]$, and its empirical counterpart be $\Gempi{i} = \NSSizeinv \sum_{t \in \NmS} (\partial f^{(i)}(v_t,\theta^{(i)})/\partial \theta^{(i)})$.
Let $\{S_1,\dots,S_{\rCk}\} = \combinationSetS$ be the set of all splits.
We also denote $\theta = (\theta^{(1)}, \theta^{(2)})$, and $\thetatrue$ and $\thetaempS{S}$ are defined in the same way.

\if0
The true variance $\sigma_0^2$ for large sample is decomposed as
\[
\sigma_0^2 = \Rtrue' \Vtrue \Rtrue.
\]
\fi

\subsubsection{General splitting}
\label{sssec_dep}

Then, $\Vtrue$ is the 
\[
\Vtrue = \left(\begin{array}{cccc}
\myC{1, 1} & \myC{1, 2} & \cdots & \myC{1, \rCk} \\
\myC{2, 1} & \myC{2, 2} & \cdots & \myC{2, \rCk} \\
\vdots & \vdots & \ddots & \vdots \\
\myC{\rCk, 1} & \myC{\rCk, 2} & \cdots & \myC{\rCk, \rCk}
\end{array}
\right)
\]
where $\myC{j, j'}$ is a submatrix such that
\begin{align*}
\myC{j, j'} &= \lim_{T \rightarrow \infty} \Cov\left(\sum_{t \in \NmSj{j}} \xi_t(\thetatrue), \sum_{t' \in \NmSj{j'}} \xi_{t'}(\thetatrue) \right) \nn
\xi_t(\theta)      &= 
\left[
f^{(1)}(v_t,\thetai{1})'-\Expect[f^{(1)}(v_t,\thetai{1}]',Y_t^{(1)'},f^{(2)}(v_t,\thetai{2})'-\Expect[f^{(2)}(v_t,\thetai{2})]',Y_t^{(2)'}
\right]'.
\end{align*}
Moreover, %下はjのindex要らないかもしれませんね stationary processだし
\begin{align*}
\Rtrue &= \left[ \Rtruei{1}, \Rtruei{2}, \Rtruei{1}, \Rtruei{2}, \dots , \Rtruei{1}, \Rtruei{2}\right]' \nn
\Rtruei{i} &= \left[
    \begin{array}{c}
      2 \Wtrue{i} \Expect[f^{(1)}(v_t,\theta^{(1)})] \\
      -  \Astari{i} {\Deltai{i}}' B_i' \Expect[f^{(1)}(v_t,\theta^{(1)})] \otimes \Expect[f^{(1)}(v_t,\theta^{(1)})]
    \end{array}
    \right]
\end{align*}
where $B_i$ is the $\dimf{i}^2 \times \dimf{i}(\dimf{i}+1)/2$ matrix such that $\myvec(\Wtrue{i}) = B_i \vech(\Wtrue{i})$.
%vechは下さんかく https://en.wikipedia.org/wiki/Vectorization_(mathematics)
%ここにHallと同様の制限とかを書きます
%メモ: Hall&Inoueはmis-specifiedしたときにはWが定数でパラメトリックレートだと書いてある
%Note that these assumption above is essentially the same as the ones in \cite{hallpelletier2011}.

\begin{assumption}
\begin{enumerate}[label=(\alph*)]
\item 
 Assume that 
$[\FSi{1}{S_1}, \FSi{2}{S_1}, \FSi{1}{S_2}, \FSi{2}{S_2},\dots,\FSi{1}{S_{\rCk}}, \FSi{2}{S_{\rCk}}] \rightarrow N(0, \CovS(\theta))$.
\if0
%これはHP2011には明示的に海底あるけど特にThm1では使わない...
where
\[
\CovS(\theta) = \left(\begin{array}{ccccccc}
\CovSij{1}{1} & \CovSij{1,2}{1} & \CovSij{1}{j,j'} & \CovSij{1,2}{j,j'} & \cdots & \CovSij{1}{j,j'} & \CovSij{1,2}{j,j'} \\
\CovSij{1,2}{j} & \CovSij{2}{j} & \CovSij{1,2}{j',j} & \CovSij{2}{j,j'} & \cdots & \CovSij{1}{j,j'} & \CovSij{1,2}{j,j'} \\
\vdots & \vdots & \ddots & \vdots & \vdots & \ddots & \vdots \\
\CovSij{1}{j,j'} & \CovSij{1,2}{j',j} & \CovSij{1}{j'} & \CovSij{1,2}{j'} & \cdots  & \CovSij{1}{j'} & \CovSij{1,2}{j'} \\
\CovSij{1,2}{j,j'} & \CovSij{2}{j,j'} & \CovSij{1,2}{j'} & \CovSij{2}{j'} & \cdots & \CovSij{1,2}{j'} & \CovSij{2}{j'} 
\end{array}
\right)
\]
\fi
Where $\CovS(\theta)$ is a positive semi-definite matrix of constants. %で各要素のestimatorがこれに収束する
\item rank$\{\Gtruei{i}\} = d$.
\item $\Shalf (\thetaempiS{i}{S} - \theta^{(i)}) = O_p(1)$.
\item The empirical estimator of each $\CovS(\theta)$ converges as $\CovSemp(\thetaempS{S}) \rightarrow \CovS(\thetatrue)$. \end{enumerate}
\label{asm_asymnormaltest}
\end{assumption}

\begin{theorem}{\rm (asymptotic normality of $\statNS$)}
Assume that both models $\Model{1}$ and $\Model{2}$ are misspecified. Assume that Assumption \ref{asm_asymnormaltest} holds. Assume Assumptions \ref{asm_stationary}, \ref{asm_regular}, \ref{asm_ergodicity}, and \ref{asm_regderiv} hold. Assume that the null hypothesis $\hypnull$ holds. Let $\Wtrue{i} = I_{\dimf{i}}$. Then, 
\[
\statNS \rightarrow N(0,1).
\]
\label{thm_hyp_asymnormal}
\end{theorem}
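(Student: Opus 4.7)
The plan is to linearize each $\QempvalidSi{i}{\thetaempiS{i}{S}}$ around the pseudo-true value $\thetatruei{i}$, reducing $\NSSizehalf(\Qempi{1}-\Qempi{2})$ to a linear functional of the vector of centered validation-side sample moments across all $\rCk$ splits. Joint asymptotic normality of that vector (Assumption \ref{asm_asymnormaltest}(a)) then transfers through the linear functional, and consistency of $\sigmaemp^2$ (Assumption \ref{asm_asymnormaltest}(d)) together with Slutsky yields the $N(0,1)$ limit.

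The core computation is the expansion of a single summand. A Taylor expansion of $\QempvalidSi{i}{\cdot}$ at $\thetatruei{i}$ gives
\[
\QempvalidSi{i}{\thetaempiS{i}{S}} - \QempvalidSi{i}{\thetatruei{i}} = \nabla\QempvalidSi{i}{\thetatruei{i}}{}'(\thetaempiS{i}{S}-\thetatruei{i}) + O_p\bigl(\|\thetaempiS{i}{S}-\thetatruei{i}\|^2\bigr).
\]
Since $\thetatruei{i}$ is a minimizer of $Q_0^{(i)}$, the population first-order condition $\Gtruei{i}{}'\Wtrue{i}\Expect[f^{(i)}(v_t,\thetatruei{i})]=0$ forces the sample gradient $\nabla\QempvalidSi{i}{\thetatruei{i}}$ to be $O_p(\NSSizehalfinv)$, so combined with the rate in Assumption \ref{asm_asymnormaltest}(c) both the linear and quadratic terms on the right are $O_p(\NSSizeinv)$ and vanish after multiplying by $\NSSizehalf$. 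A direct quadratic expansion of $\QempvalidSi{i}{\thetatruei{i}} - Q_0^{(i)}(\thetatruei{i})$ then yields
\[
\NSSizehalf\bigl(\QempvalidSi{i}{\thetaempiS{i}{S}} - Q_0^{(i)}(\thetatruei{i})\bigr)
= 2\,\Expect[f^{(i)}(v_t,\thetatruei{i})]'\Wtrue{i}\,\NSSizehalf\bigl(m^{(i)}_{\mathrm{val},S}-\Expect[f^{(i)}(v_t,\thetatruei{i})]\bigr) + o_p(1),
\]
where $m^{(i)}_{\mathrm{val},S} = \NmSSizeinv\sum_{t\in\NmS} f^{(i)}(v_t,\thetatruei{i})$. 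Since $\NSSizehalf/\NmSSizehalf$ is a fixed constant, the right-hand side is a constant multiple of the centered $\FSi{i}{S}$.

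Under $\hypnull$ the deterministic pieces $Q_0^{(1)}(\thetatruei{1})$ and $Q_0^{(2)}(\thetatruei{2})$ cancel in $\Qempi{1}-\Qempi{2}$. Averaging over the $\rCk$ splits and collecting coefficients produces
\[
\NSSizehalf(\Qempi{1}-\Qempi{2}) = \Rtrue'\,\mathbf{F} + o_p(1),
\]
where $\mathbf{F}$ is the stacked vector of centered moments $[\FSi{1}{S_1},\FSi{2}{S_1},\dots,\FSi{1}{S_{\rCk}},\FSi{2}{S_{\rCk}}]'$ and $\Rtrue$ reduces to the $\Rtruei{i}=2\Wtrue{i}\Expect[f^{(i)}]$ piece of Section \ref{sssec_dep} (the $\Astari{i},\Deltai{i}$-component vanishes because $\Wtrue{i}=I_{\dimf{i}}$ is fixed, not estimated). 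Assumption \ref{asm_asymnormaltest}(a) delivers $\mathbf{F}\to N(0,\CovS(\thetatrue))$, and rearranging fold-level covariances into the split-level blocks $\myC{j,j'}$ identifies $\Rtrue'\CovS(\thetatrue)\Rtrue$ with $\Rtrue'\Vtrue\Rtrue = \sigma_0^2$. Combining with Assumption \ref{asm_asymnormaltest}(d) then closes the argument.

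The main obstacle is the cross-split dependence bookkeeping: a given fold $\mNj{j}$ appears as validation in some splits and as training in others, so $\FSi{i}{S_\ell}$ and $\FSi{i'}{S_{\ell'}}$ for $\ell \neq \ell'$ are non-trivially correlated, and recovering the precise block $\myC{j,j'}$ from the fold-level covariance $\CovS(\thetatrue)$ requires careful combinatorial tracking of fold memberships. A secondary technical point is ensuring the Taylor remainder and the nominally vanishing gradient term are uniformly $o_p(1)$ across the finitely many splits, which follows from Assumptions \ref{asm_stationary}, \ref{asm_ergodicity}, and \ref{asm_regderiv}. Otherwise the argument is a CV adaptation of the Rivers--Vuong / Hall--Pelletier normal-limit framework, with the identity weight matrix ensuring that no nuisance-parameter variance contribution enters the limiting covariance.
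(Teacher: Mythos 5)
Your proposal follows essentially the same route as the paper's proof: a first-order expansion of $\QempvalidSi{i}{\thetaempiS{i}{S}}$ around the pseudo-true value, elimination of the gradient term via the first-order condition (you invoke the population condition $\Gtruei{i}{}'\Wtrue{i}\Expect[f^{(i)}(v_t,\thetatruei{i})]=0$ plus a rate argument, the paper the GMM estimator's sample condition --- the same mechanism), reduction under $\hypnull$ and $\Wtrue{i}=I_{\dimf{i}}$ to the linear functional $2\mui{i}{}'\NmSSizehalfinv\sum_{t\in\NmS}[f^{(i)}(v_t,\thetatruei{i})-\mui{i}]$, and then joint normality of the stacked split-level moments plus consistency of $\sigmaemp^2$. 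This matches the paper's argument (an extension of the Hall--Pelletier Theorem~1 to multiple splits), so no further comparison is needed.
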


\begin{remark}
Theorem \ref{thm_hyp_asymnormal} poses the assumption that both models are misspecified. As discussed in \cite{hallpelletier2011}, this assumption is essential: One can check that, under correctly specified models, the distribution of $\statNS$ does not have asymptotic normality. 
\end{remark}

\begin{remark}
As discussed in \cite{HallInoue2003} a constant weight matrix has the best rate of convergence in misspecified models and thus the assumption of identity $\Wtrue{i}$ in Theorem \ref{thm_hyp_asymnormal} is reasonable. 
\end{remark}

\begin{proof}[Proof of Theorem \ref{thm_hyp_asymnormal}]
%基本的にHall-EのThm1の証明を追う
The theorem is an extension of Theorem 1 in \cite{hallpelletier2011} to multiple splitting. 
The mean value theorem applied to $\QempvalidSi{i}{\thetaempiS{i}{S}}$ around $\thetatruei{i}$, we obtain %ここで一応existenceを言うのは要らない気もする
\[
 \QempvalidSi{i}{\thetaempiS{i}{S}} = \QempvalidSi{i}{\thetatruei{i}} + \left\{ \frac{\partial \QempvalidSi{i}{\thetai{i}}}{\partial \thetai{i}} \Biggr|_{\thetai{i} = \thetabariS{i}{S}} \right\}' (\thetaempiS{i}{S} - \thetatruei{i})
\]
where $\thetabariS{i}{S} = \lambda_S \thetatruei{i} + (1 - \lambda_S) \thetaempiS{i}{S}$ for some $\lambda_S \in [0,1]$.
Let 
\[
\Phii{i}(\thetatruei{i}) = 2 \Gtruei{i}(\thetatruei{i})' \Wtrue{i} \Expect[f^{(i)}(v_t,\thetatruei{i})].
\]
From our assumptions, we obtain
\[
 \QempvalidSi{i}{\thetaempiS{i}{S}} = \QempvalidSi{i}{\thetatruei{i}} + \left\{ \frac{\partial \thetaempiS{i}{S}}{\partial \thetai{i}} \right\}' (\thetaempiS{i}{S} - \thetatruei{i}) + o_p(\NmSSizehalfinv),
\]
and thus 
\begin{align}
\NmSSizehalf [\QempvalidSi{1}{\thetaempiS{1}{S}} - \QempvalidSi{2}{\thetaempiS{2}{S}}]
&= \NmSSizehalf [\QempvalidSi{1}{\thetatruei{1}} - \QempvalidSi{2}{\thetatruei{2}}] \nn
&\ +\left\{ \Phii{1}(\thetatruei{1}) \right\}' \Shalf (\thetaempiS{1}{S} - \thetatruei{1}) \nn
&\ -\left\{ \Phii{2}(\thetatruei{2}) \right\}' \Shalf (\thetaempiS{2}{S} - \thetatruei{2}) \nn
&\ +o_p(1).
\label{ineq_asymnorm_allterms}
\end{align}
Note that the GMM estimator minimizes the moment condition, which implies $\Gempi{i}(\thetaempiS{i}{S})' \Wemp{i} (1/\NmSSizeinv) \sum_{t \in \NmS} f^{(i)}(v_t,\thetaempiS{i}{S}) = 0$. This fact implies the third and fourth terms of \eqref{ineq_asymnorm_allterms} vanishes. Namely,
\begin{align}
\NmSSizehalf [\QempvalidSi{1}{\thetaempiS{1}{S}} - \QempvalidSi{2}{\thetaempiS{2}{S}}]
&= \NmSSizehalf [\QempvalidSi{1}{\thetai{1}} - \QempvalidSi{2}{\thetai{2}}] \nn
&\ +o_p(1).
\label{ineq_asymnorm_allterms}
\end{align}
With the choice $\Wtrue{i} = I_{\dimf{i}}$ for the weighting matrix, and by using the symmetry of the moment we obtain
\begin{multline}
\NmSSizehalf [\QempvalidSi{1}{\thetaempiS{1}{S}} - \QempvalidSi{2}{\thetaempiS{2}{S}}]
= \nn 
2\Biggl\{ 
\mui{1}(\thetatruei{1}) \NmSSizehalfinv \sum_{t \in \NmS} [f^{(1)}(v_t,\thetatruei{1})-\mui{1}(\thetatruei{1})]  \nn
- \mui{2}(\thetatruei{2}) \NmSSizehalfinv \sum_{t \in \NmS} [f^{(2)}(v_t,\thetatruei{2})-\mui{2}(\thetatruei{2})]
\Biggr\} + o_p(1),
\end{multline}
which, combined with our assumptions, completes the proof.
%which combined with the fact that $\Qtruei{i} = \sum_S \QempvalidSi{i}{\thetaempiS{i}{S}}$ completes the proof. %多変量ガウシアンのsumでvarを求めるという話
\end{proof}

\subsubsection{When dependency among validation splits is small}
\label{sssec_nodep}

%ここにtestのsplit間の相関がない場合の定式を書きます testのsplit間に相関がない場合は漸近的に独立な変数の足し算になります。
Calculating the asymptotic variance of Theorem \ref{thm_hyp_asymnormal} requires a calculation of a matrix with its size proportional to the number of splits, which in some cases is computationally prohibitive.   
This section consider the case where the dependency between the validation data is  sufficiently small. In such a case, we can circumvent the computation of a large matrix.

In particular, the leave-one-out CV (special case of our CV with $k=1$) when each datapoint is identically and independently distributed (i.i.d), the following assumption holds:
\begin{assumption}
Assume that each validation split $\{\NmSj{j}\}$ is independent and identically distributed.
\if0
\begin{enumerate}[label=(\alph*)]
\item \item Assume that 
$[\FSi{1}{S_j}, \FSi{2}{S_j}] \rightarrow N(0, \CovS_{j}(\theta))$
where
\[
\CovS_{j}(\theta)) = \left(\begin{array}{cc}
\CovSij{1}{j} & \CovSij{1,2}{j} \\
\CovSij{1,2}{j} & \CovSij{2}{j} \\
\end{array}
\right)
\]
is a positive semi-definite matrix of constants.
\item rank$\{\Gtruei{i}\} = d$.
\item $\Thalf (\thetaempiS{i}{S} - \theta^{(i)}) = O_p(1)$.
\item The empirical estimator $\CovSempij{1,2}{j,j'}$ of each $\CovS_{j}(\theta))$ converges
\[
  \CovSempij{1,2}{j,j'}(\thetaempiS{1}{S}, \thetaempiS{2}{S}) \rightarrow \CovSij{1,2}{j,j'}(\theta_0^{(1)}, \theta_0^{(2)}).
\]
\end{enumerate}
\fi
\label{asm_asymnormaltest_loo}
\end{assumption}

%大雑把に: もし各splitのtestデータの間に相関が無い場合は (LOOでかつデータがiid） 各スプリットの間のQ1-Q2がそれぞれnormally distributedなため、1スプリットあたりの分散かけるsqrt(k)で割ればいいはず
\begin{theorem}{\rm (asymptotic normality of $\statNS$, Leave-one-out)}
\label{thm_hyp_asymnormal_loo}
Let assumptions in Theorem \ref{thm_hyp_asymnormal} hold. Let Assumption \ref{asm_asymnormaltest_loo} holds.
Then, the limit variance is written as
\[
\sigma^2 = \sum_{S \in \combinationSetS} \left( \Rtruesingle' \Vtruesingle(S) \Rtruesingle \right)
\] 
where
\begin{align}
 \Rtruesingle &= \left[ \Rtruei{1}, \Rtruei{2}\right]' \nn
\Rtruesinglei{i} &= \left[
    \begin{array}{c}
      2 \Wtrue{i} \Expect[f^{(1)}(v_t,\theta^{(1)})] \\
      -  \Astari{i} {\Deltai{i}}' B_i' \Expect[f^{(1)}(v_t,\theta^{(1)})] \otimes \Expect[f^{(1)}(v_t,\theta^{(1)})]
    \end{array}
    \right] \nn
 \Vtruesingle(S) &= \lim_{T \rightarrow \infty} \Var(\sum_{t \in \NmS} \xi_t)
\end{align}
And The asymptotic normality holds:
\[
\statNS \rightarrow N(0,1).
\]
\end{theorem}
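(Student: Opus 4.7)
The plan is to reuse the machinery of Theorem \ref{thm_hyp_asymnormal} split-by-split, then exploit Assumption \ref{asm_asymnormaltest_loo} to collapse the joint covariance into a block-diagonal form. Since Theorem \ref{thm_hyp_asymnormal} already establishes the per-split linearization, the leave-one-out result should follow by aggregating these linearizations across the $\rCk$ folds and invoking a CLT for an (essentially) independent sum, so most of the work is bookkeeping plus a clean factorization of the asymptotic variance.

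First I would apply the mean value expansion from the proof of Theorem \ref{thm_hyp_asymnormal} to each split $S$ individually. This yields, using $\Wtrue{i}=I_{\dimf{i}}$ and the GMM first-order condition that cancels the $(\thetaempiS{i}{S}-\thetatruei{i})$ terms,
\[
\NmSSizehalf \bigl[\QempvalidSi{1}{\thetaempiS{1}{S}} - \QempvalidSi{2}{\thetaempiS{2}{S}}\bigr] = \NmSSizehalf \bigl[\QempvalidSi{1}{\thetatruei{1}} - \QempvalidSi{2}{\thetatruei{2}}\bigr] + o_p(1).
\]
Averaging over the $\rCk$ splits and multiplying by $\NSSizehalf$ expresses $\NSSizehalf(\Qempi{1}-\Qempi{2})$ as a sum of per-split linear functionals of the centered validation moments, of the form $(\Rtruesingle)'\xi_t$ aggregated over each $\NmS$, plus a negligible remainder. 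This is exactly the single-split version of the linearization, applied $\rCk$ times.

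Next I would translate the independence in Assumption \ref{asm_asymnormaltest_loo} into a vanishing of the off-diagonal blocks $\myC{j,j'}$ ($j\neq j'$) appearing in the general covariance $\Vtrue$. Since each $\NmSj{j}$ is independent and identically distributed across $j$, the cross-split covariances of the score-like vectors $\xi_t$ are zero, so the asymptotic covariance of the concatenated vector $[\FSi{i}{S_1},\ldots,\FSi{i}{S_{\rCk}}]$ is block diagonal with identical diagonal blocks. Consequently the variance of the weighted sum $\NSSizehalf(\Qempi{1}-\Qempi{2})$ is
\[
\sigma^2 = \sum_{S \in \combinationSetS} (\Rtruesingle)'\Vtruesingle(S)\,\Rtruesingle,
\]
giving the formula in the statement. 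Applying a CLT for sums of independent (and, under stationarity, ergodic-stationary within each split) terms then delivers $\statNS \rightarrow N(0,1)$, with $\sigmaemp^2$ consistent for $\sigma^2$ via Assumption \ref{asm_asymnormaltest} (d) restricted to each diagonal block.

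The main obstacle I expect is the subtlety that, even under leave-one-out with i.i.d.\ data, the \emph{training} sets underlying $\thetaempiS{i}{S}$ are heavily overlapping across splits, so the parameter estimators themselves are strongly dependent. My approach sidesteps this by relying on the first-order-condition cancellation from Theorem \ref{thm_hyp_asymnormal}: after the linearization, all $(\thetaempiS{i}{S}-\thetatruei{i})$ contributions are absorbed into $o_p(1)$, and only the centered validation moments remain, whose cross-split independence follows directly from Assumption \ref{asm_asymnormaltest_loo}. The second delicate point is verifying that the $o_p(1)$ remainders accumulated over $\rCk$ splits remain $o_p(1)$ after multiplication by $\NSSizehalf$; this requires that the per-split remainder is in fact $o_p(\rCk^{-1/2})$ uniformly, which is ensured by the rate assumption $\Shalf(\thetaempiS{i}{S}-\thetai{i}) = O_p(1)$ together with compactness of $\Theta^{(i)}$ in Assumption \ref{asm_regular}.
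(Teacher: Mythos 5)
Your proposal follows essentially the same route as the paper: the paper's proof of Theorem \ref{thm_hyp_asymnormal_loo} is a one-line remark stating that the result ``directly follows by following the same steps as Theorem \ref{thm_hyp_asymnormal}'' together with the observation that Assumption \ref{asm_asymnormaltest_loo} forces the off-diagonal blocks $\myC{j,j'}$, $j \ne j'$, of $\Vtrue$ to vanish, which is exactly your split-by-split linearization plus block-diagonalization argument. Your write-up is in fact more explicit than the paper's (notably in flagging the dependence of the overlapping training sets and the accumulation of the $o_p(1)$ remainders, both of which the paper leaves implicit), but the underlying idea and the resulting variance formula are the same.
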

The proof of Theorem \ref{asm_asymnormaltest_loo} directly follows by following the same steps as Theorem \ref{thm_hyp_asymnormal} with additional fact that Assumption \eqref{asm_asymnormaltest_loo} implies the block-diagonal property of $\Vtrue$ as $\myC{i,j} \rightarrow 0$ for $i \ne j$ and the identity of each block.

\section{MONTE-CARLO EXPERIMENTS IN LINEAR MODEL}\label{3-simulation1-regression}
In this section we present a simple simulation of instrumental variables (IV) regression models to illustrate the consistency of our cross-validation algorithm of model selection. This example also highlights how GMM-minimand-based model comparison and cross-validation can exhibit different results. The setting is similar to the one on \cite{hallpelletier2007}. Suppose the true data generating process is
\begin{align*}
\boldsymbol{y} =  X_{1} \boldsymbol{\beta^1} + X_{2} \boldsymbol{\beta^2}+ Z_2 \boldsymbol{\alpha} +\boldsymbol{\epsilon},
\end{align*}
where 
$\boldsymbol{y}$ is a $T\times 1$ vector and $X_{1}$ and $X_{2}$ are $T\times p_1$ and $T\times p_2$ matrix respectively. $X_1$ and $X_2$ are generated from instrumental variables as
\begin{align*}
X_1 &= Z_1 \boldsymbol{\delta^1} +\boldsymbol{\xi}^1,\\
X_2 &= Z_2 \boldsymbol{\delta^2} +\boldsymbol{\xi}^2,
\end{align*}
where $Z_1$ and $Z_2$ are $T\times c_1$ and $T\times c_2$ matrix respectively.

We consider a case where we have two candidate models to compare. The first model exploits the explanatory variables $X_1$ and instrumental variables $Z_1$.  
\begin{align*}
\mathcal{M}^{1}: &\boldsymbol{y}=X_1 \boldsymbol{\beta}+\boldsymbol{\epsilon}^1,\\ &\mathbb{E}[Z_1' \boldsymbol{\epsilon}^1 ]=0,
\end{align*}
whereas the second model employs $X_2$ and $Z_2$;
\begin{align*}
\mathcal{M}^{2}: &\boldsymbol{y}=X_2 \boldsymbol{\beta}+\boldsymbol{\epsilon}^2,\\ &\mathbb{E}[Z_2' \boldsymbol{\epsilon}^2 ]=0.
\end{align*}

Each model has different explanatory variables as well as the set of instrumental variables so that two models are non-nested. In addition, there are two important differences between the two candidates. First, the second model can be "misspecified" when $\alpha\neq 0$, because the instrumental variables $Z_2$ influences $\boldsymbol y$ directly and thus IVs are not independent from $\boldsymbol \epsilon^2$. When $\alpha>0$ and does not decrease with the number of observations, i.e. $\alpha=10$, it is globally misspecified, which results in inconsistent estimates of the parameters.
%When $\alpha$ converges to zero as the number of observation increases, i.e. $\alpha=\frac{10}{N}$, the model is "locally misspecified" (Hall (2002)).

The second difference is that the number of the variables. In the following, we assume that $p^1\leq p^2$, meaning that the second model has a larger number of explanatory variables. As discussed earlier, this may cause "over-fitting" issue to the estimation even if the model is falsely specified. In such a case, previous literature proposes the ways to penalize the model by the number of parameters (\cite{andrews1999}). We compare the performance of the proposed method with the ones of those existing methods in the later section.

Though this example may seem to be somewhat arbitrary, similar problems arise in many situations when econometric models are compared. Specifically, one model can be flexible (or even "over flexible") but misspecified, while the other is simpler but accurate. Some researchers may not value the simplicity, but they would prefer a "correctly specified" model than misspecified models. For example, think of a case where economists try to explain wage from education and other variables, where education is endogenous and has to be proxied by IVs. The misspecified model includes incorrect IVs that gives bias to the estimate of the coefficient. Even if one model exhibits a good fit to the data, if the coefficient of interest is not properly estimated, such a model does not serve well for labor economists. In those occasions, our algorithm serves to help researchers to find the most "correct" model. Our method is general enough so that any specification can be compared. 

\subsection{Results}
First we consider the case where over-fitting is a concern as the misspecified model has more parameters therefore could exhibit better fit to the data. We compare our methodology in this case to the model selection procedures proposed by \cite{andrews1999} as well as simple GMM comparison as in the previous section. \cite{andrews1999} defines GMM-AIC and GMM-BIC criterion as
\begin{align*}
\text{GMM-AIC: } &T Q^{(i)}_T (\theta_T^{i})-2(|c^{i}|-p^{i});\\
\text{GMM-BIC: } &T Q^{(i)}_T (\theta_T^{i})-(|c^{i}|-p^{i})ln T,
\end{align*}
for $i=1,2$. The procedure chooses the model that exhibits smaller value of the criterion.

Figure \ref{fig: overfit} shows the empirical probability of choosing the correctly specified model by cross-validation. One can see that, even when the model 2 has larger number of variables, it chooses the model 1 with very high chance even when the data is limited. When the bias parameter of the model 2 $\alpha$ is as large as $12.$, it selects the first model with probability $91.2\%$ even when the data size is only $100$ and the second model has $9$ variables compared to $3$ of the first model.
 
On the other hand, GMM based model selection performs extremely poorly when the misspecified model has much more variables than the first model. When $p^2=9$, even with data size $1600$ the accuracy is as bad as $59.1\%$, only slightly above chance level of $50\%$ (when $\alpha=12.$). With data size $200$, it chooses the second model only for $15.7\%$, clearly indicating it is subject to over-fitting.

Note that in our setting, GMM-AIC and GMM-BIC exhibit exactly same choice of models as simple GMM based selection. This is due to the unbalance of two terms in the criterion. In our case, the first term is typically on order of more than $10^5$, while the second term is no greater than $10^2$. Many factors influence the magnitude of the first term, such as the choice of weighting matrix or number of moment conditions. Our result suggests that while cross validation robustly performs in many situations, performance of GMM based model selection is sensitive to those settings.
 
We turn to the case where the two models have the same number of parameters, while the second model is misspecified. As the number of parameters is the same across two models, note that GMM, GMM-BIC, and GMM-AIC simply choose the model with smaller GMM minimand. Figure \ref{fig: p1eqp2_global} compares the performance of cross-validation algorithm and the GMM minimand based model selection when the second model is globally misspecified. The $y$-axis shows the probability that the correctly specified model is chosen by each algorithm. The result indicates that when overfitting is not a concern, GMM based model selection performs slightly better than cross validation, especially when the data is smaller.

\section{NONLINEAR EXPERIMENT: COLLUSION DETECTION}\label{4-simulation2-collusion}
In this section, we demonstrate another Monte-Carlo study to show how our algorithm works in a structural estimation incorporating nonlinear and non-nested models. Specifically, we simulate and estimate a variant of a price collusion model suggested by Bresnahan (1987). The goal of our model selection procedure is to detect whether the firms are colluded, or determining the price competitively using the share and price data. The underlying idea is that the prices of the products of colluded firms are determined to maximize the joint profit, while the competitive price should maximizes the profit of individual firms. Therefore, given the same (true) parameters in demand and cost function, the pricing pattern varies according to the collusive structure. A methodology to study whether collusive behavior exists within a certain industry is by itself an important research topic because ignoring the possibility of collusive pricing may lead to a biased inference of cost estimation, which could be a critical problem for policy implication in applications such as merger analysis.

In the same way as the previous section, we compare the performance of CV-based algorithm to GMM-minimand-based algorithm based on the theory of \cite{riversvuong2002}. Note that since the number of parameters in a model does not vary across collusive structure,  AIC or BIC adjustment does not influence the model selection criteria. We show that in a realistic sample size, CV performs better than in-sample comparison in many cases. 

The shares and prices are simulated from a standard logit demand system and static pricing. We simulate data assuming a certain collusive structure. Then we test if and how often CV algorithm can discover the assumed collusive structure. The estimation process is similar to \cite{hu2014collusion}. 

\subsection{Model}
Assume each firm produces a single product and denote them as $j=1,...,J$. The markets are denoted as $t=1,...,T$. The demand is assumed to be a simple logit demand specification: the utility of a consumer $i$ purchasing a product $j$ in a market $t$ is expressed as 
\begin{align*}
u_{ijt}=X_{jt} \beta+\alpha p_{jt}+\xi_{jt}+\epsilon_{ijt},
\end{align*}
where $X_{jt}$ is the observed characteristics that influence the demand and $\xi_{jt}$ is the unobserved utility shock . Assuming $\epsilon_{ijt}$ follows i.i.d type-I extreme value distribution, the share function is 
\begin{align*}
D_{jt}(\mathbf{p}_t)=\frac{\exp(X_{j} \beta+\alpha p_{jt}+\xi_{jt})}{\sum_{j'=1}^J\exp(X_{j't} \beta+\alpha p_{j't}+\xi_{j't})}M_t,
\end{align*}
where $\mathbf{p}_t=\{p_{jt}\}_{j=1,...,J}$ is the vectorized prices and $M_t$ is the market size which is known to the researcher. For simplicity, we do not allow random-coefficients (\cite{berry1995automobile}) as typically done in applications. 

Firms' marginal cost is expressed as
\begin{align*}
MC_{jt}=Y_{jt} \gamma +\lambda_{jt}
\end{align*}
,where $Y_{jt}$ is the observed characteristics that affect the marginal cost, and $\lambda_{jt}$ is the i.i.d cost shocks. The profit of each product is
\begin{align*}
\pi_{jt}(\mathbf{p}_t)=(p_{jt}-MC_{jt})D_{jt}(\mathbf{p}_t).
\end{align*}

\begin{comment}
Let $H$ be a $J$ by $J$ matrix that represents the collusion status where
\begin{align*}
H_{jr}=
\begin{cases}
1 & \text{if product $j$ and $r$ are cooperating} \\
0 & \text{otherwise.}
\end{cases}
\end{align*}
\end{comment}

We assume that colluded firms jointly maximize their net profit, sum of $\pi_{jt}$ over $j$ in a group. Define $\boldsymbol{\Delta}$  as a $J\times J$ matrix of price elasticity of colluded products where the $(j,r)$th element is
\begin{align*}
\boldsymbol{\Delta}_{jr}=
\begin{cases}
-\pdif{D_r}{p_j} & \text{if $j$ and $r$ are colluded}  \\ %H_{jr}=1
0 & \text{otherwise} .%H_{jr}=0. 
\end{cases}
\end{align*}
By solving the first order conditions, the equilibrium prices are determined to satisfy
\begin{align*}
\mathbf{p}_t = \left( \boldsymbol{\Delta} \right)^{-1} \mathbf{D}_t - \mathbf{MC}_t,
\end{align*}
where $\mathbf{D}_t$ and $\mathbf{MC}_t$  are a vectorized representation of $D_{jt}(\mathbf{P}_t)$ and $\{ MC_{jt} \}_{j=1,..,J}$ respectively.

\subsection{Estimation and Model Selection}
The parameter estimation under each model follows a standard GMM procedure with instrumental variables. Let $Z$ be instrumental variables that influence the price but are not correlated with the unobserved shocks $\xi$ and $\lambda$. Given a model, the parameters are chosen to minimize the GMM objective defined from the moment condition
\begin{align*}
\Expect [\xi Z]=0\\
\Expect [\lambda Z]=0.
\end{align*}

The instrumental variables $Z$ include (i) own characteristics, (ii) square of own characteristics, (iii) mean of characteristics in a market, and (iv) square of mean characteristics in a market. The weighting matrix is set to be $W=(Z'Z)^{-1}$.

The candidate models are represented as partitions of firms into price-colluded groups. For instance, if the number of firms is two ($j=1,2$), the possible models are either competitive ($\{\{1\},\{2\}\}$) or collusive ($\{\{1,2\}\}$). If three firms (j=1,2,3), possible models are $\{ \{1\},\{2\},\{3\} \}$ (all competitive), $\{ \{1\}, \{2,3\} \}$, $\{ \{1,2\}, \{3\} \}$, $\{ \{1, 3\}, \{2\} \}$, and $\{ \{1,2,3\} \}$ (all colluded).

\subsection{Simulation Results}

%Also, We test two types of weighting matrix. One is the 2SLS weight matrix $W=(Z'Z)^{-1}$ (コードをチェック), and the other is 2step optimal weighting matrix where the first step is $W=(Z'Z)^{-1}$. 
%シミュレーションのパラメータセッティングをどこかに書かないといけない。Appendixにするか。
We consider different number of observed markets, $T=\{25,50,75,100\}$, realistic numbers for real world application\footnote{For instance, \cite{nevo2001measuring} observes $94$ independent markets.}. We also vary the true value of price coefficient to test the performance with different difficulty of model selection. Along with the data size, the difficulty of model selection depends on how different the observed data would be across different models. In this particular example, the key difference between models is generated from cross price elasticity. When the cross price elasticity is low, competitive price and colluded price do not differ as much, which makes it harder to find the true model. In logit-demand, the cross price elasticity is calculated by multiplying the share of the two products. Thus, lower price coefficient generally makes model selection easer as it increases the realized share, and the cross price elasticity as a result. For each setting, we generate 100 synthetic dataset and perform the model selection in each.

Table \ref{tab:CD_Score_CV} reports the mean and standard deviation of CV score across true models and candidate models with the price coefficient equals to $-.1$ and $-.3$. The second column represents the true partition of firms, and the third to seventh are the results corresponding to each candidate model. The CV score of the true model is on average smaller than the mis-specified models in any specification. Also, the standard deviation of the score is smaller for the true model. Both mean and standard deviation of the true model decline in the number of observations.

We report the probability that each candidate model is chosen by our algorithm in table \ref{tab:CD_ChoiceProb_CV}. In each setting, the probability to find the true model increases in the number of markets, which corresponds to our theoretical finding in section 2. For comparison, Table \ref{tab:CD_ChoiceProb_GMM} presents the same for GMM-minimiand comparison. 

%Figure \ref{fig:CD_graph} compares the performance of our model selection to a simple in-sample GMM fit comparison under different price coefficient. It shows that our CV algorithm generally performs better than in-sample comparison except for cases where the number of observations is limited ($T=25$). The difference is particularly large when the true model is partially colluded (second column). As seen in Table \ref{tab:CD_ChoiceProb_GMM}, GMM comparison tends to select all-competitive model in such a case. %ここに解釈が欲しいところだが…
Figure \ref{fig:CD_graph} compares the performance of our model selection to a simple in-sample GMM fit comparison under different price coefficient. It shows that our CV algorithm generally performs better than in-sample comparison. The difference is particularly large when the true model is partially colluded (second column). As seen in Table \ref{tab:CD_ChoiceProb_GMM}, GMM comparison tends to select all-competitive model in such a case. %01/21 グラフ変えた。表も変える必要があるが後で…

%The difference is clearer when the weighting matrix $W$... The sensitivity to weighting matrix can be a critical issue in some applications as estimation of the weighting matrix is generally difficult (). The result indicates that our algorithm is robust to a choice of weighting matrix. Weighting matrixはGMM fitを最小化するので、optimalにしようとすればするほど実はoverfitを起こしている。これはAIC-GMMやBIC-GMMでも問題とされていないが、実は影響大である。

\begin{comment}
\subsubsection*{Comparison of Mis-specified Models}
For many reasons, model specification can be not perfect. It is important that the model selection procedure can choose a "relatively" more realistic model among the imperfect alternatives. We present simulation result when the candidate models are all mis-specified to some extent.
モデルのMis-specificationがひどいほど、CVの方がGMMより有利になるっぽい。

\subsubsection*{Model Selection with Just or Under Identification}
Moment conditionの数がパラメータの数に足りない場合でも、CVだとモデル選択が出来ますという話。

\end{comment}

\section{CROSS-VALIDATION APPROACH TO MPEC Estimation}\label{5-MPEC-theory}
In this section, we propose a method to apply cross-validation algorithm when estimation is based on Mathematical Programming of Equilibrium Constraint (MPEC)  approach proposed by \cite{su2012constrained}. MPEC approach formulates the estimation as an optimization problem with constraints: The variables of the optimization consists of structural parameters as well as endogenous latent economic variables, and the constraints among the variables represent the equilibrium condition that the economic model requires. 

The application of the cross validation procedure to MPEC estimation is not straightforward: If parameters estimated from training data is substituted in a MPEC model with test data directly, the constraints would be not satisfied in general. In such a case, we cannot directly compare GMM objective on test data across models since we also have to consider the violation of constraints as indication of model misfit.

Taking the above discussion into consideration, we propose a modified cross validation procedure. We differentiate the choice variables for the optimization problem into two categories: model variables and observation-specific variables. Model variables are specific to the model, therefore shared across training and test data. Observation-specific variables are latent variables defined on each observation. For instance, in BLP demand estimation example on \cite{dube2012improving}, the price elasticity is a parameter assumed to be constant across observations, thus treated as a model variable. Meanwhile, the unobserved utility shock ($\xi_{jt}$ in their notation) is defined for each datapoint, thus regarded as observation specific.

Our modification is simple. In training data, we jointly choose the model variables and observation-specific variables to optimize the GMM objective function with equilibrium constraints. In test data, we still solve a constrained optimization problem, but only with respect to observation-specific variables while the model variables are set to the estimates from training data. The algorithm is described in detail below and summarized in Algorithm \ref{alg_cv_gmm_mpec}.

\subsection{GMM-MPEC}
We first outline the MPEC formulation of parameter estimation. Here we follow the notation of \cite{su2012constrained} except that we allow some endogenous variables to be observation-specific. Suppose an econometric model $\mathcal{M}_i$ is expressed with the parameter vector $\theta$, a vector of endogenous variables $\sigma$, and  endogenous variables that are observation-specific $ \eta$, and the equilibrium constraint $h(\theta,\sigma,\eta)=0$. In MPEC formulation, each model is characterized by a set of moment conditions with equilibrium constraints:
\begin{align*}
\mathcal{M}_i \Rightarrow &E[f^{(i)}(v_t,\theta_0^{(i)},\sigma_0^{(i)},\eta_{0}^{(i)})]=0\\
&\text{s.t.}\\
&h^{(i)}(\theta_0^{(i)},\sigma_0^{(i)},\eta_{0}^{(i)})=0.
\end{align*}

Given the observation $\{v_t\}_{t=1,,,T}$, the parameters of each model are estimated via MPEC:
\begin{align}
(\theta_T^{(i)},\sigma_T^{(i)},\eta_T^{(i)})=&\argmin_{\theta^{(i)},\sigma^{(i)},\eta^{(i)} } Q_T^{(i)}(\theta^{(i)},\sigma^{(i)},\eta^{(i)})\\
&\text{s.t.}\\
&h^{(i)}(\theta^{(i)},\sigma^{(i)},\eta^{(i)})=0.
\label{ineq_gmmmpecargmin}
\end{align}
where 
\begin{align*}
&Q_T^{(i)}(\theta^{(i)},\sigma^{(i)},\eta^{(i)})\\&=\left\{ \frac{1}{T} \sum_{t=1}^T f^{(i)}(v_t,\theta^{(i)},\sigma^{(i)},\eta^{(i)}) \right\}' W_T^{(i)} \left\{ \frac{1}{T} \sum_{t=1}^T f^{(i)}(v_t,\theta^{(i)},\sigma^{(i)},\eta^{(i)}) \right\}. 
\end{align*}

%記号は後で一貫性を保つようになおします・・・
Let $\theta_T^{(i), \GMMMPEC}$ be the parameters that are solution of Eq. \eqref{ineq_gmmmpecargmin}, and let $\theta_T^{(i), \GMM}$ be the solution of standard GMM (i.e., Eq. \eqref{eq_gmmargmin}). Moreover, let 
 \begin{align}
V_T^{(i), \GMMMPEC}(\theta) =
&\min_{\sigma^{(i)},\eta^{(i)} } Q_T^{(i)}(\theta, \sigma^{(i)},\eta^{(i)})\\
&\text{s.t.}\\
&h^{(i)}(\theta, \sigma^{(i)},\eta^{(i)})=0,
\end{align}
and $V_T^{(i), \GMM}(\theta) = Q_T^{(i)}(\theta)$.
The equivalence of GMM and GMM-MPEC implies 
\begin{align}
\theta_T^{(i), \GMMMPEC} &= \theta_T^{(i), \GMM} \nn
V_T^{(i), \GMMMPEC}(\theta) &= V_T^{(i), \GMM}(\theta).
\label{eq_gmm_mpec_equiv}
\end{align}

\subsection{Cross-Validation in GMM-MPEC Approach}
\begin{comment}
parameterとデータだけで決まるconstraintがあると出来ないかもしれないので、あるobservation specific parametersが存在しconstraintsが満たされることを仮定する。
\begin{assumption}
For any $(\theta^{(i)},\sigma^{(i)})$ on support, there exists $\eta^{(i)}$ such that $h^{(i)}(\theta^{(i)},\sigma^{(i)},\eta^{(i)},v)=0$.
\end{assumption}
\end{comment}

We split the observations in the same way as section 2.
The moment on the datapoints $S$ is
\begin{align*}
&Q_S^{(i)}(\theta^{(i)},\sigma^{(i)},\eta^{(i)})\\
&=\left\{ \frac{1}{|\mathcal{N}_S|} \sum_{t \in \mathcal{N}_S} f^{(i)}(v_t,\theta^{(i)},\sigma^{(i)},\eta^{(i)}) \right\}' W_S^{(i)} \left\{ \frac{1}{|\mathcal{N}_S|} \sum_{t \in \mathcal{N}_S} f^{(i)}(v_t,\theta^{(i)},\sigma^{(i)},\eta^{(i)}) \right\}.
\end{align*}
We train the model to minimize the moment under equilibrium constraint. The trained model is denoted as
\begin{align*}
(\theta_S^{(i)},\sigma_S^{(i)},\eta_S^{(i)})=&\argmin_{\theta^{(i)},\sigma^{(i)},\eta^{(i)} } Q_S^{(i)}(\theta^{(i)})\\
&\text{s.t.}\\
&h^{(i)}(\theta^{(i)},\sigma^{(i)},\eta^{(i)})=0.
\end{align*}

Once the model is trained, it is validated by the rest of datapoints. %ここからがMPECでの違い。
Instead of simply evaluating the GMM objective in the validation data at the trained model parameters, observation-specific endogenous variables need to be chosen so that the equilibrium constraints are satisfied. We do so by minimizing the GMM objective subject to equilibrium constraints with respect to $\eta$ only, while model parameters are fixed at trained value. Formally,
\begin{align*}
&Q_{S,\mathrm{valid}}^{(i)}=\\
&\argmin_{\eta^{(i)}}
\left\{ \frac{1}{|\mathcal{N}_{\setminus S}|} \sum_{t \in \mathcal{N}_{\setminus S}} f^{(i)}(v_t,\theta_S^{(i)},\sigma_S^{(i)},\eta^{(i)}) \right\}' W_S^{(i)} \left\{ \frac{1}{|\mathcal{N}_{\setminus S}|} \sum_{t \in \mathcal{N}_{\setminus S}} f^{(i)}(v_t,\theta_S^{(i)},\sigma_S^{(i)},\eta^{(i)}) \right\}\\
&\text{s.t.}\\
&h^{(i)}(\theta_S^{(i)},\sigma_S^{(i)},\eta^{(i)})=0.
\end{align*}

The averaged validation score of each model
\begin{align*}
 Q_{\mathrm{valid}}^{(i)} = \frac{1}{{}_r C _k} \sum_{S \subset \{1,2,\dots,r\}: |S| = r-k} Q_{S, \mathrm{valid}}^{(i)}
\end{align*}
is compared and the model of smaller averaged validation score is selected. 

\begin{remark}{\rm (consistency of GMM-MPEC)}
From \eqref{eq_gmm_mpec_equiv} and Theorem \ref{thm_nonlocal_consistency}, the consistency of GMM-MPEC with the same assumption on the moment directly follows.
\end{remark}

%（ELは後ろで使わないことになったから優先度低。）
\begin{comment}
\subsection{EL-MPEC}
A growing literature suggests that empirical likelihood (EL) estimation achieves better performance than GMM such as higher order efficiency \cite{newey2004higher}. 
formulated in MPEC format with moment conditions can be estimated by empirical likelihood instead (\cite{conlon2013empirical}). In the filed of industrial organization, Conlon recently suggests to estimate BLP demand model via EL-MPEC instead of GMM.

Cross-validation can be adopted into EL-MPEC pretty much the same way as GMM-MPEC by treating the weight parameter (名前？) as observation-specific endogenous variables.

MPEC-EL論文からセッティングをここに写す. Let $\{\rho_t\}$ be the weighting. 

The training of the model on a split $S$ is
\begin{align*}
&(\theta_S^{(i)},\sigma_S^{(i)},\eta_S^{(i)},\{\rho_t\}_{t\in \mathcal{N}_S})
=
\argmax_{\theta^{(i)},\sigma^{(i)},\eta^{(i)} } \sum_{t\in \mathcal{N}_S} \log (\rho_t)\\
&\text{s.t.}\\
&h^{(i)}(\theta^{(i)},\sigma^{(i)},\eta^{(i)})=0.\\
&\sum_{t\in \mathcal{N}_S} \rho_t f^{(i)}(v_t,\theta^{(i)},\sigma^{(i)},\eta^{(i)})=0\\
&\sum_{t\in \mathcal{N}_S} \rho_t = 1.
\end{align*}

The EL-MPEC validation score defined for a split $S$ is
\begin{align*}
Q_S^{(i)} =&\max_{\eta^{(i)}, \{\rho_t\}_{t\in \mathcal{N}_{\setminus S}} } \sum_{t\in \mathcal{N}_{\setminus S} } \log (\rho_t)\\
&\text{s.t.}\\
&h^{(i)}(\theta^{(i)},\sigma^{(i)},\eta^{(i)})=0.\\
&\sum_{t\in \mathcal{N}_{\setminus S} } \rho_t f^{(i)}(v_t,\theta^{(i)},\sigma^{(i)},\eta^{(i)})=0\\
&\sum_{t\in \mathcal{N}_{\setminus S} } \rho_t = 1.
\end{align*}

\end{comment}

\section{APPLICATION: DYNAMIC DEMAND AND DYNAMIC PRICING MODEL ON ONLINE RETAILER DATA}\label{6-simulation3-dynamicpricing}

In this section, we perform our model selection procedure in a structural model with a real-world dataset. The models we compare are dynamic and static demand and pricing model that are taken from \cite{conlon2012dynamic}. In particular, we first apply our cross-validation algorithm to test either the state-of-the-art dynamic demand model (\cite{gowrisankaran2012dynamics}) or the traditional static demand model (\cite{berry1995automobile}) has stronger explanatory power in the consumer behavior. To this aim, we use monthly sales and price data of an online-retail shop. Furthermore, we consider supply side dynamics of pricing that takes the seasonality and consumer skimming into consideration such as \cite{nair2007intertemporal}: We investigate whether or not such a model explains the observed pricing pattern better than traditional static profit maximization model that is based on the consumer model selected in the previous step.

%この前のパラグラフにすでにdynamic demand modelが出てきているのにここで説明するのが多少違和感（決定的に問題あるわけではないけど）
%demandとsupply両方dynamicsが大事、という文章を足したので少し改善？
Structural estimation of a dynamic model has been an important frontier in industrial organization, both on demand side and supply side. On demand side, dynamic model of consumer behavior has been widely applied by researchers recently (\cite{gowrisankaran2012dynamics}). The underlying idea in the dynamic demand model is that consumers are forward-looking regarding the changes in the market such as price and make a dynamic decision by considering the future market state. Such a model is justified by the fact that important parameters such as price elasticity could be severely mis-estimated by ignoring the forward-looking behavior of consumers. Meanwhile, similar mis-estimation would occur if a researcher applies a dynamic model in the case the consumers are in fact myopic. From a market level data, it is not directly visible if consumers are forward-looking or myopic. 
%"動的なモデルを静的な市場に応用するとパラメータ推定に失敗する"は本当だろうか?

%dynamicなものがいくつかあるので誤解を招きそうではある あとここの後半以降は構成がstructuredでない・・・>ちょっと直した

Contrary to the demand side, dynamic pricing in supply side has a long history of theoretical studies dating back to \cite{coase1972durability}. Nevertheless, little empirical attention is paid until recent years (\cite{nair2007intertemporal}). Under certain conditions, firms have the incentive to determine current price by taking its effect on the future profit into consideration. For example, when consumers are heterogeneous in an evaluation of a product, firms are motivated to "skim" high-evaluation consumers in earlier periods by setting a high price and later lower it. With myopic consumers (as in \cite{luo2015operating}), the pricing decision boils down to a dynamic programming of a firm in the case of monopoly or a dynamic game between firms in the case of oligopoly. If the consumers are also forward-looking, the pricing boils down to a dynamic game between consumers and firms as studied in \cite{nair2007intertemporal}. In this case, the observed price and demand are interpreted as a result of dynamic equilibrium. 

It is not straightforward to infer if the pricing is dynamic or not from the market level data. A declining tendency on the price does not always indicate that firms are making pricing decision dynamically: If the consumers are heterogeneous in either product evaluation or price sensitivity and leave market after purchase, a myopic optimal price may be decreasing in periods since the remaining consumers are more price elastic. 

Applying dynamic pricing model to data generated from myopic pricing would cause a significant bias in the estimates of supply-side parameters such as marginal cost. For instance, a dynamic pricing model may interpret an observed high price in a certain period as a firm sparing some demand for the future, while it is a result of high marginal cost in truth. Therefore, estimation of supply-side model parameters such as marginal cost requires researchers to know if firms are myopic or forward-looking. 

%まとめ
As it is important to correctly specify the dynamic feature of the agent's decision making both on demand and supply side, researchers are encouraged to verify whether the decision making is static or dynamic from the data rather than appealing to intuition, desirably based on real-world datasets. Regarding this aspect, we demonstrate our cross-validation algorithm to compare two by two alternative models; dynamic or myopic consumers, and dynamic or myopic firms. The models are estimated via GMM-MPEC.  We take a simple dynamic model from \cite{conlon2012dynamic}. %もうちょっとモデルの解説を足す?

We perform estimation and model selection on a dataset of price and sales of an online-retailer based in UK. The data is taken from the University of California Irvine (UCI) Machine Learning Repository (henceforth, UCI). UCI repository consists of more than 300 datasets. The data used in this study is available here at \url{https://archive.ics.uci.edu/ml/datasets/Online+Retail} free of charge. We consider the use of such a publicly available dataset increases a reproducibility of a research process.
In machine learning field, researchers are encouraged to compare the performance of a newly proposed model or algorithm to old ones with a publicly available dataset, and the UCI repository is widely used in this aim. %経済学者も、新しいモデルを作る場合にはこのようなデータセットでCVをして性能比較をすることが望ましい。（ここは結構強いことを言っているので、再現性に欠ける例みたいなのがあるといいかも）>その辺の話はイントロに書きたい気がしてきた。

%EL-MPEC reduces computational burden significantly as it avoids any fixed-point algorithm. Without MPEC, estimating dynamic games between firms and consumers requires nesting fixed point algorithm of demand side and supply side, which takes significant computational time. Also, MPEC format reduces cost of coding substantially if a researcher utilizes modeling language such as AMPL. The model selection procedure for EL-MPEC is based on the previous section. The code is available at the author's GitHub ().

\subsection{Models}
We consider models of 2 by 2 design: static or dynamic demand, static or dynamic pricing. We denote each model as $m\in\{1,2,3,4\}$, where $m=1,2$ assume static demand, $m=3,4$ assume dynamic demand, $m=1,3$ assume static pricing, and $m=2,4$ assume dynamic pricing. For simplicity, we assume that the firm and consumers make their purchase decision independently across products. It is entirely possible to test if this assumption is valid or not using our CV algorithm, but we omit it as the main purpose of this section is an illustration of model selection procedure.
%(時間あったらそれの比較もやる. やったら後ろにsubsectionをつくって足す). 
The consumers are heterogeneous in price sensitivity and the constant term of utility as in random coefficients model. We assume that consumers make a purchase at most once for each product within the considered period. This assumption is justified by the transaction level data. Among all the transactions used in the data, 75.8\% of them are made by consumers who purchased the same product only once in the considered period. An alternative approach is to model repeated purchase and inventory behavior explicitly as in \cite{hendel2006measuring}, but we do not take this path for tractability.

\begin{comment}
In {\it static demand} model, in each period the consumers decide whether to purchase the product or not by simply comparing the instant utility to the outside good, of which the mean utility is normalized to zero. In dynamic demand model, the consumers compare the utility to the future value.

In static pricing model, the firm determines the price to maximize the profit of each period. In dynamic pricing model,雑な説明.
\end{comment}

\subsubsection{Demand Model}
In each period, consumers in the market decide whether to purchase a product or not to maximize their objective function. If the demand is assumed to be static, the objective function is simply the utility function defined below. If the demand is dynamic, the objective function is the infinite-period sum of discounted utility. 

Denote products as $j=1,...,J$ and period as $t=1,...,T$. Consumer $i$'s utility of purchasing a product $j$ at period $t$ is
\begin{align*}
u_{ijt}&=\alpha^p_i p_{jt}+\alpha^0_{ij}+\mathbf{X}_{jt}\boldsymbol{\alpha}^x+\xi_{jt}+\epsilon_{ijt}.\\
& \equiv \delta_{ijt}+\epsilon_{ijt}
\end{align*}
where $p_{jt}$ is the price of a product $j$ in period $t$, $\mathbf{X}_{jt}$ is the observable characteristics, and $\xi_{jt}$ is the i.i.d preference shock, which enters the moment conditions. $\epsilon_{ijt}$ is the logit error term that  follows type-I extreme value distribution and i.i.d across periods and products. The utility of not purchasing is $u_{i0t}=\epsilon_{i0t}$ as the non-random component is normalized to be zero. The random coefficients follow a normal distribution.
\begin{align*}
\alpha^p_i = \alpha^p + \nu^p_{i} \rho^p\\
\alpha^0_i = \alpha^0 +\nu^0_{i} \rho^0
\end{align*}
,where $(\alpha^p,\alpha^0)$ are the population mean of the utility coefficients, $\nu^p_{i}$ and $\nu^0_{i}$ are draws from a standard normal distribution, and $(\rho^p,\rho^0)$ are the standard deviation of the distribution of the random coefficients.

In the static demand model, the consumers simply compare the utility of purchase to non-purchase in each period. Thus the purchase probability is 
\begin{align*}
s_{ijt}^{m} = \frac{\exp(\delta_{ijt})}{\exp(\delta_{ijt})+1}\\
\text{for } m=1,2.
\end{align*}

In the dynamic demand model, the consumers make purchase decision by comparing the instant utility to the value of waiting until next period. Let $\Omega_{ijt}^d$ be a state space for a consumer $i$ on product $j$ at period $t$ and $W_{ij}(\Omega_{ijt}^c)$ be a value function associated to the state. The Bellman equation is expressed as
\begin{align*}
W_{ij}(\Omega_{ijt}^d)=\max\{ u_{ijt},u_{i0t}+\beta\mathbb{E}[W_{ij}(\Omega_{ijt+1}^d) |\Omega_{ijt}^d] \}.  
\end{align*}

The purchase probability of product $j$ of a consumer $i$ at period $t$ is
\begin{align*}
s_{ijt}^{m}(\Omega^{d}_{ijt})=\frac{ \exp( \delta_{ijt} )}{ \exp( \delta_{ijt} )+ \exp( \beta \mathbb{E}[W(\Omega^{c}_{ijt+1})|\Omega^{c}_{ijt}] ) }\\
\text{for }m=3,4.
\end{align*}

Following \cite{conlon2012dynamic}, we make an assumption that consumers have perfect foresight over a transition of state $\Omega^{d}_{ijt}$. Formally, 
\begin{align*}
\mathbb{E}[W(\Omega^{d}_{ijt+1})|\Omega^{d}_{ijt}]=w_{ijt+1},
\end{align*}
where
\begin{align*}
w_{it} = \ln(\exp(\delta_{ijt})+\exp(\beta w_{it+1}))
\end{align*}
for all $i$, $j$, and $t$. The second line is a direct consequence of the first line following the argument of \cite{rust1987optimal}. An alternative and more popular specification is to assume that consumers form an expectation of the future state by certain functional form, typically an AR(1) regression. Compared to functional assumption perfect foresight reduces the computational burden significantly as it avoids integration over a distribution for calculating expectation (See \cite{conlon2012dynamic} for further discussion.) Also, note that by our CV algorithm we can even investigate which of perfect foresight and AR(1) assumption makes the model more accurate, which we believe is an interesting future work.

\begin{comment}
Following \cite{gowrisankaran2012dynamics}, we make an assumption of {\it inclusive value sufficiency}. That is, the consumer considers the mean utility $\delta_{ijt}$ as the only state variable. Also following the literature such as \cite{nair2007intertemporal} and \cite{lee2013vertical}, we assume that the consumers predict the change of inclusive value as AR(1) process:
\begin{align*}
\delta_{ijt} = \kappa_0 +\kappa_1 \delta_{ijt-1} + \nu_{ijt},
\end{align*}
where $\kappa_0$ and $\kappa_1$ are the coefficients and $\nu_{ijt}$ is a mean-zero i.i.d error.

\end{comment}

Finally, for both static and dynamic model let $M_{ijt}$ be the market size of consumers for a product $j$ at period $t$. Given the consumers purchase the same product at most once, the market size transition for any model $m\in\{1,2,3,4\}$ follows
\begin{align*}
M^{(m)}_{ijt+1} =M_{ijt}(1-s^{(m)}_{ijt}).
\end{align*}

\subsubsection{Supply Model}
We express the marginal cost of product $j$ at period $t$ for the retailer as $MC_{jt}$ where 
\begin{align*}
MC_{jt}=\mathbf{Y}_{j t}{\boldsymbol \gamma }_{jt} +\lambda_{jt}.
\end{align*}
$X^{cost}_{jt}$ is the observable characteristics of the product, and $\lambda_{jt}$ is the cost shock i.i.d across time and products. 

Denote the states of a product $j$ for the retailer at period $t$ as $\boldsymbol{\Omega}^s_{jt}$. $\boldsymbol{\Omega}^s_t$ includes the market size of each consumer segment $\{M_{ijt}\}_{i}$ and the draw of unobserved utility shock, $\{\xi_{ijt}\}_{i}$ and $\lambda_{jt}$. Given the demand system described above, the demand function is written as
\begin{align*}
D^{(m)}_{jt}(p_{jt},\boldsymbol{\Omega}^s_t) = \sum_{r=1}^R M^{(m)}_{ijt}s^{(m)}_{ijt}.
\end{align*}
The instant profit function of a product $j$ at period $t$ is therefore
\begin{align*}
\pi_{jt}(p_{jt},\boldsymbol{\Omega}^s_{jt})=D_{jt}(p_{jt},\boldsymbol{\Omega}^s_{jt}) (p_{jt} - MC_{jt}),
\end{align*}

In static pricing model, $m=1,3$, the retailer simply chooses the price to maximize the myopic profit:
\begin{align*}
p_{jt}^{m}=\argmax_{p_{jt}} \pi_{jt}(p_{jt}) \forall j,t\\
\text{for }m=1,3.
\end{align*}

In dynamic pricing model ($m=2,4$), the retailer maximizes the net profit over time with discounting. The discounting factor $\beta$ is assumed to be same with consumers. The retailer determines the price after observing the realization of the shocks, $\{\xi_{ijt}\}_{i}$ and $\lambda_{jt}$. The value function of a product $j$ is expressed as
\begin{align*}
V_j(\Omega^s_{t})=\mathbb{E}\left[\max_{p_{jt}} 
 \left(  \pi_{jt} + \beta  V_j(\Omega^s_{t+1}) \right)
 \middle|\mathbf{\Omega}^f_{t}, p_{jt} \right],
\end{align*}
where the expectation is over the unobserved cost shock in the next period, $\lambda_{jt+1}$. The optimal price is determined as
\begin{align*}
p_{jt}^{m}=\argmax_{p_{jt}} \pi_{jt}(p_{jt})+\beta \Expect[V_j(\Omega^s_{t+1})|\Omega^s_{t}]\\
\text{for }m=2,4. 
\end{align*}

Similar to the demand side, we assume that the retailer has a perfect information on the transition of the error draw. 

%ここもうちょっと議論が必要。
%V_{jt}() =v_{jt+1}(p_{jt})

\subsubsection{Equilibrium}

This section describe the equilibrium condition for each model. When consumers and firms are both static ($m=1$), the equilibrium price and demand are the standard one as in many models such as \cite{berry1995automobile}. When consumers are static but firms are dynamic ($m=2$), pricing can be seen as a single agent dynamic optimization problem with continuous choice variable $p_{jt}$. Similarly, when consumers are dynamic but firms are static ($m=3$), consumers solve a single agent dynamic optimization problem. The consumers problem is an optimal stopping problem as the choice is the timing of purchase. When both consumers and the retailer are both dynamic ($m=4$), we assume their behavior is at Markov Perfect Nash Equilibrium (MPNE) where consumers' and retailer's prediction of the value function matches to the realization.

\subsection{Data}
%ちょっと変更（小宮山）
%（ここのUCI repoに関するくだりは6章冒頭にも書いてあるので被り感ある）
We obtain our data from UCI machine Learning Repository. The UCI Machine Learning Repository maintains more than three hundreds datasets that are intensely used by machine learning community for empirical investigation and comparison of algorithms. When researchers propose a new model or algorithm in machine learning field, a common practice is to test its performance on the dataset in this repository. Such a culture gives a thorough idea on the practical performance of existing models and algorithms. Moreover, it helps a new researcher replicate the results on the existing papers. 

%まずUCI dataの説明. 以下UCIのAboutのコピペ。要修正。"The UCI Machine Learning Repository is a collection of databases, domain theories, and data generators that are used by the machine learning community for the empirical analysis of machine learning algorithms. The archive was created as an ftp archive in 1987 by David Aha and fellow graduate students at UC Irvine. Since that time, it has been widely used by students, educators, and researchers all over the world as a primary source of machine learning data sets. As an indication of the impact of the archive, it has been cited over 1000 times, making it one of the top 100 most cited "papers" in all of computer science." 

The dataset we utilize in this study is the online retail data created by \cite{chen2012data}, posted on UCI Machine Learning Repository in November 2015. The data is publicly available at \url{https://archive.ics.uci.edu/ml/datasets/Online+Retail}. The information about the data source is provided by the authors as follows: "The online retailer under consideration is a UK-based and registered non-store business with some 80 members of staff. The company was established in 1981 mainly selling unique all-occasion gifts. For years in the past, the merchant relied heavily on direct mailing catalogs, and orders taken over phone calls. It was only 2 years ago that the company launched its own web site and shifted completely to the web. Since then the company has maintained a steady and healthy number of customers. The company also uses Amazon.co.uk to market and sell its products."

The data include all the transactions occurred on this retailer from December 2010 to December 2011. Each transaction information includes quantity, unit price, consumer ID, and country. We dropped any sales to outside UK. The majority of the sales is inside UK and non-UK sales has only limited amount (approximately 20\%.)  Since our purpose is to demonstrate application of CV model selection to static and dynamic models, we aggregate the data into a monthly sales of each product so that the data format follows typical market level data and we can apply commonly used economic models. The monthly sales is simply a sum of the quantity sold in a particular month. The monthly price is calculated as the average of the price of transaction occurred in each month weighted by the quantity. We omitted the products that have any zero sales in the considered months from the data.

On the top of price and sales data, the author hand-coded product category and subcategory based on the description of products. The categories include {\it Children}, {\it Decoration}, or {\it Kitchen}. The number of products as well as basic statistics are summarized in table \ref{tab:OnlinRetail-Summary}. Figure \ref{fig:OnlineRetail_PQ} shows the average of monthly price and quantity sold in each category. It shows that the dynamics is heterogeneous across categories. For instance, the price of products in {\it Gift} and {\it Decoration} show tendency to decline over periods, while {\it Home and Garden}  or {\it Candle} show more fluctuation.

\subsection{Estimation and Model Selection}
We implement model selection for the demand side and supply side sequentially. First we test if the demand is static or dynamic. Subsequently, we test if the pricing is static or dynamic, assuming the demand model chosen in the previous step. The endogenous variables such as the market size $M_{ijt}$ and the share $s_{ijt}$ are estimated in the demand side, and imported over to the supply side estimation. Importantly, we do not have to specify the pricing model on estimation of demand side by virtue of perfect foresight assumption. We treat the data in each category independently.

We adapt 3-fold cross validation, $(k,r)=(1,3)$. Because the data has a panel structure of products and periods, either the product-wise or period-wise split is possible. We adopt split based on products. That is, we split the products into three groups, and use two of them to estimate a model and use the last one for validation. %何かjustificationがある？？

\subsubsection{MPEC formulation}
To estimate each model by GMM-MPEC, we formulate the estimation as a minimization problem of GMM objective with equilibrium constraints based on the model described above. Under the assumptions we impose, the equilibrium constraints are convex and mostly either linear or quadratic. This fact ensures that we are able to find an optimal solution of the estimation problem.

First we describe the MPEC formulation of demand models. For the static demand model ($m=1,2$), the set of constraints are
\begin{align}
\begin{split}
&s_{ijt}^{m}=\frac{\exp(\delta_{ijt})}{\exp(\delta_{ijt})+1}\label{eq:StatDemandConst}\\
&D_{jt}^{m}=\sum_i M_{ijt}^{m}s_{ijt}^{m}\\
&\delta_{ijt}=\alpha^p_i p_{jt}+\alpha^0_{ij}+\mathbf{X}_{jt}\boldsymbol{\alpha}^x+\xi_{jt}\\
& \alpha^p_i = \alpha^p + \nu^p_{i} \rho^p\\
&\alpha^0_i = \alpha^0 +\nu^0_{i} \rho^0 \\
&M_{ijt}^{m}=M_{ijt-1}^{m}(1-s_{ijt}^{m}),
\end{split}
\end{align}
for all $(i,j,t)$.

For dynamic demand model, the constraints are similar except the consumers compare the purchase utility to the value of waiting until next period.%微妙に繰り返しになってredundantな感があるけどまあいい？
\begin{align}
\begin{split}
&s_{ijt}^{m}=\frac{\exp(\delta_{ijt})}{\exp(\delta_{ijt})+\exp(\beta w_{it+1} )}\label{eq:DynDemandConst}\\
&w_{it} = \ln(\exp(\delta_{it})+\exp(\beta w_{it+1}) )\\
&D_{jt}=\sum_i M_{ijt}^{m} s_{ijt}^{m}\\
&\delta_{ijt}=\alpha^p_i p_{jt}+\alpha^0_{ij}+\mathbf{X}_{jt}\boldsymbol{\alpha}^x+\xi_{jt}\\
& \alpha^p_i = \alpha^p + \nu^p_{i} \rho^p\\
&\alpha^0_i = \alpha^0 +\nu^0_{i} \rho^0 \\
&M_{ijt}^{m}=M_{ijt-1}^{m}(1-s_{ijt}^{m})
\end{split}
\end{align}
for all $(i,j,t)$. 

The model parameters to estimate are $\theta^d = (\alpha^p,\alpha^0,\rho^p,\rho^0)$. The data to input are the realized demand $D_{jt}$, the observed price $p_{jt}$, and the random draws $\nu^p_i$ and $\nu^0_i$. The predicted share $s_{ijt}$, the market size of each consumer type $M_{ijt}$, and the error draw $\xi_{jt}$ are the endogenous variables. In the dynamic demand model, the value function $w_{ijt}$ is also observation-specific endogenous variable to choose for the optimization.

%ちょっとややこしいのでpriceがoptimalだというconstraintにしておいて、FOCはAppendixに移した。
We define the supply side estimation problem by the first order condition and the Bellman equation. By abusing notation, let $D_{jt}^m(p)$ as a demand function with respect to price in model $m$. The supply side equilibrium constraints of static pricing model is that the observed prices are chosen to maximize the instant profit:
\begin{align}
\begin{split}
&D_{jt}^m = \sum_i M_{ijt}^m s_{ijt}\label{eq:StatSupplyConst}\\
&MC_{jt}^m=X^s_{jt}\gamma+\lambda_{jt}\\
&p_{jt} = \argmax_{p} [D_{jt}^m(p)(p-MC_{jt}^m)]
\end{split}
\end{align}
for all $(i,j,t)$. 

Instead of the third line above, the dynamic pricing model includes Bellman equation:
\begin{align}
\begin{split}
&D_{jt}^m = \sum_i M_{ijt}^m s_{ijt}\label{eq:DynSupplyConst}\\
&MC_{jt}=X^s_{jt}\gamma+\lambda_{jt}\\
&p_{jt} = \argmax_{p} [D_{jt}^m(p)(p-MC_{jt})+\beta V_{jt+1}(\Omega^s_{jt+1})]\\
&V_{jt}(\Omega^s_{jt}) = \max_{p} [D_{jt}^m(p)(p-MC_{jt})+\beta V_{jt+1}(\Omega^s_{jt+1})].
\end{split}
\end{align}

The model parameters to estimate is $\theta^s=\gamma$. $MC_{jt}$, $\lambda_{jt}$, and the value function are observation-specific endogenous variables. $M_{ijt}$ and $s_{ijt}$ are estimated in the demand side as endogenous variables.

In both static and dynamic model, the constraint includes the retailer's optimization problem. We convert it to the first order condition when solving for the estimation. The  details are in the Appendix. 

The GMM objective is a function defined by moment conditions 
\begin{align*}
&\Expect [\xi Z]=0\\
&\Expect [\lambda Z]=0,
\end{align*}
where $Z$ is the instrumental variables. It includes category and subcategory dummies, period dummy, and the market size of consumer segments $\{M_{it}\}_{i}$. The market size information is correlated with price because it relates to the price elasticity. Since we assume that the unobserved shocks are not serially correlated, the market size at period $t$ is not correlated with the shocks in the same period. Further detail of the setting for estimation is described in the Appendix.

\subsection{Results}
Table \ref{tab:DP_result_MS} presents the cross validation score of each model. The second from the last column shows the demand model selected by CV. The last column exhibits the selected pricing model. One can see that the selected model varies across categories. On demand side, the data on {\it Children} {\it Decoration}, and {\it Kitchen} are explained better by the static model, while the dynamic model is preferred on other categories. On supply side, static pricing explained the data of {\it Crafts}, {\it Decoration}, and {\it Personal Item} better. 

The result of model selection is difficult to interpret. One could try to provide some intuition: For instance, the products that fits static demand model better may be the ones that consumers cannot make a consumption plan. On products where the retailer engages in static pricing, it may be due to certain circumstance that researchers do not observe, such as a contract with wholesaler or limitation of inventory. However, prior to observing the result of cross validation, it is hard to make an reliable and scientific argument and justification for any model to be realistic.  %researcherの理解には限界があるという話にしたい。
%seasonal demand とか、新しいプロダクトほど人気なものだとdynamic demandとか。しかし、 period fixed effectの推定値と比較するとそういう感じもしない（テーブルを足す?）。

The difficulty of interpretation in turn suggests that it is impractical for researchers to assume a certain model beforehand. Selecting a structural model based on intuition may severely bias the inference. To see the problem, \ref{tab:DP_result_PC} shows the estimated price coefficient in each category in different specification. While in some cases two models exhibit fairly similar result, in some cases such as {\it Candle} or {\it Party} the result is largely different. Therefore, we recommend that researchers cross validate their models whenever possible, unless they have a strong reason to believe in certain model.
%翻って、これは推定前にintuitionに従ってモデルを選択するのは難しいということでもある。従って、どのようなモデルを仮定して推定するかは、直観的に議論するのではなくcross validationをして決定するべき。　If the interest is in the estimates, researchers should cross validate their models and adopt the one of the best fit, rather than 

\section{CONCLUSION}\label{7-conclusion}
%経済の慣習でイントロの繰り返しみたいなことを2-3パラグラフくらい書く。
In this paper, we have proposed a cross-validation approach to model selection when models are estimated via GMM criterion. Cross-validation procedure can be readily implemented in any existing economic models without much extra work for researchers. We have proved its asymptotic consistency, and Monte-Carlo experiments in both linear and non-linear model confirm that cross-validation outperforms in-sample comparison that economists traditionally practice.

We also proposed a way to apply cross-validation when models are estimated through MPEC. As its real-world application, we adapt our CV based model selection to test dynamic demand model and dynamic pricing model in an online-retailer data. We find a quite diverse result across product categories. Unexpectedly, even on the same retailer it is not consistent whether a dynamic model is preferred or not. As the implication of structural estimation largely depends on the assumed model, this result suggests that economists should cross-validate their structural models rather than appealing to  for reliability of their inference.
%もうちょっと言い方あるかも->たしかに

\clearpage
\bibliography{main.bib}

\begin{thebibliography}{}

\bibitem[\protect\citeauthoryear{Andrews}{Andrews}{1999}]{andrews1999}
Andrews, D. W.~K. (1999).
\newblock Consistent moment selection procedures for generalized method of
  moments estimation.
\newblock {\em Econometrica\/}~{\em 67\/}(3), 543--563.

\bibitem[\protect\citeauthoryear{Angrist and Pischke}{Angrist and
  Pischke}{2010}]{angrist2010}
Angrist, J.~D. and J.-S. Pischke (2010, June).
\newblock The credibility revolution in empirical economics: How better
  research design is taking the con out of econometrics.
\newblock {\em Journal of Economic Perspectives\/}~{\em 24\/}(2), 3--30.

\bibitem[\protect\citeauthoryear{Berry, Levinsohn, and Pakes}{Berry
  et~al.}{1995}]{berry1995automobile}
Berry, S., J.~Levinsohn, and A.~Pakes (1995).
\newblock Automobile prices in market equilibrium.
\newblock {\em Econometrica: Journal of the Econometric Society\/}, 841--890.

\bibitem[\protect\citeauthoryear{Berto Villas-Boas}{Berto
  Villas-Boas}{2007}]{berto2007vertical}
Berto Villas-Boas, S. (2007).
\newblock Vertical relationships between manufacturers and retailers: Inference
  with limited data.
\newblock {\em The Review of Economic Studies\/}~{\em 74\/}(2), 625--652.

\bibitem[\protect\citeauthoryear{Bonnet and Dubois}{Bonnet and
  Dubois}{2010}]{bonnet2010inference}
Bonnet, C. and P.~Dubois (2010).
\newblock Inference on vertical contracts between manufacturers and retailers
  allowing for nonlinear pricing and resale price maintenance.
\newblock {\em The RAND Journal of Economics\/}~{\em 41\/}(1), 139--164.

\bibitem[\protect\citeauthoryear{Bresnahan}{Bresnahan}{1987}]{bresnahan1987}
Bresnahan, T. (1987).
\newblock Competition and collusion in the american automobile industry: The
  1955 price war.
\newblock {\em Journal of Industrial Economics\/}~{\em 35\/}(4), 457--82.

\bibitem[\protect\citeauthoryear{Chen, Sain, and Guo}{Chen
  et~al.}{2012}]{chen2012data}
Chen, D., S.~L. Sain, and K.~Guo (2012).
\newblock Data mining for the online retail industry: A case study of rfm
  model-based customer segmentation using data mining.
\newblock {\em Journal of Database Marketing \& Customer Strategy
  Management\/}~{\em 19\/}(3), 197--208.

\bibitem[\protect\citeauthoryear{Coase}{Coase}{1972}]{coase1972durability}
Coase, R.~H. (1972).
\newblock Durability and monopoly.
\newblock {\em The Journal of Law and Economics\/}~{\em 15\/}(1), 143--149.

\bibitem[\protect\citeauthoryear{Conlon}{Conlon}{2012}]{conlon2012dynamic}
Conlon, C.~T. (2012).
\newblock A dynamic model of prices and margins in the lcd tv industry.
\newblock {\em unpublished\/}.

\bibitem[\protect\citeauthoryear{Dub{\'e}, Fox, and Su}{Dub{\'e}
  et~al.}{2012}]{dube2012improving}
Dub{\'e}, J.-P., J.~T. Fox, and C.-L. Su (2012).
\newblock Improving the numerical performance of static and dynamic aggregate
  discrete choice random coefficients demand estimation.
\newblock {\em Econometrica\/}~{\em 80\/}(5), 2231--2267.

\bibitem[\protect\citeauthoryear{Gautier and Kitamura}{Gautier and
  Kitamura}{2013}]{gautier2013nonparametric}
Gautier, E. and Y.~Kitamura (2013).
\newblock Nonparametric estimation in random coefficients binary choice models.
\newblock {\em Econometrica\/}~{\em 81\/}(2), 581--607.

\bibitem[\protect\citeauthoryear{Gowrisankaran and Rysman}{Gowrisankaran and
  Rysman}{2012}]{gowrisankaran2012dynamics}
Gowrisankaran, G. and M.~Rysman (2012).
\newblock Dynamics of consumer demand for new durable goods.
\newblock {\em Journal of political Economy\/}~{\em 120\/}(6), 1173--1219.

\bibitem[\protect\citeauthoryear{Hall}{Hall}{2005}]{hall2005}
Hall, A. (2005).
\newblock {\em Generalized Method of Moments}.
\newblock Advanced Texts in Econometrics. OUP Oxford.

\bibitem[\protect\citeauthoryear{Hall and Inoue}{Hall and
  Inoue}{2003}]{HallInoue2003}
Hall, A.~R. and A.~Inoue (2003, June).
\newblock {The large sample behaviour of the generalized method of moments
  estimator in misspecified models}.
\newblock {\em Journal of Econometrics\/}~{\em 114\/}(2), 361--394.

\bibitem[\protect\citeauthoryear{Hall and Pelletier}{Hall and
  Pelletier}{2011}]{hallpelletier2011}
Hall, A.~R. and D.~Pelletier (2011).
\newblock {Non-Nested Testing in Models Estimated via Generalized Method of
  Moments}.
\newblock Technical Report~2.

\bibitem[\protect\citeauthoryear{Hendel and Nevo}{Hendel and
  Nevo}{2006}]{hendel2006measuring}
Hendel, I. and A.~Nevo (2006).
\newblock Measuring the implications of sales and consumer inventory behavior.
\newblock {\em Econometrica\/}~{\em 74\/}(6), 1637--1673.

\bibitem[\protect\citeauthoryear{Hu, Xiao, and Zhou}{Hu
  et~al.}{2014}]{hu2014collusion}
Hu, W.-M., J.~Xiao, and X.~Zhou (2014).
\newblock Collusion or competition? interfirm relationships in the chinese auto
  industry.
\newblock {\em The Journal of Industrial Economics\/}~{\em 62\/}(1), 1--40.

\bibitem[\protect\citeauthoryear{Lee}{Lee}{2013}]{lee2013vertical}
Lee, R.~S. (2013).
\newblock Vertical integration and exclusivity in platform and two-sided
  markets.
\newblock {\em The American Economic Review\/}~{\em 103\/}(7), 2960--3000.

\bibitem[\protect\citeauthoryear{Luo}{Luo}{2015}]{luo2015operating}
Luo, R. (2015).
\newblock The operating system network effect and carriers dynamic pricing of
  smartphones.

\bibitem[\protect\citeauthoryear{Nair}{Nair}{2007}]{nair2007intertemporal}
Nair, H. (2007).
\newblock Intertemporal price discrimination with forward-looking consumers:
  Application to the us market for console video-games.
\newblock {\em Quantitative Marketing and Economics\/}~{\em 5\/}(3), 239--292.

\bibitem[\protect\citeauthoryear{Nevo}{Nevo}{2001}]{nevo2001measuring}
Nevo, A. (2001).
\newblock Measuring market power in the ready-to-eat cereal industry.
\newblock {\em Econometrica\/}~{\em 69\/}(2), 307--342.

\bibitem[\protect\citeauthoryear{Rivers and Vuong}{Rivers and
  Vuong}{2002}]{riversvuong2002}
Rivers, D. and Q.~Vuong (2002).
\newblock Model selection tests for nonlinear dynamic models.
\newblock {\em Econometrics Journal\/}~{\em 5\/}(1), 1--39.

\bibitem[\protect\citeauthoryear{Rust}{Rust}{1987}]{rust1987optimal}
Rust, J. (1987).
\newblock Optimal replacement of gmc bus engines: An empirical model of harold
  zurcher.
\newblock {\em Econometrica: Journal of the Econometric Society\/}, 999--1033.

\bibitem[\protect\citeauthoryear{Smith}{Smith}{1992}]{smith1992}
Smith, R.~J. (1992).
\newblock Non-nested tests for competing models estimated by generalized method
  of moments.
\newblock {\em Econometrica\/}~{\em 60\/}(4), 973--980.

\bibitem[\protect\citeauthoryear{Su and Judd}{Su and
  Judd}{2012}]{su2012constrained}
Su, C.-L. and K.~L. Judd (2012).
\newblock Constrained optimization approaches to estimation of structural
  models.
\newblock {\em Econometrica\/}~{\em 80\/}(5), 2213--2230.

\bibitem[\protect\citeauthoryear{Yang}{Yang}{2007}]{yang2007consistency}
Yang, Y. (2007).
\newblock Consistency of cross validation for comparing regression procedures.
\newblock {\em The Annals of Statistics\/}, 2450--2473.

\end{thebibliography}
\bibliographystyle{chicago}

\clearpage
%--------Section 2-------------------------------
\begin{algorithm}[H]
\caption{$(k,r)$-Cross Validation on GMM}
\label{alg_cv_gmm}
\begin{algorithmic}[1]
\STATE Input: Models $\{ \mathcal{M}_i\}$, data $\{v_t\}_{t=1,...,T}$.
\FOR {each model $\mathcal{M}_i$}
\FOR {each training data $\{v_t\}_{t\in N_S}$}
\STATE Estimate model parameters as $$\theta_S^{(i)}=\argmin_{\theta^{(i)}\in\Theta^{(i)}} Q_S^{(i)}(\theta^{(i)})$$
\STATE Calculate the score $Q_{S,\mathrm{valid}}^{(i)}(\theta_S^{(i)})$
\ENDFOR
\STATE Calculate the average score
\begin{equation*}
 Q_{\mathrm{valid}}^{(i)} = \frac{1}{{}_r C _k} \sum_{S \subset \{1,2,\dots,r\}: |S| = r-k} Q_{S, \mathrm{valid}}^{(i)}(\theta_S^{(i)})
\end{equation*}
\ENDFOR
\STATE Find the best model that exhibits the smallest $Q_{\mathrm{valid}}^{(i)}$ .
\end{algorithmic}
\end{algorithm}

%--------Section 2-------------------------------

%--------Section 3-------------------------------
\begin{figure}
    \centering
    \setlength{\subfigwidth}{.49\linewidth}
    \addtolength{\subfigwidth}{-.49\subfigcolsep}
    \begin{minipage}[t]{\subfigwidth}
    \begin{subfigure}{\subfigwidth}
        \includegraphics[width=\textwidth]{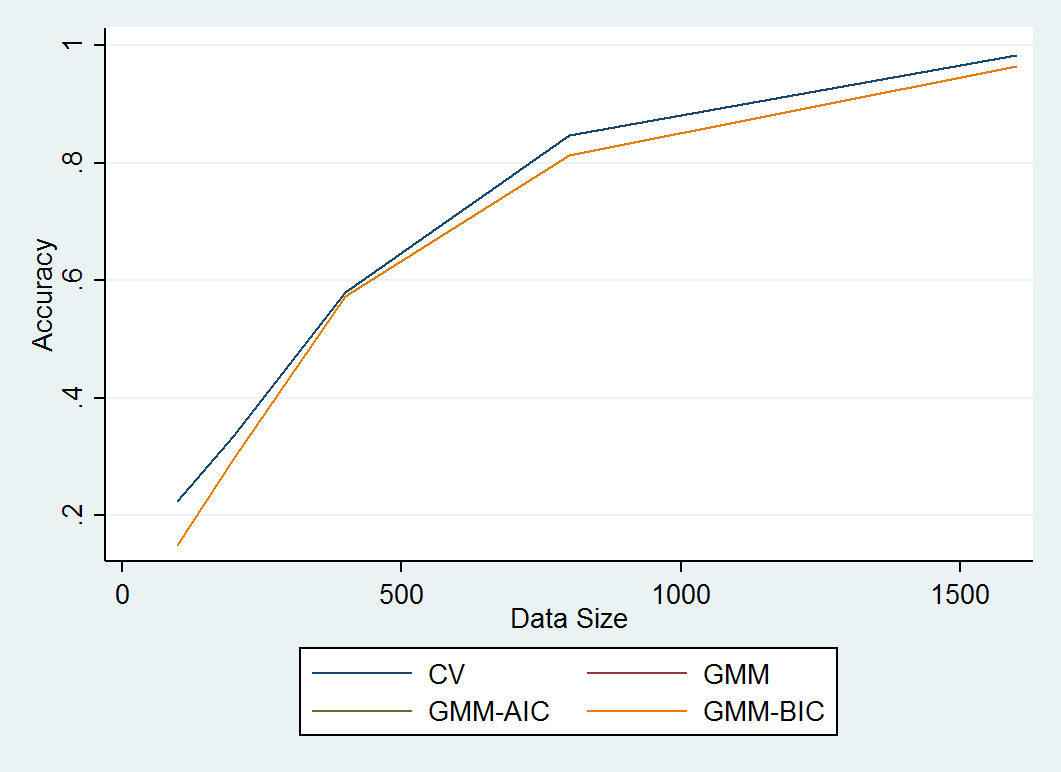}
        \caption{$p^1=3$,$p^2=5$,$\alpha=3.$}
        \label{fig: fig2a}
    \end{subfigure}
    \end{minipage}\hfill
    \begin{minipage}[t]{\subfigwidth}
    \begin{subfigure}{\subfigwidth}
        \includegraphics[width=\textwidth]{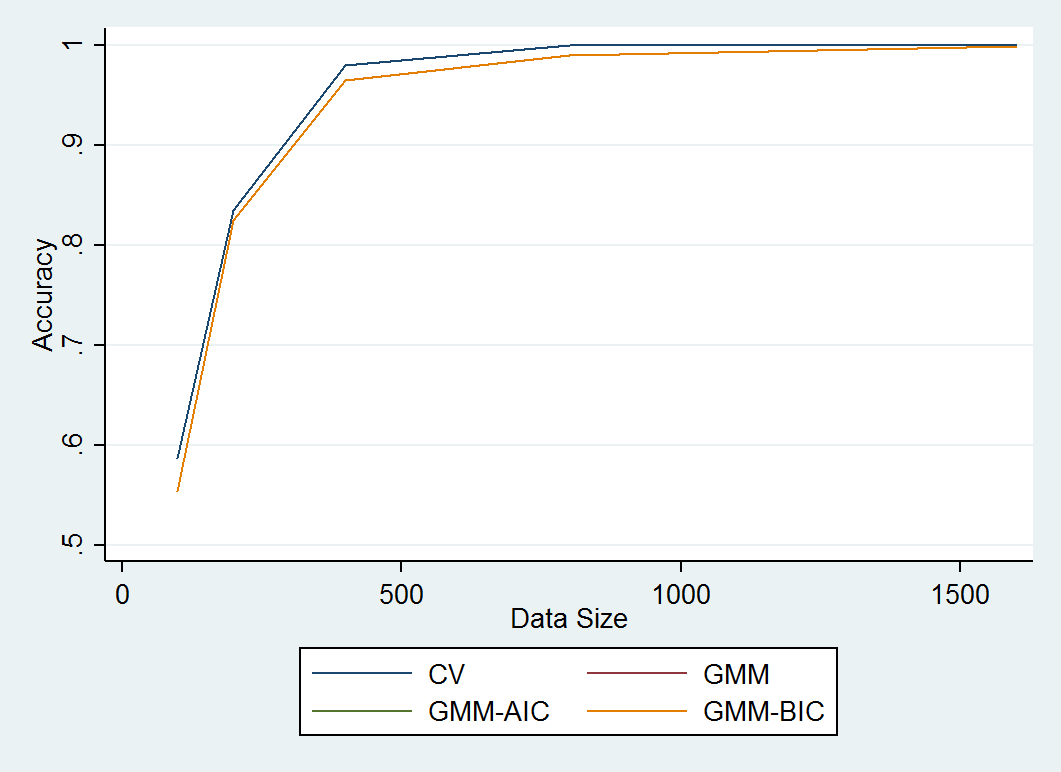}
        \caption{$p^1=3$,$p^2=5$,$\alpha=7.$}
        \label{fig: fig2b}
    \end{subfigure}       
    \end{minipage}\hfill
    \begin{minipage}[t]{\subfigwidth}
    \begin{subfigure}{\subfigwidth}
        \includegraphics[width=\textwidth]{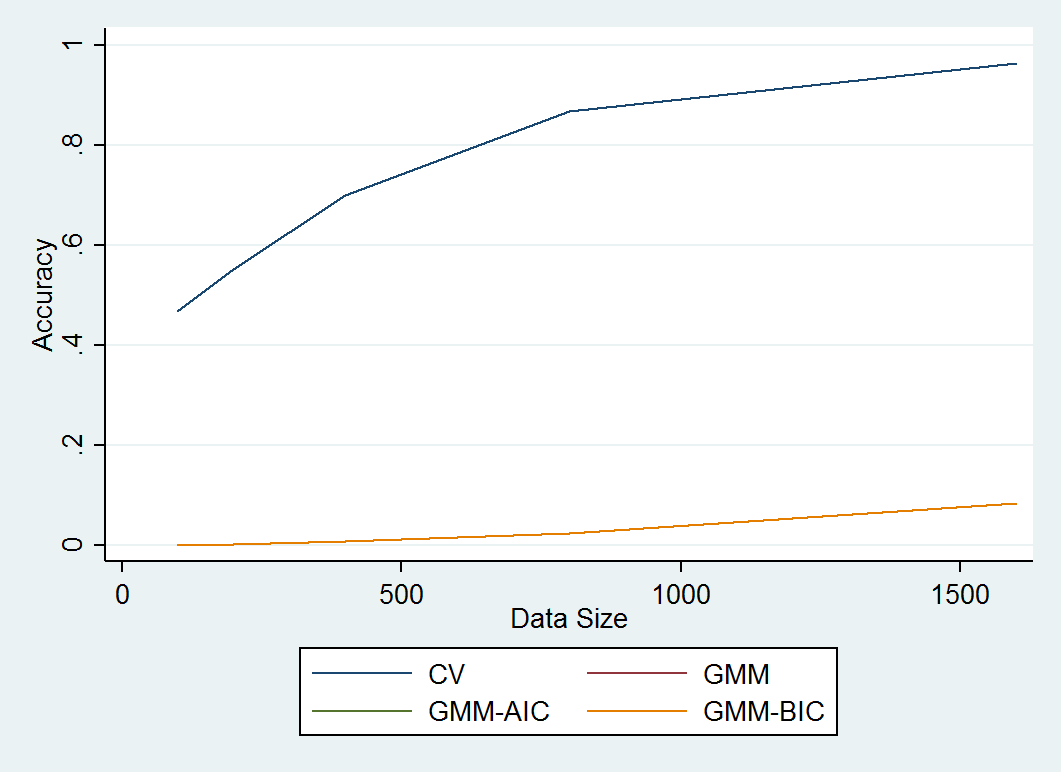}
        \caption{$p^1=3$,$p^2=9$, $\alpha=3.$}
        \label{fig: fig2c}
    \end{subfigure}
    \end{minipage}\hfill
    \begin{minipage}[t]{\subfigwidth}
    \begin{subfigure}{\subfigwidth}
        \includegraphics[width=\textwidth]{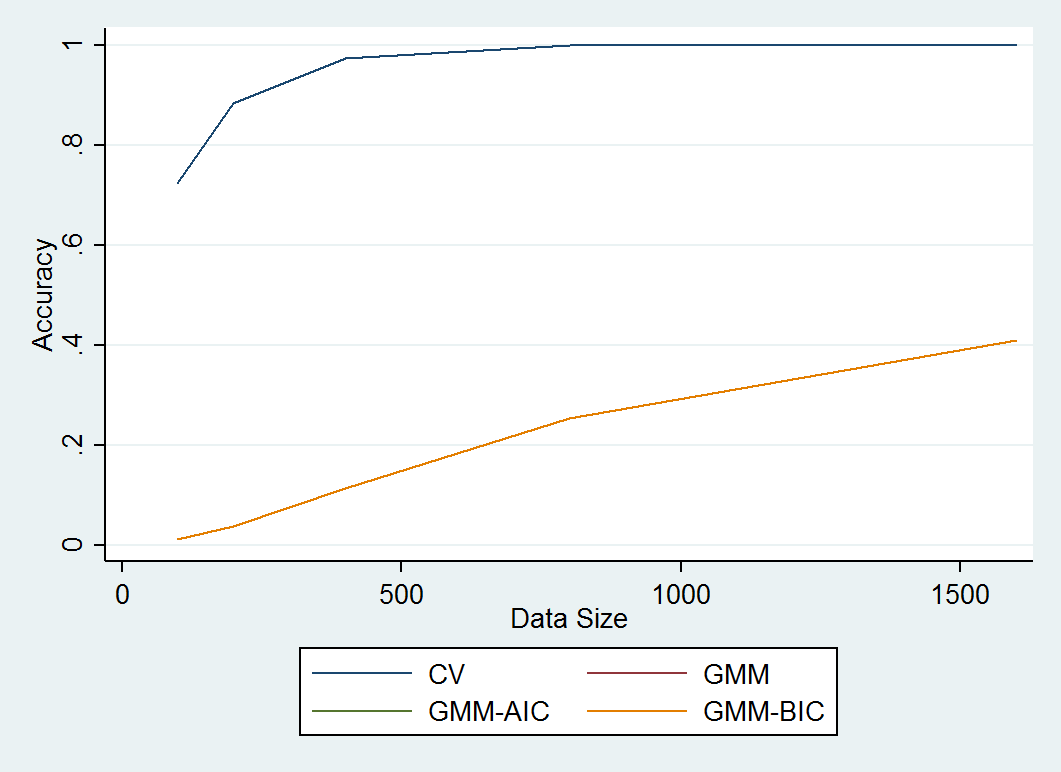}
        \caption{$p^1=3$,$p^2=9$, $\alpha=7.$}
        \label{fig: fig2d}
    \end{subfigure}
    \end{minipage}\hfill
    \caption{The accuracy of model selection when $p^1<p^2$. The $y$-axis is the probability that the correctly specified model (model 1) is chosen by each procedure. The number of instruments is set to be $c^1=c^2=10$. The cross-validation is 2-folds, i.e. $r=2$. The weighting matrix is set to be identity matrix.}\label{fig: overfit}
\end{figure}

\begin{figure}
    \centering
    \setlength{\subfigwidth}{.49\linewidth}
    \addtolength{\subfigwidth}{-.49\subfigcolsep}
    \begin{minipage}[t]{\subfigwidth}
    \begin{subfigure}{\subfigwidth}
        \includegraphics[width=\textwidth]{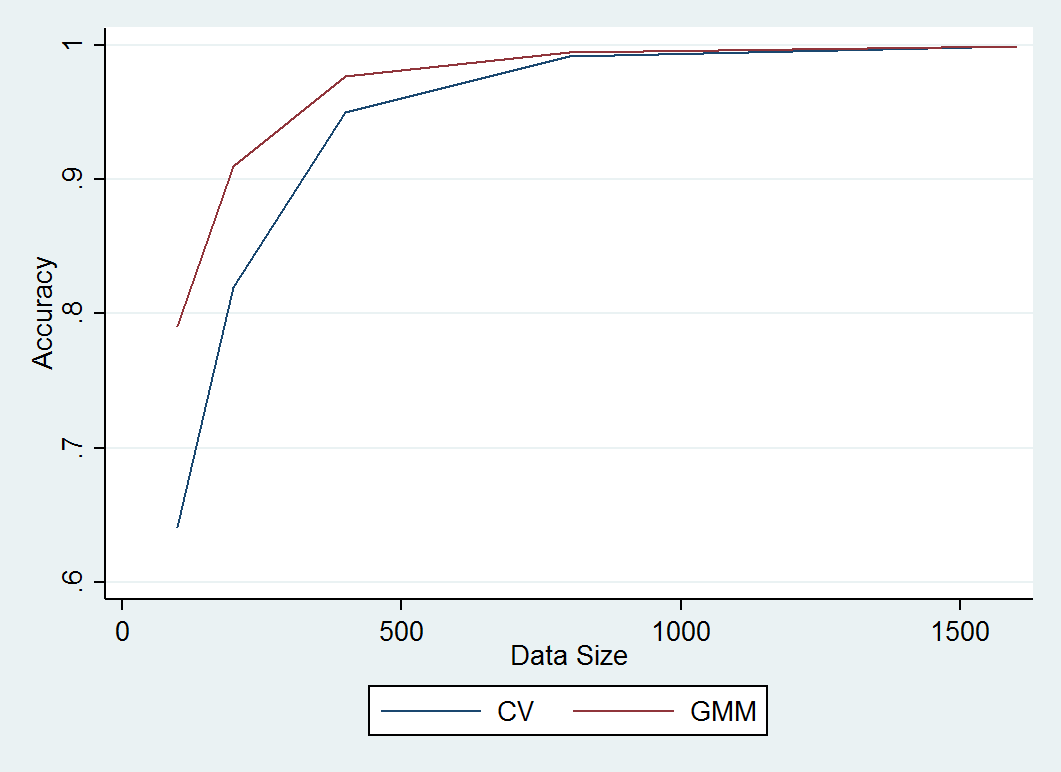}
        \caption{$p^1=p^2=3$,$\alpha=7.$}
        \label{fig: fig1a}
    \end{subfigure}
    \end{minipage}\hfill
    \begin{minipage}[t]{\subfigwidth}
    \begin{subfigure}{\subfigwidth}
        \includegraphics[width=\textwidth]{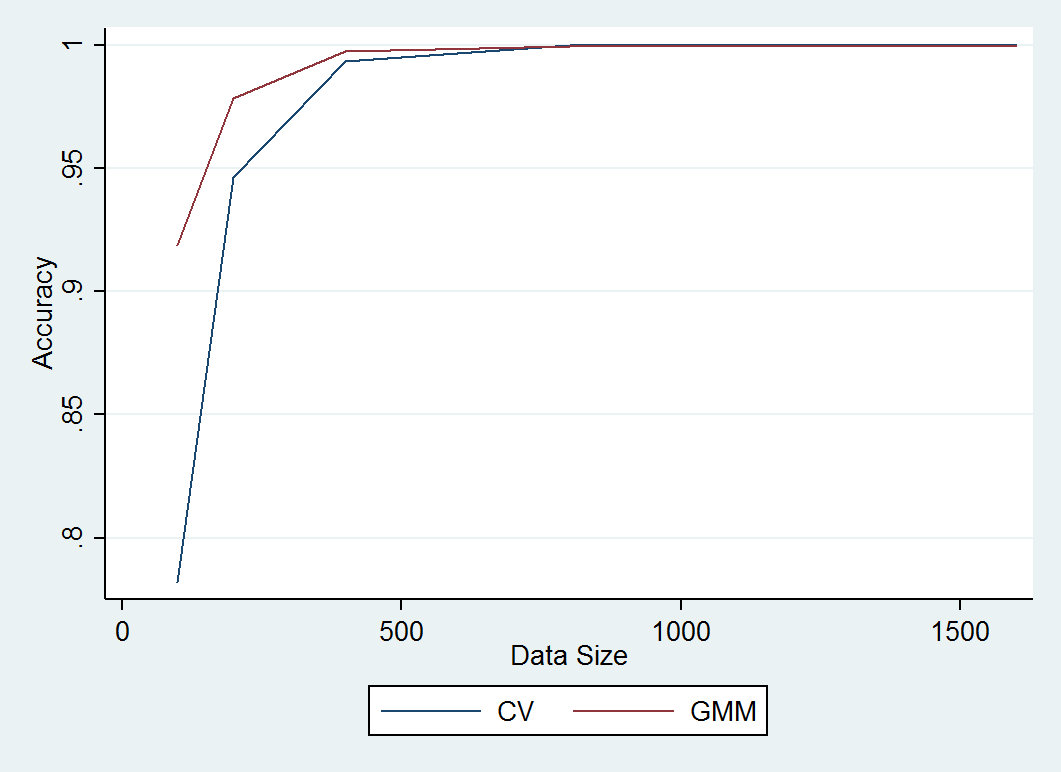}
        \caption{$p^1=p^2=3$,$\alpha=12.$}
        \label{fig: fig1b}
    \end{subfigure}       
    \end{minipage}\hfill
    \begin{minipage}[t]{\subfigwidth}
    \begin{subfigure}{\subfigwidth}
        \includegraphics[width=\textwidth]{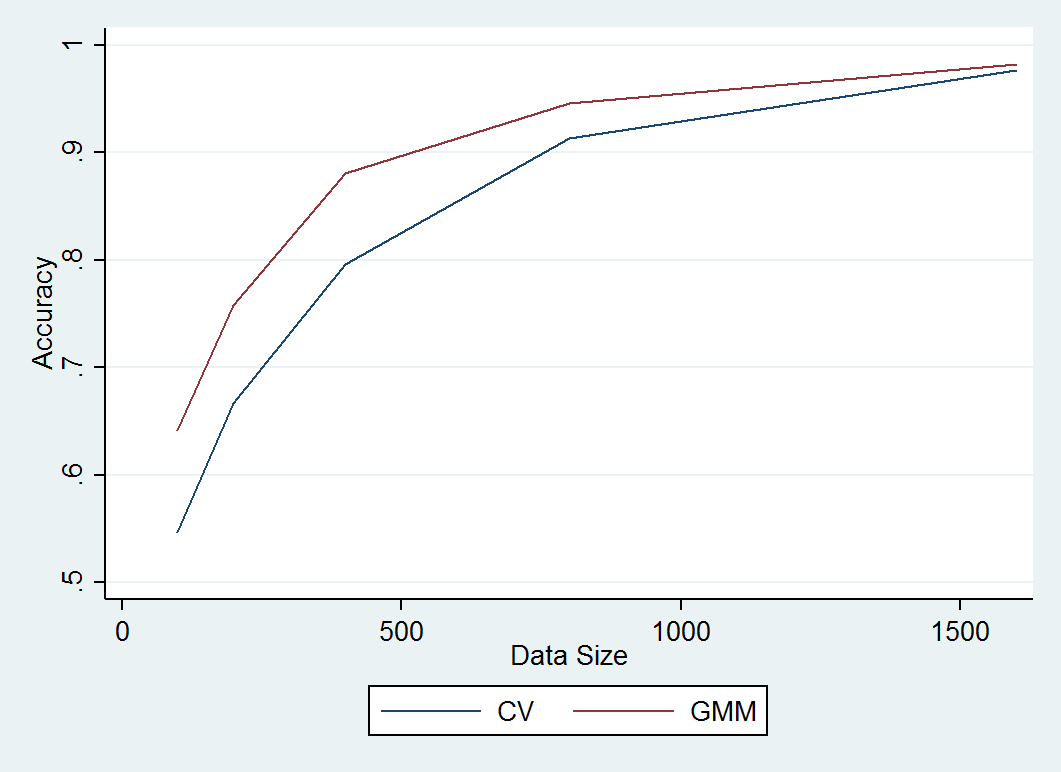}
        \caption{$p^1=p^2=7$, $\alpha=7.$}
        \label{fig: fig1c}
    \end{subfigure}
    \end{minipage}\hfill
    \begin{minipage}[t]{\subfigwidth}
    \begin{subfigure}{\subfigwidth}
        \includegraphics[width=\textwidth]{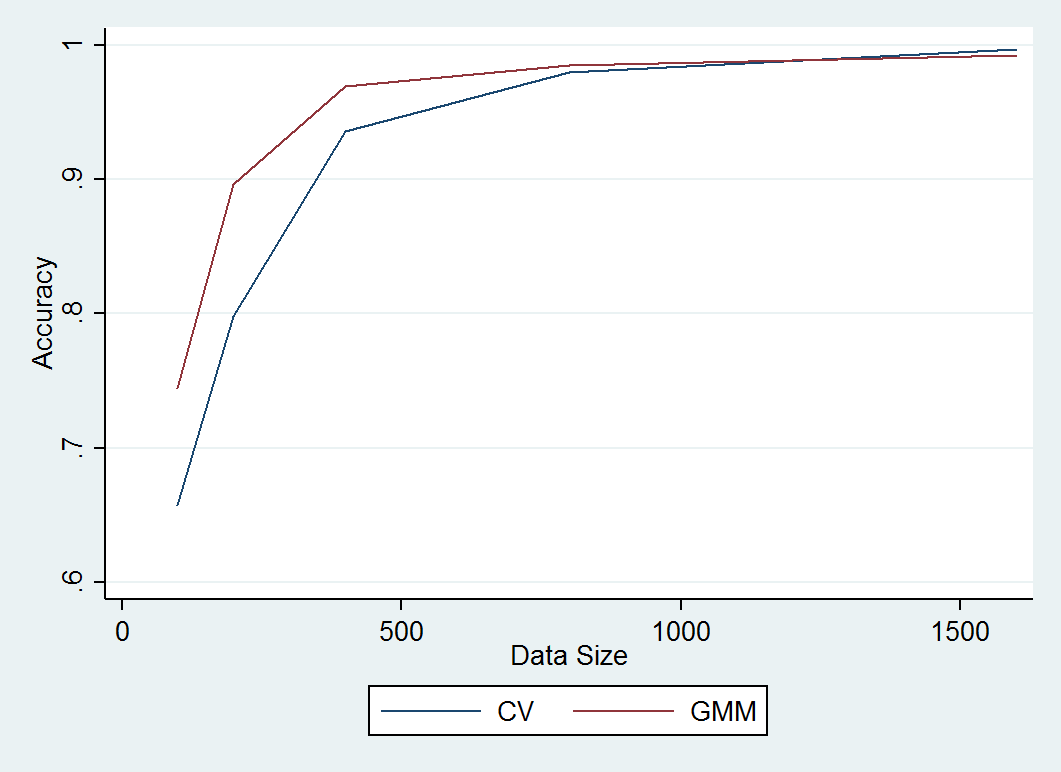}
        \caption{$p^1=p^2=7$, $\alpha=12.$}
        \label{fig: fig1d}
    \end{subfigure}
    \end{minipage}\hfill
    \caption{The accuracy of model selection when $p^1=p^2$. The $y$-axis is the probability that the correctly specified model (model 1) is chosen by each procedure. The number of instruments is set to be $c^1=c^2=10$. The cross-validation is 2-folds, i.e. $r=2$. The weighting matrix is set to be identity matrix.}\label{fig: p1eqp2_global}
\end{figure}
%--------Section 3-------------------------------

%--------Section 4-------------------------------
\begin{table}[htbp]
	\scriptsize
	\centering
	\caption{The validation Score of CV. Average of 100 iterations (standard deviation in the bracket). } \label{tab:CD_Score_CV}
\begin{tabular}{ccccccc}
 \hline\hline
& \multicolumn{6}{c}{$\alpha=-.1$} \\
 \cmidrule(r){4-5}
&  & \multicolumn{5}{c}{\textbf{Candidate Model}} \\
 \cmidrule(r){3-7}
Number of Market & \textbf{True Model} & $\{1,2,3\}$  & $\{1,2\}\{3\}$ & $\{1\}\{2,3\}$ & $\{1,3\}\{2\}$ & $\{1\}\{2\}\{3\}$ \\
\hline
\multirow{6}{*}{$25$} & \multirow{2}{*}{$\{1,2,3\}$} & $\mathbf{1.175}$  & $23.732$ & $28.822$ & $28.799$ & $30.709$ \\
 &  & (1.570)  & (44.224) & (89.920) & (56.014) & (60.764) \\
 & \multirow{2}{*}{$\{1,2\}\{3\}$} & $41.026$  & $\mathbf{1.022}$ & $27.687$ & $22.208$ & $8.799$ \\
 &  & (61.458)  & (1.049) & (54.927) & (29.543) & (13.536) \\
 & \multirow{2}{*}{$\{1\}\{2\}\{3\}$} & $25.441$  & $10.115$ & $8.536$ & $8.659$ & $\mathbf{0.912}$ \\
 &  & (26.184)  & (17.287) & (9.610) & (8.646) & (1.021) \\
\cmidrule(r){1-7}
\multirow{6}{*}{$50$} & \multirow{2}{*}{$\{1,2,3\}$} & $\mathbf{0.233}$  & $6.892$ & $6.505$ & $5.708$ & $6.686$ \\
 &  & (0.181)  & (10.774) & (5.495) & (5.073) & (7.486) \\
 & \multirow{2}{*}{$\{1,2\}\{3\}$} & $9.890$  & $\mathbf{0.314}$ & $5.328$ & $6.143$ & $2.466$ \\
 &  & (11.450)  & (0.207) & (3.391) & (5.696) & (1.781) \\
 & \multirow{2}{*}{$\{1\}\{2\}\{3\}$} & $10.050$  & $2.552$ & $3.764$ & $2.808$ & $\mathbf{0.274}$ \\
 &  & (10.486)  & (2.366) & (10.730) & (2.180) & (0.215) \\
\cmidrule(r){1-7}
\multirow{6}{*}{$75$} & \multirow{2}{*}{$\{1,2,3\}$} & $\mathbf{0.144}$  & $3.436$ & $3.470$ & $3.272$ & $3.580$ \\
 &  & (0.094)  & (2.328) & (2.232) & (1.865) & (2.415) \\
 & \multirow{2}{*}{$\{1,2\}\{3\}$} & $5.736$  & $\mathbf{0.170}$ & $3.284$ & $3.315$ & $1.367$ \\
 &  & (4.349)  & (0.114) & (2.456) & (1.864) & (0.867) \\
 & \multirow{2}{*}{$\{1\}\{2\}\{3\}$} & $6.651$  & $1.710$ & $1.800$ & $2.046$ & $\mathbf{0.190}$ \\
 &  & (5.723)  & (1.334) & (1.082) & (3.060) & (0.152) \\
\cmidrule(r){1-7}
\multirow{6}{*}{$100$} & \multirow{2}{*}{$\{1,2,3\}$} & $\mathbf{0.084}$  & $2.314$ & $2.475$ & $2.371$ & $2.374$ \\
 &  & (0.046)  & (1.195) & (1.716) & (1.418) & (1.424) \\
 & \multirow{2}{*}{$\{1,2\}\{3\}$} & $4.050$  & $\mathbf{0.124}$ & $2.289$ & $2.384$ & $0.951$ \\
 &  & (3.121)  & (0.071) & (1.754) & (1.785) & (0.671) \\
 & \multirow{2}{*}{$\{1\}\{2\}\{3\}$} & $4.463$  & $1.314$ & $1.266$ & $1.357$ & $\mathbf{0.124}$ \\
 &  & (2.282)  & (1.761) & (0.791) & (1.017) & (0.081) \\

\hline

& \multicolumn{6}{c}{$\alpha=-.3$} \\
 \cmidrule(r){4-5}
%&  & \multicolumn{5}{c}{\textbf{Candidate Model}} \\
% \cmidrule(r){3-7}
%Number of Market & \textbf{True Model} & $\{1,2,3\}$  & $\{1,2\}\{3\}$ & $\{1\}\{2,3\}$ & $\{1,3\}\{2\}$ & $\{1\}\{2\}\{3\}$ \\
\hline
\multirow{6}{*}{$25$} & \multirow{2}{*}{$\{1,2,3\}$} & $\mathbf{1.139}$  & $2.531$ & $2.355$ & $1.906$ & $1.704$ \\
 &  & (2.269)  & (3.149) & (3.209) & (2.087) & (1.771) \\
 & \multirow{2}{*}{$\{1,2\}\{3\}$} & $8.223$  & $\mathbf{1.174}$ & $3.528$ & $4.746$ & $1.646$ \\
 &  & (13.529)  & (1.405) & (3.663) & (8.142) & (2.184) \\
 & \multirow{2}{*}{$\{1\}\{2\}\{3\}$} & $10.684$  & $3.272$ & $3.175$ & $4.273$ & $\mathbf{1.190}$ \\
 &  & (16.949)  & (4.735) & (3.757) & (9.404) & (1.373) \\
\cmidrule(r){1-7}
\multirow{6}{*}{$50$} & \multirow{2}{*}{$\{1,2,3\}$} & $\mathbf{0.281}$  & $0.651$ & $0.661$ & $0.640$ & $0.643$ \\
 &  & (0.256)  & (0.439) & (0.442) & (0.416) & (0.461) \\
 & \multirow{2}{*}{$\{1,2\}\{3\}$} & $1.713$  & $\mathbf{0.319}$ & $0.970$ & $1.165$ & $0.396$ \\
 &  & (1.448)  & (0.229) & (0.742) & (0.947) & (0.278) \\
 & \multirow{2}{*}{$\{1\}\{2\}\{3\}$} & $3.628$  & $0.998$ & $1.056$ & $1.041$ & $\mathbf{0.365}$ \\
 &  & (15.427)  & (1.742) & (2.198) & (1.949) & (0.607) \\
\cmidrule(r){1-7}
\multirow{6}{*}{$75$} & \multirow{2}{*}{$\{1,2,3\}$} & $\mathbf{0.159}$  & $0.387$ & $0.387$ & $0.426$ & $0.356$ \\
 &  & (0.110)  & (0.233) & (0.257) & (0.431) & (0.214) \\
 & \multirow{2}{*}{$\{1,2\}\{3\}$} & $1.096$  & $\mathbf{0.210}$ & $0.623$ & $0.574$ & $0.238$ \\
 &  & (0.684)  & (0.164) & (0.370) & (0.384) & (0.150) \\
 & \multirow{2}{*}{$\{1\}\{2\}\{3\}$} & $1.303$  & $0.504$ & $0.467$ & $0.464$ & $\mathbf{0.169}$ \\
 &  & (1.023)  & (0.336) & (0.288) & (0.322) & (0.117) \\
\cmidrule(r){1-7}
\multirow{6}{*}{$100$} & \multirow{2}{*}{$\{1,2,3\}$} & $\mathbf{0.103}$  & $0.261$ & $0.255$ & $0.277$ & $0.258$ \\
 &  & (0.060)  & (0.134) & (0.135) & (0.165) & (0.124) \\
 & \multirow{2}{*}{$\{1,2\}\{3\}$} & $0.743$  & $\mathbf{0.134}$ & $0.419$ & $0.414$ & $0.157$ \\
 &  & (0.393)  & (0.079) & (0.244) & (0.246) & (0.094) \\
 & \multirow{2}{*}{$\{1\}\{2\}\{3\}$} & $0.937$  & $0.317$ & $0.335$ & $0.329$ & $\mathbf{0.108}$ \\
 &  & (0.464)  & (0.170) & (0.243) & (0.253) & (0.068) \\
\bottomrule
\end{tabular}
%\\ \scriptsize\textbf{Source:} {Secretaria de Avalia\c{c}\~{a}o e Gest\~{a}o da Informa\c{c}\~{a}o (SAGI/MDS); (IPEA, 2013, p. 198)}\\
\end{table}

\begin{table}[htbp]
	\scriptsize
	\centering
	\caption{The Model Selection Probability with CV. } \label{tab:CD_ChoiceProb_CV}
\begin{tabular}{ccccccc|c}
 \hline\hline
& \multicolumn{6}{c}{$\alpha=-.1$} \\
 \cmidrule(r){4-5}

&  & \multicolumn{5}{c}{\textbf{Candidate Model}} \\
 \cmidrule(r){3-7}
Number of Market & \textbf{True Model} & $\{1,2,3\}$  & $\{1,2\}\{3\}$ &  $\{1,3\}\{2\}$ & $\{1\}\{2,3\}$ & $\{1\}\{2\}\{3\}$ & \bf{true}  \\
\hline
\multirow{3}{*}{$25$} & $\{1,2,3\}$ & $\mathbf{0.99}$  & $0.00$ & $0.00$ & $0.01$ & $0.00$ & $0.99$\\
 & $\{1,2\}\{3\}$ & $0.00$  & $\mathbf{0.95}$ & $0.00$ & $0.00$ & $0.05$  & $0.95$\\
 &  $\{1\}\{2\}\{3\}$ & $0.00$  & $0.01$ & $0.00$ & $0.00$ & $\mathbf{0.99}$ & $0.99$\\
\cmidrule(r){1-7}
\multirow{3}{*}{$50$} & $\{1,2,3\}$ & $\mathbf{1.00}$  & $0.00$ & $0.00$ & $0.00$ & $0.00$ & $1.00$\\
 & $\{1,2\}\{3\}$ & $0.00$  & $\mathbf{0.99}$ & $0.00$ & $0.00$ & $0.01$ & $0.99$\\
 & $\{1\}\{2\}\{3\}$ & $0.00$  & $0.00$ & $0.00$ & $0.00$ & $\mathbf{1.00}$ & $1.00$\\
\cmidrule(r){1-7}
\multirow{3}{*}{$75$} & $\{1,2,3\}$ & $\mathbf{1.00}$  & $0.00$ & $0.00$ & $0.00$ & $0.00$ & $1.00$\\
 & $\{1,2\}\{3\}$ & $0.00$  & $\mathbf{1.00}$ & $0.00$ & $0.00$ & $0.00$ & $1.00$\\
 &  $\{1\}\{2\}\{3\}$ & $0.00$  & $0.00$ & $0.00$ & $0.00$ & $\mathbf{1.00}$ & $1.00$\\
\cmidrule(r){1-7}
\multirow{3}{*}{$100$} & $\{1,2,3\}$ & $\mathbf{1.00}$  & $0.00$ & $0.00$ & $0.00$ & $0.00$ & $1.00$\\
 & $\{1,2\}\{3\}$ & $0.00$  & $\mathbf{1.00}$ & $0.00$ & $0.00$ & $0.00$ & $1.00$\\
 &  $\{1\}\{2\}\{3\}$ & $0.00$  & $0.00$ & $0.00$ & $0.00$ & $\mathbf{1.00}$ & $1.00$\\

\hline

& \multicolumn{6}{c}{$\alpha=-.3$} \\
 \cmidrule(r){4-5}
 \hline
\multirow{3}{*}{$25$} & $\{1,2,3\}$ & $\mathbf{0.62}$  & $0.04$ & $0.04$ & $0.11$ & $0.19$ & $0.62$\\
 & $\{1,2\}\{3\}$ & $0.00$  & $\mathbf{0.62}$ & $0.01$ & $0.03$ & $0.34$  & $0.62$\\
 &  $\{1\}\{2\}\{3\}$ & $0.00$  & $0.07$ & $0.05$ & $0.09$ & $\mathbf{0.79}$ & $0.79$\\
\cmidrule(r){1-7}
\multirow{3}{*}{$50$} & $\{1,2,3\}$ & $\mathbf{0.77}$  & $0.05$ & $0.04$ & $0.10$ & $0.04$ & $0.77$\\
 & $\{1,2\}\{3\}$ & $0.00$  & $\mathbf{0.64 }$ & $0.01$ & $0.00$ & $0.35$ & $0.64$\\
 & $\{1\}\{2\}\{3\}$ & $0.00$  & $0.01$ & $0.04$ & $0.04$ & $\mathbf{0.91}$ & $0.91$\\
\cmidrule(r){1-7}
\multirow{3}{*}{$75$} & $\{1,2,3\}$ & $\mathbf{0.78}$  & $0.05$ & $0.07$ & $0.06$ & $0.04$ & $0.78$\\
 & $\{1,2\}\{3\}$ & $0.00$  & $\mathbf{0.57}$ & $0.02$ & $0.00$ & $0.41$ & $0.57$\\
 &  $\{1\}\{2\}\{3\}$ & $0.00$  & $0.05$ & $0.01$ & $0.03$ & $\mathbf{0.91}$ & $0.91$\\
\cmidrule(r){1-7}
\multirow{3}{*}{$100$} & $\{1,2,3\}$ & $\mathbf{0.82}$  & $0.04$ & $0.09$ & $0.02$ & $0.03$ & $0.82$\\
 & $\{1,2\}\{3\}$ & $0.00$  & $\mathbf{0.64}$ & $0.00$ & $0.00$ & $0.36$ & $0.64$\\
 &  $\{1\}\{2\}\{3\}$ & $0.00$  & $0.04$ & $0.02$ & $0.01$ & $\mathbf{0.93}$ & $0.93$\\
\bottomrule
\end{tabular}
%\\ \scriptsize\textbf{Source:} {Secretaria de Avalia\c{c}\~{a}o e Gest\~{a}o da Informa\c{c}\~{a}o (SAGI/MDS); (IPEA, 2013, p. 198)}\\
\end{table}

\begin{table}[htbp]
	\scriptsize
	\centering
	\caption{The Model Selection Probability with GMM.  } \label{tab:CD_ChoiceProb_GMM}
\begin{tabular}{ccccccc|c}
 \hline\hline
& \multicolumn{6}{c}{$\alpha=-.1$} \\
 \cmidrule(r){4-5}

&  & \multicolumn{5}{c}{\textbf{Candidate Model}} \\
 \cmidrule(r){3-7}
Number of Market & \textbf{True Model} & $\{1,2,3\}$  & $\{1,2\}\{3\}$ &  $\{1,3\}\{2\}$ & $\{1\}\{2,3\}$ & $\{1\}\{2\}\{3\}$ & \bf{true}  \\

\multirow{3}{*}{$25$} & $\{1,2,3\}$ & $\mathbf{0.99}$  & $0.01$ & $0.00$ & $0.00$ & $0.00$ & $0.99$\\
 & $\{1,2\}\{3\}$ & $0.00$  & $\mathbf{0.95}$ & $0.00$ & $0.00$ & $0.05$  & $0.95$\\
 &  $\{1\}\{2\}\{3\}$ & $0.00$  & $0.02$ & $0.01$ & $0.02$ & $\mathbf{0.95}$ & $0.95$\\
\cmidrule(r){1-7}
\multirow{3}{*}{$50$} & $\{1,2,3\}$ & $\mathbf{1.00}$  & $0.00$ & $0.00$ & $0.00$ & $0.00$ & $1.00$\\
 & $\{1,2\}\{3\}$ & $0.00$  & $\mathbf{0.92}$ & $0.02$ & $0.00$ & $0.06$ & $0.92$\\
 & $\{1\}\{2\}\{3\}$ & $0.00$  & $0.00$ & $0.00$ & $0.03$ & $\mathbf{0.97}$ & $0.97$\\
\cmidrule(r){1-7}
\multirow{3}{*}{$75$} & $\{1,2,3\}$ & $\mathbf{1.00}$  & $0.00$ & $0.00$ & $0.00$ & $0.00$ & $1.00$\\
 & $\{1,2\}\{3\}$ & $0.00$  & $\mathbf{0.97}$ & $0.00$ & $0.00$ & $0.03$ & $0.97$\\
 &  $\{1\}\{2\}\{3\}$ & $0.00$  & $0.00$ & $0.00$ & $0.00$ & $\mathbf{1.00}$ & $1.00$\\
\cmidrule(r){1-7}
\multirow{3}{*}{$100$} & $\{1,2,3\}$ & $\mathbf{1.00}$  & $0.00$ & $0.00$ & $0.00$ & $0.00$ & $1.00$\\
 & $\{1,2\}\{3\}$ & $0.00$  & $\mathbf{0.96}$ & $0.00$ & $0.00$ & $0.04$ & $0.96$\\
 &  $\{1\}\{2\}\{3\}$ & $0.00$  & $0.01$ & $0.00$ & $0.00$ & $\mathbf{0.99}$ & $0.99$\\

\hline

& \multicolumn{6}{c}{$\alpha=-.3$} \\
 \cmidrule(r){4-5}
\hline
\multirow{3}{*}{$25$} & $\{1,2,3\}$ & $\mathbf{0.61}$  & $0.10$ & $0.09$ & $0.09$ & $0.11$ & $0.61$\\
 & $\{1,2\}\{3\}$ & $0.00$  & $\mathbf{0.66}$ & $0.00$ & $0.02$ & $0.32$  & $0.66$\\
 &  $\{1\}\{2\}\{3\}$ & $0.00$  & $0.06$ & $0.05$ & $0.05$ & $\mathbf{0.84}$ & $0.84$\\
\cmidrule(r){1-7}
\multirow{3}{*}{$50$} & $\{1,2,3\}$ & $\mathbf{0.69}$  & $0.10$ & $0.13$ & $0.07$ & $0.01$ & $0.69$\\
 & $\{1,2\}\{3\}$ & $0.01$  & $\mathbf{0.54}$ & $0.01$ & $0.03$ & $0.41$ & $0.54$\\
 & $\{1\}\{2\}\{3\}$ & $0.00$  & $0.02$ & $0.04$ & $0.05$ & $\mathbf{0.89}$ & $0.89$\\
\cmidrule(r){1-7}
\multirow{3}{*}{$75$} & $\{1,2,3\}$ & $\mathbf{0.77}$  & $0.09$ & $0.07$ & $0.06$ & $0.01$ & $0.77$\\
 & $\{1,2\}\{3\}$ & $0.00$  & $\mathbf{0.44}$ & $0.00$ & $0.00$ & $0.56$ & $0.44$\\
 &  $\{1\}\{2\}\{3\}$ & $0.00$  & $0.03$ & $0.00$ & $0.06$ & $\mathbf{0.91}$ & $0.91$\\
\cmidrule(r){1-7}
\multirow{3}{*}{$100$} & $\{1,2,3\}$ & $\mathbf{0.77}$  & $0.05$ & $0.11$ & $0.06$ & $0.01$ & $0.77$\\
 & $\{1,2\}\{3\}$ & $0.00$  & $\mathbf{0.39}$ & $0.01$ & $0.01$ & $0.59$ & $0.39$\\
 &  $\{1\}\{2\}\{3\}$ & $0.00$  & $0.01$ & $0.00$ & $0.02$ & $\mathbf{0.97}$ & $0.97$\\
\bottomrule
\end{tabular}
%\\ \scriptsize\textbf{Source:} {Secretaria de Avalia\c{c}\~{a}o e Gest\~{a}o da Informa\c{c}\~{a}o (SAGI/MDS); (IPEA, 2013, p. 198)}\\
\end{table}

\begin{figure}
  \caption{The choice probability of true model on CV and GMM model selection.}\label{fig:CD_graph}
  \centering
    \includegraphics[width=1.\textwidth]{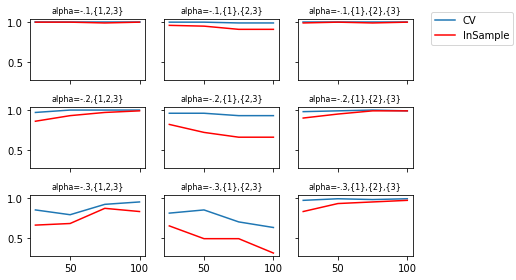}
\end{figure}
%01/21 グラフ変えた
%--------Section 4-------------------------------

%--------Section 5-------------------------------
\begin{algorithm}[H]
\caption{$(k,r)$-Cross Validation on GMM-MPEC}
\label{alg_cv_gmm_mpec}
\begin{algorithmic}[1]
\STATE Input: Models $\{ \mathcal{M}_i\}$, data $\{v_t\}_{t=1,...,T}$.
\FOR {each model $\mathcal{M}_i$}
\FOR {each training data $\{v_t\}_{t\in N_S}$}
\STATE Estimate model parameters as
\begin{align*}
(\theta_S^{(i)},\sigma_S^{(i)},\eta_S^{(i)})&=\argmin_{\theta^{(i)},\sigma^{(i)},\eta^{(i)}} Q_S^{(i)}(\theta^{(i)},\sigma^{(i)},\eta^{(i)})\\
&\text{s.t. } h(\theta^{(i)},\sigma^{(i)},\eta^{(i)})=0.
\end{align*}
\STATE Calculate the score as 
\begin{align*}
Q_{S,\mathrm{valid}}^{(i)}(\theta_S^{(i)})=&\min_{\eta^{(i)}} Q_{\setminus S}(\theta_S^{(i)},\sigma_S^{(i)},\eta^{(i)})\\
&\text{s.t. } h(\theta_S^{(i)},\sigma_S^{(i)},\eta^{(i)})=0.
\end{align*}
\ENDFOR
\STATE Calculate the average score
\begin{equation*}
 Q_{\mathrm{valid}}^{(i)} = \frac{1}{{}_r C _k} \sum_{S \subset \{1,2,\dots,r\}: |S| = r-k} Q_{S, \mathrm{valid}}^{(i)}(\theta_S^{(i)})
\end{equation*}
\ENDFOR
\STATE Find the best model that exhibits the smallest $Q_{\mathrm{valid}}^{(i)}$ .
\end{algorithmic}
\end{algorithm}

\begin{comment}
\begin{algorithm}[H]
\caption{$(k,r)$-Cross Validation on EL-MPEC}
\label{alg_cv_el_mpec}
\begin{algorithmic}[1]
\STATE Input: Models $\{ \mathcal{M}_i\}$, data $\{v_t\}_{t=1,...,T}$.
\FOR {each model $\mathcal{M}_i$}
\FOR {each training data $\{v_t\}_{t\in N_S}$}
\STATE Estimate model parameters as
\begin{align*}
(\theta_S^{(i)},\sigma_S^{(i)},\eta_S^{(i)},\{\rho_{St}^{(i)}\}_{t\in\mathcal{N}_S} )&=\argmax_{\theta^{(i)},\sigma^{(i)},\eta^{(i)},\{\rho_t\}_{t\in \mathcal{N}_S}} \sum_{t\in \mathcal{N}_S}\ln(\rho_t)\\
&\text{s.t. } h(\theta^{(i)},\sigma^{(i)},\eta^{(i)})=0, \sum_{t\in \mathcal{N}_S} f^{(i)}(\theta^{(i)},\sigma^{(i)},\eta^{(i)},v_t)=0.
\end{align*}
\STATE Calculate the score as 
\begin{align*}
Q_{S,\mathrm{valid}}^{(i)}(\theta_S^{(i)})=&\max_{\eta^{(i)},\{\rho_t\}_{t\in\mathcal{N}_{\setminus S} }} \sum_{t\in \mathcal{N}_{\setminus S} }\ln(\rho_t)\\
&\text{s.t. } h(\theta_S^{(i)},\sigma_S^{(i)},\eta^{(i)})=0,\sum_{t\in \mathcal{N}_{\setminus S} } f^{(i)}(\theta^{(i)},\sigma^{(i)},\eta^{(i)},v_t)=0.
\end{align*}
\ENDFOR
\STATE Calculate the average score
\begin{equation*}
 Q_{\mathrm{valid}}^{(i)} = \frac{1}{{}_r C _k} \sum_{S \subset \{1,2,\dots,r\}: |S| = r-k} Q_{S, \mathrm{valid}}^{(i)}(\theta_S^{(i)})
\end{equation*}
\ENDFOR
\STATE Find the best model that exhibits the smallest $Q_{\mathrm{valid}}^{(i)}$ .
\end{algorithmic}
\end{algorithm}

\end{comment}
%--------Section 5-------------------------------

%--------Section 6-------------------------------
\begin{table}[htbp]
	\scriptsize
	\centering
	\caption{Summary of Online-Retail Data} \label{tab:OnlinRetail-Summary}
\begin{tabular}{c|cccc}
 \hline\hline
 \multirow{2}{*}{Category} & \multirow{2}{*}{Example of Products} &\multirow{2}{*}{ \# Products} & Ave. Unit Price & Ave. Monthly Sales\\
  &  &  & (USD) & (Thousand)\\
\hline
Candle & Candles, Candle Holder, Candle Plate&77 & 1.944 & 0.232\\
Children & Baby Bib, Doll, Stationery Set & 175 & 4.122 & 0.148\\
Crafts & Knitting, Patches, Flannel, Sketchbook &38 & 2.694 & 0.214\\
Decoration & Photo frame, Flower, Decorative Signs&153 & 2.454 &  0.1954\\
Gift & Gift boxes, Tape, Message cards &65 & 0.7881 & 0.207\\
Home and Garden & Lamp,Cushion,Bath Salt &199 & 4.342 & 0.196\\
Kitchen & Mug, Tea Set, Lunch box&247 & 3.352 & 0.189\\
Party & Balloons,Napkins, Paper cup &75 & 2.432 & 0.197\\
Personal & Umbrella, Ring, Shopping bag & 109 & 2.864 & 0.159\\
\bottomrule
\end{tabular}
%\\ \scriptsize\textbf{Source:} {Secretaria de Avalia\c{c}\~{a}o e Gest\~{a}o da Informa\c{c}\~{a}o (SAGI/MDS); (IPEA, 2013, p. 198)}\\
\end{table}

\begin{figure}
  \caption{The price and quantity dynamics of online retail data in each category.}\label{fig:OnlineRetail_PQ}
  \centering
    \includegraphics[width=1.\textwidth]{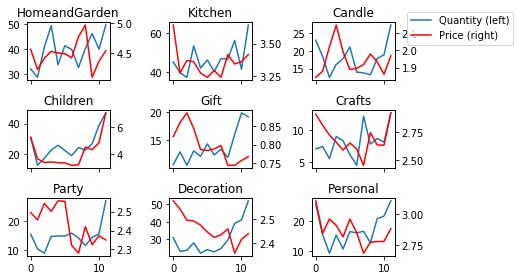}
\end{figure}
%OnlineRetail_PQ_graph.png

\begin{table}[htbp]
\caption{CV score in different categories}\label{tab:DP_result_MS}
\centering
\begin{tabular}{lcccc|cc}
\hline \hline
    &    \multicolumn{2}{c}{Demand}  & \multicolumn{2}{c}{Pricing} & \multicolumn{2}{c}{Selected Model} \\
            \cmidrule(r){2-3}\cmidrule(r){4-5}         \cmidrule(r){6-7}
Category   &  Static  &  Dynamic & Static & Dynamic  & Demand & Pricing \\
\hline
Candle &  .00683  &  .00651  & .015302  & .011402 & Dyn & Dyn \\ 
Children   &  .00913  &  17.9  & .982  & .376887 & Stat & Dyn \\ 
Crafts &  .00847  &  .00655 & .003628  & .004258 & Dyn & Stat \\ 
Decoration &  .00162  &  .00163  & .000454  & .000644 & Stat & Stat \\ 
Gift   &  .00328  &  .00277  & .000288  & .000119 & Dyn & Dyn \\
Home and Garden  &  .00177  &  .00109  & .053322  & 0.022654 & Dyn & Dyn \\ 
Kitchen &  .00152  &  .00158  & .002165  & .000763& Stat & Dyn \\ 
Party  &  .00795  &  .00310  & .016513  & .002486 & Dyn & Dyn \\ 
Personal Item &  .00305  &  .00193  & .003356  & 0.003443 & Dyn & Stat \\
\bottomrule
\end{tabular}
\end{table}

\begin{table}[htbp]
\caption{Estimated price coefficient in different categories}\label{tab:DP_result_PC}
\centering
\begin{tabular}{lcc}
\hline \hline
Category & Static model & Dynamic model\\
\hline
Candle & -6.52515  &  -1.71062 \\ 
Children   &  -0.01343  &  -0.01835 \\ 
Crafts &  -2.12241  &  -0.66246 \\ 
Decoration &  -1.26782  &  -0.83267  \\ 
Gift   &  -3.28775 &  -3.42115 \\
Home and Garden  &  -0.1841 &  -0.4384 \\ 
Kitchen &  -0.56447 &  -0.53658  \\ 
Party  &  -5.37787 &  -1.18728 \\ 
Personal Item &  -0.78579  &  -0.83831 \\
\bottomrule
\end{tabular}
\\ 
%\scriptsize{\textit{Dependent Variable:} Gender Bills (1 = bill proposal is related to women's issues; 0 otherwise).} \\
%\scriptsize{Standard errors in parenthesis. Significance levels: * p<0.1, ** p<0.05, *** p<0.01. Two-tailed test.} \\
%\scriptsize{Source: Sagarzazu and Silva (2015).}
\end{table}

%--------Section 6-------------------------------

\clearpage
\begin{appendices}
\section{Detail of the simulation in section 4}
The dimension of the product characteristics on utility function $X_{jt}$ is set to be 2, where the first characteristic is constant and the second is randomly generated independently across products and periods. The cost side characteristics $Y_{jt}$ includes $X_{jt}$ and one additional characteristic also drawn independently. The characteristics are drawn from a normal distribution of mean $0$ and standard deviation $.1$. The unobserved error terms $\xi_{jt}$ and $\lambda_{jt}$ are also drawn from a normal distribution of mean zero and standard deviation $1$, independently across products and markets. The true values of parameters other than price coefficient $\alpha$ are $\beta=(2.,1.)$ and $\gamma=(3.,0.,1.)$. Those values are chosen to ensure that the marginal cost does not fall below zero, and the resulting share of outside option is not too close to zero for the invertibility of $\Delta$. Given the generated characteristics and the errors, the prices are simulated by solving the profit maximization problem by sequential least square quadratic programming. The results are robust to variety of parameter setting and distributional assumption.

\section{Detail of estimation procedure in section 6}
\subsection{Hyper parameter setting}
The discounting factor $\beta$ is set to be $.9$ for dynamic models both on demand and supply side. The draw of consumer types is generated from Halton sequence. The number of consumer segments is set to be 7. The initial market size $M_{ij1}$ is defined by the sum of the sales over the considered period in the subcategory that $j$ belongs to, divided by the number of consumer segments.

\subsection{Converting supply side constraints to FOC}
The equilibrium constraints on supply side includes the retailer's profit maximization.  In the estimation, we substitute it by first order condition. Let us define the derivative of the demand function with respect to price, $\pdif{D_{ijt}^m}{p_{jt}}$, to be another set of endogenous variable of MPEC that represents the derivative of the demand function from a consumer $i$ of a product $j$ at period $t$ evaluated at the realized price. Also define $\pdif{D_{jt}^m}{p_{jt}}$ be a derivative of the overall demand function, again at the observed price. In static pricing model, the MPEC constraints are converted to:
\begin{align*}
\begin{split}
&\pdif{D_{ijt}^m}{p_{jt}}=M_{ijt}^m s_{ijt}^m(1-s_{ijt}^m)\\
&\pdif{D_{jt}^m}{p_{jt}}=\sum_i \pdif{D_{ijt}(p_{jt})}{p_{jt}}\\
&D_{jt}(p_{jt})+\pdif{D_{jt}^m}{p_{jt}}
(p_{jt}-MC_{jt})=0\\
&MC_{jt}=X^s_{jt}\gamma+\lambda_{jt}\\
&\forall(j,t).
\end{split}
\end{align*}

In dynamic pricing model, FOC include a derivative of the value function. In addition to the ones above, we define two sets of additional endogenous variables: the realized value function of product $j$ at period $t$, $v_{jt}$, and the derivative of value function at next period with respect to current price evaluated at the observed price, $\pdif{V_{jt+1}}{p_{jt}}$. Then FOC and the Bellman equations translate to MPEC constraints:
\begin{align*}
\begin{split}
&D_{jt}(p_{jt})+\pdif{D_{jt}^m}{p_{jt}}
(p_{jt}-MC_{jt})+\beta \pdif{V_{jt}}{p_{jt}}=0\\
&v_{jt} = D_{jt}(p_{jt}-MC_{jt}) +\beta v_{jt+1}\\
&\forall(j,t).
\end{split}
\end{align*}
As we do not parametrically estimate the value function, the difficulty arises to calculate the derivative of the value function. The state variable at $t+1$ that are influenced by $p_{jt}$ are the market size of consumer segments $M_{ijt+1}$. Thus, define the derivative of the value with respect to market size, $\pdif{V_{jt+1}}{M_{ijt+1}}$, as another set of endogenous variable. Then,
\begin{align*}
\pdif{V_{jt+1}}{p_{jt}} &= \sum_i \pdif{V_{jt}}{M_{ijt+1}} \pdif{M_{ijt+1}}{p_jt}{p_{jt}}\\
&=\sum_i \pdif{V_{jt}}{M_{ijt+1}}\left( -\pdif{D_{ijt}(p_{jt})}{p_{jt}} \right).
\end{align*}
We still have to approximate $\pdif{v_{jt}}{M_{ijt+1}}$. One methodology is to utilize the estimated values of $v_{jt}$. The realized value $v_{jt}$ should be equal to the value function evaluated at the realized state $\Omega_{jt}$. Therefore, by comparing $v_{jt}$ and $M_{ijt}$, we are able to infer how value function changes with respect to $M_{ijt}$. In the estimation, we do so by linear approximation such as
\begin{align*}
\pdif{V_{jt}}{M_{ijt+1}} = \frac{1}{2}\left( \frac{v_{jt+1}-v_{jt}}{M_{ijt+1}-M_{ijt}}+\frac{v_{jt}-v_{jt-1}}{M_{ijt}-M_{ijt-1}}   \right).
\end{align*}

\end{appendices}

\begin{comment}
\section{Conclusion and Future Work}
This paper propose the cross validation based model selection algorithm and present some Monte Carlo results. The future work includes application of the algorithm to the real data. Also, the algorithm is general one that covers any econometric model, and the performance when it is applied to more complicated structural model has to be tested. Lastly, we have not shown the procedure for hypothesis testing, which may be important in case the researchers are interested in the significance of the difference of models.

\end{comment}

%\end{CJK*}
\end{document}